\let\chapter\section
\def\maximize{\mathop{\rm maximize}}
\begin{document}

\title{Multistage Campaigning in Social Networks}

\author[1]{Mehrdad Farajtabar}
\author[2]{Xiaojing Ye}
\author[3]{Sahar Harati}
\author[1]{Le Song}
\author[1]{Hongyuan Zha}

\affil[1]{Georgia Institute of Technology, mehrdad@gatech.edu, \{lsong,zha\}@cc.gatech.edu}
\affil[2]{Georgia State University,  xye@gsu.edu}
\affil[3]{Emory University, sahar.harati@emory.edu}

\date{}

\maketitle

\begin{abstract}
We consider the problem of how to optimize multi-stage campaigning over social networks.
The dynamic programming framework is employed to balance the high present reward
and large penalty on low future outcome in the presence of extensive uncertainties.
In particular, we establish theoretical foundations of optimal campaigning over 
social networks where the user activities are modeled as a multivariate Hawkes 
process, and we derive a time dependent linear relation between the 
intensity of exogenous events and several commonly used objective functions of campaigning. 
We further develop a convex dynamic programming framework for 
determining the optimal intervention policy that prescribes the required level of external 
drive at each stage for the desired campaigning result.
Experiments on both synthetic data and the real-world 
MemeTracker dataset show that our algorithm can steer the user
activities for optimal campaigning much more accurately than baselines.
\end{abstract}

\section{Introduction}

Obama was the first US president in history who successfully leveraged
online social media in presidential campaigning, which has been popularized
and become a ubiquitous approach to electoral politics (such as 
in the on-going 2016 US presidential election) in contrast to
the decreasing relevance of traditional media such as TV and newspapers \cite{west2013air,vergeer2013online}.
The power of campaigning via social media in modern politics is a consequence of 
online social networking being an important part of people's regular daily social lives.
It has been quite common that individuals use social network sites to 
share their ideas and comment on other people's opinions.
In recent years, large organizations, such as governments, public media, 
and business corporations, also start to announce news, spread ideas, and/or post advertisements
in order to steer the public opinion through social media platform. 
There has been extensive interest for these entities to influence the public's 
view and manipulate the trend by incentivizing influential users to
endorse their ideas/merits/opinions at certain monetary expenses or credits.
To obtain most cost-effective trend manipulations, one needs to design an optimal campaigning 
strategy or policy such that quantities of interests, such as influence of opinions, 
exposure of a campaign, adoption of new products, can be maximized or steered 
towards the target amount given realistic budget constraints.

The key factor differentiating social networks from traditional media is \emph{peer influence}.
In fact, events in an online social network can be categorized roughly into
two types: endogenous events where users just respond to the actions of their neighbors 
within the network, and exogenous events where users take actions due to drives
external to the network. Then it is natural to raise the following fundamental questions regarding
optimal campaigning over social networks:
can we model and exploit those event data to steer the online community to a desired exposure level?  
More specifically, can we drive the overall exposure to a campaign to a certain level 
(e.g., at least twice per week per user) by incentivizing a small number of users to take more initiatives? 
What about maximizing the overall exposure for a target group of people? 

More importantly, those exposure shaping tasks are more effective when the interventions are 
implemented in multiple stages. Due to the inherent uncertainty in 
social behavior, the outcome of each intervention may not be fully predictable but can 
be anticipated to some extent before the next intervention happens. A key aspect of 
such situations is that interventions can't be viewed in isolation since one must balance 
the desire for high present reward with the penalty of low future outcome. 

In this paper, the \emph{dynamic programming} framework~\cite{bertsekas1995dynamic} 
is employed to tackle the aforementioned issues.
In particular, we first establish the fundamental theory of optimal campaigning over 
social networks where the user activities are modeled as a multivariate Hawkes 
process (MHP)~\cite{DalVer2007, ZhoZhaSon13} since MHP can capture both endogenous and 
exogenous event intensities. We also derive a time dependent linear relation between the 
intensity of exogenous events and the overall exposure to the campaign. 
Exploiting this connection, we develop a convex dynamic programming framework for 
determining the optimal intervention policy that prescribes the required level of external 
drive at each stage in order for the campaign to reach a desired exposure profile.  
We propose several objective functions that are commonly considered as campaigning 
criteria in social networks. Experiments on both synthetic data and real world network 
of news websites in the MemeTracker dataset show that our algorithms can shape the 
exposure of campaigns much more accurately than baselines.


\begin{figure*}[!t]
  \centering
  \setlength{\tabcolsep}{6pt}
  \begin{tabular}{c}
          \includegraphics[width=0.75\textwidth]{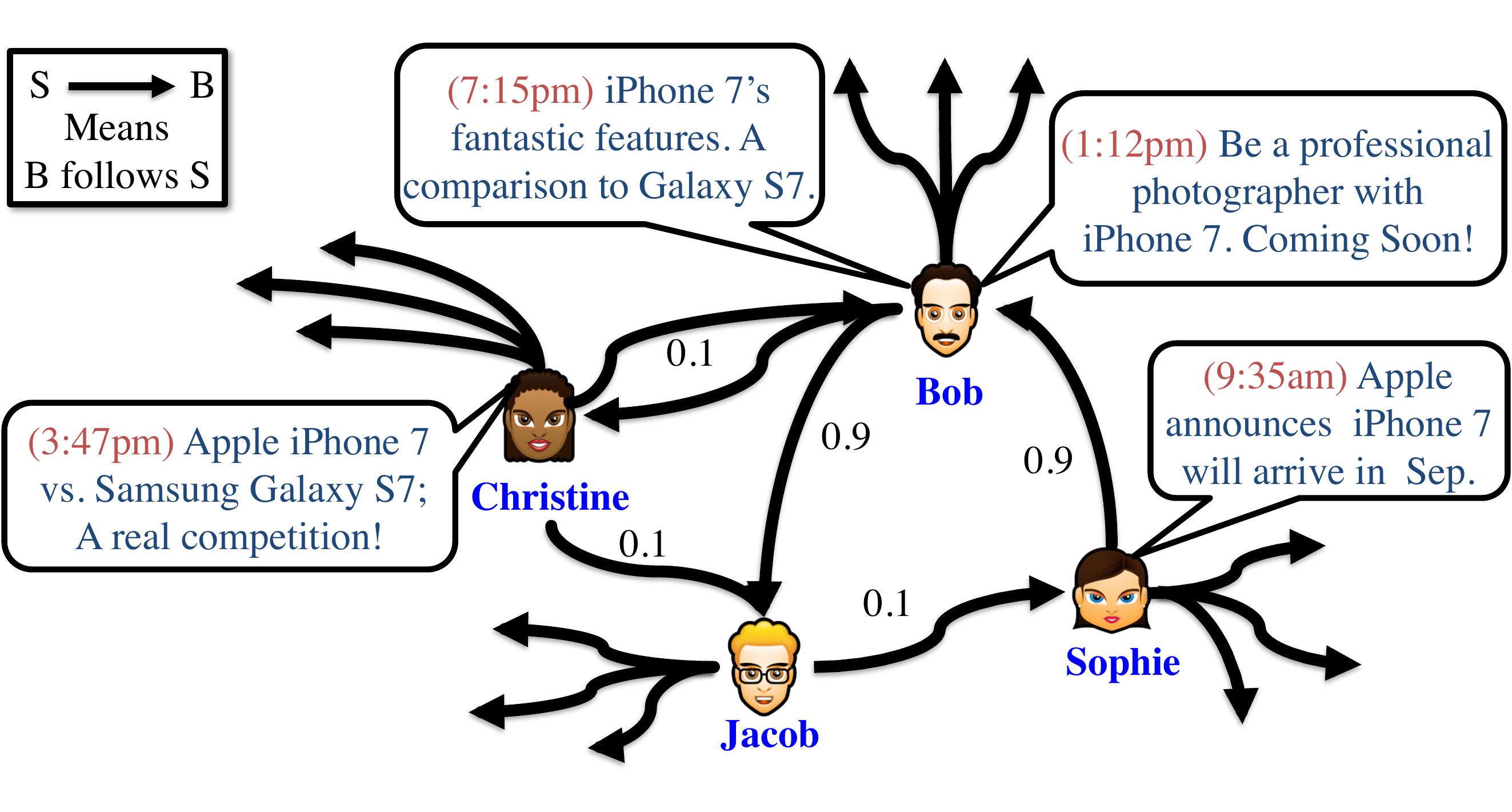} \\ 
                    a) The social network  and events\\
           \begin{tabular}{cc}
          \includegraphics[width=0.53\textwidth]{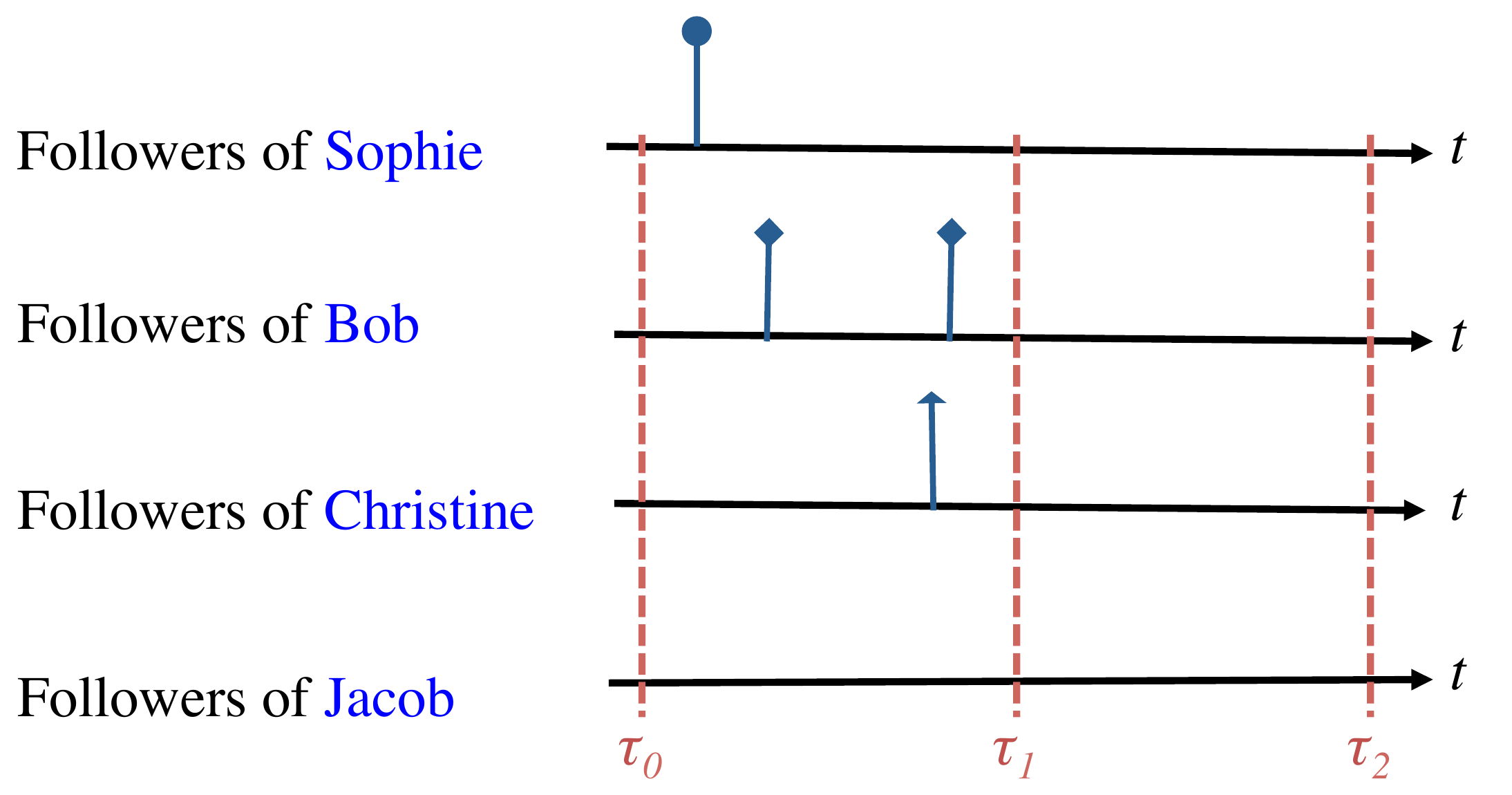} &
          \includegraphics[width=0.35\textwidth]{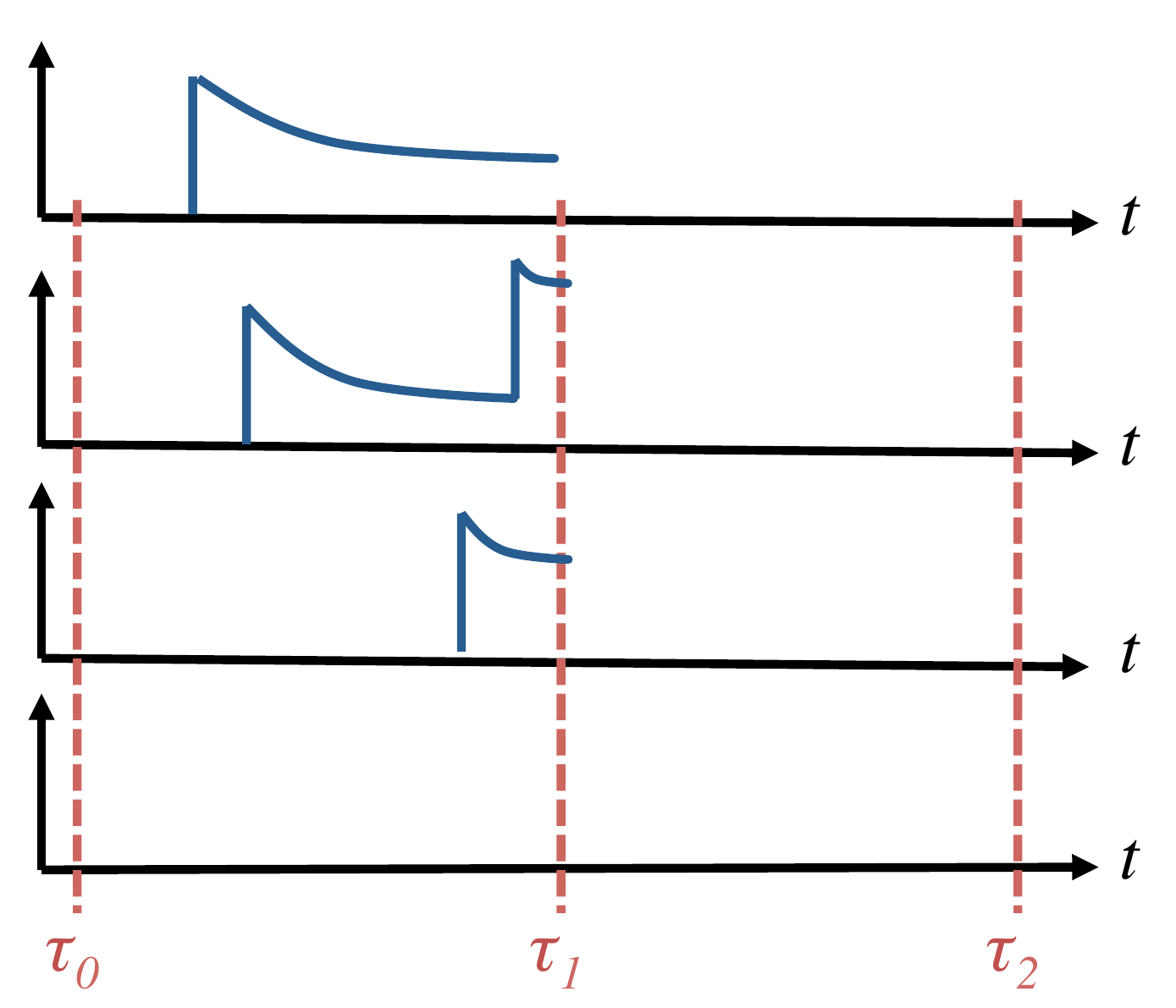}  \\
	   b) exposure events to followers & c) intensity of exposure of followers
           \end{tabular}
  \end{tabular}
  \caption{A social network and multi-stage campaigning to maximize the minimum exposure.}
  \label{fig:illustration}
\end{figure*}

\section{Illustrative Example}
\label{appen-illustration}

Next sections will define the campaigning problem formally. However, we found it beneficial to formulate it more intuitively. Next section will define all the concepts appeared here rigorously.

Fig.~\ref{fig:illustration}-a shows a hypothetical social network highlighting 4 users and their influence on each other. They have their own set of followers which for simplicity are considered disjoint and being influenced equally. Our objective in this toy example is to maximize the minimum exposure on the followers. For the ease of exposition, we only consider 2 stages which in the beginning we can intervene.  Hypothetically, we are going to advertise iPhone 7 on behalf of Apple Inc.

Now, we intuitively proceed to find the optimum intervention. At the first stage, where the state is zero (no exposure from the past), intervention through Jacob and Christine would not be part of the solution (or their share would be negligible), because the outcome would be small due to the low number of total followers and the small amount of influence. Between Bob and Sophie, we would select her. Note that Sophie has a high influence on Bob and if she is incentivized to make a blog post for the campaign, she can make Bob interested too. Therefore, the procedure will assign most budget to Sophie. Furthermore, Bob has a high influence on Jacob and will probably inspire him to do an activity. Let's follow a hypothetical scenario. Sophie makes her post at 9:35am about apple's new phone. Bob is notified about her post and will make an inspiring post about the phone's camera. However, Jacob will not respond to this campaign well, e.g., he may be traveling that day. This phenomenon is perfectly captured via the 
probabilistic framework for event analysis~\cite{AalBorGje08, DalVer2007}.
On the other hand, Christine who is not a big fan of Bob's post compared to Jacob will respond to Bob's post by writing a comparison between Apple and Samsung's product at 3:47pm. This will reinforce Bob to make a separate post explaining the features of Apple's new product. These events are illustrated in Fig.~\ref{fig:illustration}-b. 

The cascade of events at the first stage will expose followers of Sophia, Bob, and Christine to the new product. However, all things do not go according to our expectation. Jacob happens to be out of his office that day and his followers are not aware of the news. Considering the MEM as the objective function, this is quite disappointing that there are people who are exposed to the campaign 0 times. Exposures of followers are demonstrated in Fig.~\ref{fig:illustration}-b.
This is the point where multi-stage dynamic campaigning helps. Considering the exposure intensity of followers in Fig.~\ref{fig:illustration}-c, Christine's, Bob's, and even Sophia's followers are still under the influence. Furthermore, they may follow their initial posts by new ones. Then, in the next stage, it looks reasonable to invest on Jacob. At least his followers will notify about the campaign and also as Jacob has an influence on Sophie, his posts may inspire Sophie again and this may increase the exposure intensity of followers of Sophie as well. Therefore, thanks to a second chance in intervention, Jacob will get the most budget at the second stage to advertise the product.

\section{Basics and Background}
An $n$-dimensional temporal point process is a random process whose realization consists of a list of discrete events in time and their associated dimension, $\cbr{(t_k, d_k)}$ with $t_k \in \RR^+$ and $d_k \in \cbr{1, \ldots, n}$. 
Many different types of data produced in online social networks can be represented as temporal point processes, such as likes and tweets.
A temporal point process can be equivalently represented as a counting process, $\Ncal(t) = (\Ncal^1(t), \ldots, \Ncal^n(t))^{\top}$ associated to $n$ users in the social network. 
Here,  $\Ncal^i(t)$ records the number of events user $i$ performs before time $t$ for $1 \le i \le n$.
Let the history $\Hcal^i(t)$ be the list of times of events $\cbr{t_1, t_2, \ldots, t_k}$ of the $i$-th user up to time $t$. Then, the number of observed events in a small time window $[t, t+dt)$ of length $dt$ is 
\begin{align}
d \Ncal^i(t) = \sum_{t_k \in \Hcal^i(t)} \delta(t-t_k) \, dt,
\end{align} 
and hence $\Ncal^i(t) = \int_0^t d\Ncal^i(s)$, where $\delta(t)$ is a Dirac delta function. %
The point process representation of temporal data is fundamentally different from the discrete time representation typically used in social network analysis. 
It directly models the time interval between events as random variables, avoids the need to pick a time window to aggregate events, and allows temporal events to be modeled in a fine grained fashion. 
Moreover, it has a remarkably rich theoretical support~\cite{AalBorGje08}.  

An important way to characterize temporal point processes is via the conditional intensity function --- a stochastic model for the time of the next event given all the times of previous events. Formally, the conditional intensity function $\lambda^i(t)$ (intensity, for short) of user $i$ is the conditional probability of observing an event in a small window $[t, t+dt)$ given the history $\Hcal(t)= \cbr{\Hcal^1(t), \ldots, \Hcal^n(t)}$:
\begin{align}
  \label{eq:intensity}
  \lambda^i(t)dt := \PP\cbr{\text{user $i$ performs event in $[t, t+dt)$} \, | \, \Hcal(t)} = \EE[d\Ncal^i(t) \, | \, \Hcal(t)], 
\end{align}
where one typically assumes that only one event can happen in a small window of size $dt$. 
Fig.~\ref{fig:illustration}-b and Fig.~\ref{fig:illustration}-c show the events and intensity function of the activities of the 4 users in the social network respectively. Furthermore, these two can be seen as the exposure events and the exposure intensity function to their followers since for simplicity we are just considering the activities of these 4 users in the network.
The functional form of the intensity $\lambda^i(t)$ is often designed to capture the phenomena of interests.

The Hawkes process~\cite{Hawkes71} is a class of self and mutually exciting point process models,
\begin{align}
\label{eq:intensity-hawkes}
 \lambda^i(t) = \mu^i(t) + \sum_{k: t_k < t} \phi^{i d_k} (t, t_k) = 
 \mu^i(t) + \sum_{j=1}^n \int_{0}^t\phi^{ij}(t,s) d\Ncal^j(s),
 \end{align}
where the intensity is history dependent. $\phi^{ij}(t,s)$ is the impact function capturing the temporal influence of an event by user $j$ at time $s$ to the future events of user $j$ at time $t\geqslant s$.  Here, the first term $\mu^i(t)$ is the exogenous event intensity modeling drive outside the network and indecent of the history, and  the second term $\sum_{k: t_k < t} \phi^{i d_k} (t, t_k)$ is the endogenous event intensity modeling interactions within the network~\cite{farajtabar2014activity}.
Defining $\Phi(t,s)=[\phi^{ij}(t,s)]_{i,j = 1\ldots n}$, and $\lambda(t) = (\lambda^1(t) , \ldots, \lambda^n(t))^{\top}$,  and $\mu(t) = (\mu^1(t) , \ldots, \mu^n(t))^{\top}$ we can compactly rewrite Eq~\ref{eq:intensity-hawkes} in matrix form:
\begin{align}
\label{eq:intensity-hawkes-compact}
 \lambda(t) = \mu(t) + \int_{0}^t \Phi(t,s) d\Ncal(s).
 \end{align}
In practice it is standard to employ shift-invariant impact function, 
\ie, $\Phi(t,s) = \Phi(t-s)$. Then, by using notation of convolution 
$f(t) * g(t) = \int_0^t f(t-s) g(s) ds$ we have
\begin{align}
\label{eq:intensity-hawkes-conv}
 \lambda(t) = \mu(t) + \Phi(t) * d\Ncal(t).
\end{align}
 
\section{From Intensity to Average Activity}
\label{sec:mean}
In this section we  will  develop  a  closed  form  relation  between  the  expected total intensity $\EE[ \lambda(t)]$ and the
intensity $\mu(t)$ of exogenous events.   This relation establish the basis of our campaigning framework. 
First, define the {\it mean function} as
$
 \Mcal(t) :=\EE[ \Ncal(t)]= \EE_{\Hcal(t)}[ \EE(\Ncal(t)|\Hcal(t))].
$ 
Note that $\Mcal(t)$ is history independent,
and it gives the average number of events up to time $t$ for each of the dimension. Similarly, the {\it rate} function $\eta(t)$ is given by
$\eta(t) dt := d \Mcal(t)$. On the other hand, 
\begin{align} 
d \Mcal(t) = 
d \EE[ \Ncal(t) ] 
= \EE_{\Hcal(t)}[ \EE(d\Ncal(t) | \Hcal(t))] = \EE_{\Hcal(t)}[ \lambda(t)|\Hcal(t)]dt
=\EE[\lambda(t)]dt.
\end{align}
Therefore $\eta(t) = \EE[\lambda(t)]$ which serves as a measure of activity in the network. In what follows we will find an analytical form for the average activity. Proofs are presented in Appendix~\ref{appen-proofs}.

\begin{lemma}
Suppose $\Psi:[0,T] \to \mathbb{R}^{n\times n}$ is a non-increasing matrix function, then
for every fixed constant intensity $\mu(t) = c\in\mathbb{R}_{+}^n$, $\eta_c(t):=\Psi(t)c$
solves the semi-infinite integral equation
\begin{equation}\label{eqn:WienerHopf_constant}
\eta(t) = c + \int_{0}^{t} \Phi(t-s)\eta(s)ds,\quad \forall t\in[0,T],
\end{equation}
if and only if $\Psi(t)$ satisfies 
\begin{equation}\label{eqn:Psi_gen}
\Psi(t) = I + \int_{0}^{t} \Phi(t-s)\Psi(s)ds,\quad \forall t\in[0,T].
\end{equation}
In particular, if $\Phi(t)=Ae^{-\omega t}\one_{\geq0}(t)=[a_{ij}e^{-\omega t}\one_{\geq0}(t)]_{ij}$
where $0\leq \omega\notin \text{Spectrum}(A)$, then 
 \begin{equation}\label{eqn:defPsi}
     \Psi(t) =  e^{(A-\omega I) t} + \omega(A-\omega I )^{-1} ( e^{(A-\omega I) t} - I)
\end{equation}
for $t\in[0,T]$, where, $\one_{\geq0}(t)$ is an indicator function for $t \geq 0$.
\end{lemma}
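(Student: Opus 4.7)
The plan is to handle the two assertions separately. The iff statement is essentially a manifestation of the linearity of the integral equation in the pair $(\eta,c)$, while the explicit formula reduces to solving a first-order linear matrix ODE.

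For the iff direction, I would first prove the easy $(\Leftarrow)$: suppose $\Psi$ satisfies the matrix equation. Right-multiplying both sides by the constant vector $c$ and recognizing $\Psi(t)c = \eta_c(t)$ yields immediately
\begin{equation*}
\eta_c(t) = c + \int_0^t \Phi(t-s)\,\eta_c(s)\,ds.
\end{equation*}
For the $(\Rightarrow)$ direction, I exploit the fact that the hypothesis says $\Psi(t)c$ solves the equation for \emph{every} $c\in\mathbb{R}_+^n$. Applying it to $c = e_i$ for each standard basis vector (which lies in the positive cone) and stacking the resulting column identities $\Psi(t)e_i = e_i + \int_0^t \Phi(t-s)\Psi(s)e_i\,ds$ recovers the matrix equation in full.

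For the explicit formula under $\Phi(t) = Ae^{-\omega t}\mathbf{1}_{\geq 0}(t)$, I would convert the integral equation into an ODE. Writing $G(t) := \int_0^t A e^{-\omega(t-s)}\Psi(s)\,ds = \Psi(t)-I$ and differentiating via the Leibniz rule yields $G'(t) = A\Psi(t) - \omega G(t)$, i.e. the linear matrix IVP
\begin{equation*}
\Psi'(t) = (A-\omega I)\Psi(t) + \omega I,\qquad \Psi(0)=I.
\end{equation*}
Since $\omega\notin\mathrm{Spectrum}(A)$, the matrix $A-\omega I$ is invertible and commutes with $e^{(A-\omega I)t}$. Variation of parameters then delivers
\begin{equation*}
\Psi(t) = e^{(A-\omega I)t} + \omega\int_0^t e^{(A-\omega I)(t-s)}\,ds = e^{(A-\omega I)t} + \omega(A-\omega I)^{-1}\bigl(e^{(A-\omega I)t}-I\bigr),
\end{equation*}
which matches the claimed expression.

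The main subtlety I anticipate is not the algebra but justifying that the ODE-derived $\Psi$ coincides with \emph{the} unique solution of the matrix integral equation, so that the closed form truly characterizes the required $\Psi$ rather than merely producing one candidate. Uniqueness should follow from a standard Volterra/Picard fixed-point argument on $[0,T]$: since $\Phi$ is locally integrable, the map $\Psi\mapsto I + \Phi*\Psi$ is a contraction in a suitably weighted sup-norm, or equivalently Gronwall applied to the difference of two solutions forces it to vanish. The spectral hypothesis $\omega\notin\mathrm{Spectrum}(A)$ enters precisely to evaluate $\int_0^t e^{(A-\omega I)(t-s)}\,ds$ in closed form; without it one would need a Jordan-block expansion.
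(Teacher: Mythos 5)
Your proof is correct and follows essentially the same route as the paper: the equivalence is obtained exactly as in the paper by noting that $\bigl[\Psi(t)-I-\int_0^t\Phi(t-s)\Psi(s)\,ds\bigr]c=0$ for all $c$ in the positive cone (your basis-vector argument is just a concrete way of saying this) forces the bracket to vanish, and the converse is immediate by right-multiplication. The only difference is that for the exponential kernel the paper merely states that \eqref{eqn:defPsi} ``can readily be checked'' to satisfy \eqref{eqn:Psi_gen}, whereas you actually derive it by reducing the Volterra equation to the ODE $\Psi'=(A-\omega I)\Psi+\omega I$, $\Psi(0)=I$, and your accompanying uniqueness remark is a legitimate (and slightly more careful) way to conclude that this is \emph{the} solution rather than a candidate.
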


Let $\mu:[0,T] \to\mathbb{R}_{+}^n$ be a right-continuous piecewise constant function
\begin{align}
\label{eq:piecewise-constant-intensity} 
\mu(t) &= \sum_{m=1}^M c_m \one_{[\tau_{m-1},\tau_m)}(t),
\end{align}
where $0=\tau_0<\tau_1<\cdots<\tau_M=T$ is a finite partition of time interval $[0,T]$ and function $\one_{[\tau_{m-1},\tau_m)}(t)$ indicates $\tau_{m-1} \leq t < \tau_m$.
The next theorem shows that if $\Psi(t)$ satisfies \eqref{eqn:Psi_gen},
then one can calculate $\eta(t)$ for piecewise constant intensity $\mu:[0,T]$ of form
\eqref{eq:piecewise-constant-intensity}.

\begin{theorem}
\label{theo:piecewise_constant_average}
Let $\Psi(t)$ satisfy \eqref{eqn:Psi_gen} and $\mu(t)$ be a right-continuous piecewise 
constant intensity function of form \eqref{eq:piecewise-constant-intensity}, then
the rate function $\eta(t)$ is given by
\begin{align}\label{eqn:eta_pwconst}
\eta(t) = \sum_{k=0}^m \Psi(t-\tau_k) (c_{k}-c_{k-1}),
\end{align}
for all $t\in (\tau_{m-1},\tau_m]$ and $m=1,\dots,M$, where $c_{-1}:=0$ by convention.
\end{theorem}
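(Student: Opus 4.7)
The first observation is that taking expectation in \eqref{eq:intensity-hawkes-conv} and using Fubini (together with the earlier identity $\eta(t)=\EE[\lambda(t)]$) gives the linear Volterra integral equation
\begin{equation*}
\eta(t) = \mu(t) + \int_0^t \Phi(t-s)\,\eta(s)\,ds,\qquad t\in[0,T],
\end{equation*}
satisfied by the rate function. Since $\Phi$ is locally bounded, this equation admits a unique solution on $[0,T]$ by standard Volterra theory, so it suffices to exhibit \emph{some} $\eta$ solving it for the piecewise-constant $\mu$ at hand; linearity in the pair $(\mu,\eta)$ then lets me build $\eta$ by superposition.

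The next move is to write $\mu$ as a telescoping sum of shifted step functions. With the convention $c_0:=0$,
\begin{equation*}
\mu(t) = \sum_{k=1}^{M}(c_k-c_{k-1})\,\one_{[\tau_{k-1},T]}(t),
\end{equation*}
which is easily verified by evaluating both sides on each interval $[\tau_{m-1},\tau_m)$. By linearity, it now suffices to solve the equation with a single source of the form $\mu_\tau(t)=c\,\one_{[\tau,T]}(t)$ and then sum.

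For such a source I would guess $\eta_\tau(t)=\Psi(t-\tau)\,c\,\one_{[\tau,T]}(t)$, extended by $0$ for $t<\tau$, and verify the guess directly. On $[0,\tau)$ both sides of the integral equation vanish trivially. On $[\tau,T]$, the convolution reduces to $\int_\tau^t \Phi(t-s)\Psi(s-\tau)c\,ds$ because $\eta_\tau$ vanishes on $[0,\tau)$; after the substitution $u=s-\tau$ this becomes $\int_0^{t-\tau}\Phi((t-\tau)-u)\Psi(u)c\,du$, and the equation collapses to
\begin{equation*}
\Psi(t-\tau)\,c = c + \int_0^{t-\tau}\Phi\!\left((t-\tau)-u\right)\Psi(u)\,c\,du,
\end{equation*}
which is exactly \eqref{eqn:Psi_gen} evaluated at $t-\tau$ and multiplied by $c$. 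The key structural ingredient here is the convolution hypothesis $\Phi(t,s)=\Phi(t-s)$; without translation invariance this shift trick fails, and I expect this bookkeeping to be the main obstacle.

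Summing the single-source solutions by linearity yields
\begin{equation*}
\eta(t)=\sum_{k=1}^M (c_k-c_{k-1})\,\Psi(t-\tau_{k-1})\,\one_{[\tau_{k-1},T]}(t),
\end{equation*}
and for $t\in(\tau_{m-1},\tau_m]$ only indices with $\tau_{k-1}\leq t$, i.e.\ $k\leq m$, contribute, giving \eqref{eqn:eta_pwconst} after reindexing the summation. The one subtlety I would double-check is consistency at the breakpoints $\tau_k$: because setting $t=0$ in \eqref{eqn:Psi_gen} forces $\Psi(0)=I$, each newly activated source contributes a jump $(c_k-c_{k-1})I$ in $\eta$ at $\tau_{k-1}$ that exactly matches the jump in $\mu$, so the candidate solution is right-continuous and Volterra uniqueness applies.
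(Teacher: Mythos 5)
Your proof is correct and is essentially the paper's argument: the paper establishes the same identity by induction on the number of pieces, and its inductive step is exactly your single-source verification (the substitution $u=s-\tau$ collapsing the convolution to \eqref{eqn:Psi_gen} evaluated at $t-\tau$). The only cosmetic difference is that you flatten the induction into an explicit telescoping superposition and make the appeal to uniqueness of the Volterra solution explicit, which the paper leaves implicit.
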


%


Using the above lemma, for the first time, we derive the average intensity for a general exogenous intensity. Section~\ref{appen-temporal} includes a few experiments to investigate these results empirically.
\begin{theorem}
\label{theo:average_general}
If $\Psi\in C^{1}([0,T])$ and satisfies \eqref{eqn:Psi_gen}, and exogenous intensity $\mu$ 
is bounded and piecewise absolutely continuous
on $[0,T]$ where $\mu(t+)=\mu(t)$ at all discontinuous points $t$, then $\mu$ is differentiable almost everywhere,
and the semi-indefinite integral
\begin{equation}\label{eqn:WienerHopf_gen}
\eta(t) = \mu(t) + \int_{0}^{t} \Phi(t-s)\eta(s)ds,\quad \forall t\in[0,T],
\end{equation}
yields a rate function $\eta:[0,T]\to \mathbb{R}_{+}^n$ given by
\begin{align}\label{eqn:eta_gen}
\eta(t) = \int_{0}^{t}\Psi(t-s) d\mu(s).
\end{align}
\end{theorem}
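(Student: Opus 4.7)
The plan is to verify by direct substitution that $\eta(t)=\int_{0}^{t}\Psi(t-s)\,d\mu(s)$ solves the Volterra equation \eqref{eqn:WienerHopf_gen}, leveraging the resolvent identity
$$\int_{0}^{r}\Phi(r-v)\,\Psi(v)\,dv \;=\; \Psi(r)-I, \qquad r\in[0,T],$$
which is exactly \eqref{eqn:Psi_gen} rewritten.

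First I would dispatch the a.e.\ differentiability of $\mu$: piecewise absolute continuity on $[0,T]$ reduces to absolute continuity on each subinterval of the defining partition, and absolutely continuous functions are differentiable a.e.\ by the Lebesgue theorem. Writing $\{t_i\}_{i=1}^{N}$ for the interior discontinuities and $J_i=\mu(t_i)-\mu(t_i-)$ for the jumps, the right-continuity hypothesis $\mu(t+)=\mu(t)$ lets me split the Stieltjes integral cleanly as
$$\int_{0}^{t}\Psi(t-s)\,d\mu(s) \;=\; \Psi(t)\mu(0)\;+\;\sum_{i:\,0<t_i\le t}\Psi(t-t_i)\,J_i\;+\;\int_{0}^{t}\Psi(t-s)\,\mu'(s)\,ds,$$
where the first term encodes a ``jump'' at $s=0$ of size $\mu(0)$, consistent with the $c_{-1}=0$ convention in Theorem~\ref{theo:piecewise_constant_average}.

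Because \eqref{eqn:WienerHopf_gen} is linear in $\mu$, I would substitute each of the three pieces into $\int_{0}^{t}\Phi(t-s)\eta(s)\,ds$ separately. For the absolutely continuous contribution, Fubini (justified by boundedness of $\mu$ and continuity of $\Phi,\Psi$) followed by the change of variables $v=s-u$ turns the inner integral into $\int_{0}^{t-u}\Phi(t-u-v)\Psi(v)\,dv=\Psi(t-u)-I$; the $-I$ piece integrates by the fundamental theorem to $-(\mu(t)-\mu(0))$. The boundary term $\Psi(t)\mu(0)$ yields $(\Psi(t)-I)\mu(0)$ by the same identity with $r=t$, and each jump $\Psi(\cdot-t_i)J_i\mathbf{1}_{s\ge t_i}$ contributes $(\Psi(t-t_i)-I)J_i$ after a shift of the lower limit. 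Summing, the $-I$ contributions telescope to $-\mu(0)-\sum_i J_i-(\mu(t)-\mu(0))=-\mu(t)$, while the remaining $\Psi$-terms reassemble $\eta(t)$; adding $\mu(t)$ on the left recovers $\eta(t)$ and confirms \eqref{eqn:WienerHopf_gen}.

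The main obstacle will be the bookkeeping at the discontinuities and at the endpoint $s=0$: one has to ensure the Stieltjes convention places mass $\mu(0)$ at $s=0$, and that the telescoping $\mu(0)+\sum_i J_i+\int_{0}^{t}\mu'(s)\,ds=\mu(t)$ uses right-continuity precisely at the jump points. A tidier alternative, perhaps preferable, is to approximate $\mu$ by right-continuous piecewise constant functions $\mu_n$ on refining partitions, invoke Theorem~\ref{theo:piecewise_constant_average} to produce $\eta_n(t)=\sum_{\tau_k^n\le t}\Psi(t-\tau_k^n)\bigl(\mu_n(\tau_k^n)-\mu_n(\tau_{k-1}^n)\bigr)$, recognize the right-hand side as a Riemann--Stieltjes sum converging to $\int_{0}^{t}\Psi(t-s)\,d\mu(s)$, and pass to the limit in \eqref{eqn:WienerHopf_gen} by dominated convergence; this bypasses the discontinuity bookkeeping entirely and yields formula \eqref{eqn:eta_gen}.
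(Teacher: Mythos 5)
Your primary route is correct but genuinely different from the paper's. The paper does not verify the Volterra equation directly: it first reduces to the absolutely continuous case (declaring the extension to piecewise AC ``straightforward''), redefines $\mu(T)=\mu(T-)$ to get uniform continuity, approximates $\mu$ uniformly by piecewise constant functions $\mu_k$, writes $\eta_k(t)=\int_0^t\Psi(t-s)\,d\mu_k(s)=\int_0^t\Psi'(t-s)\mu_k(s)\,ds+\Psi(0)\mu_k(t)-\Psi(t)\mu_k(0)$ via integration by parts for BV integrators, and passes to the limit using bounded convergence --- i.e., exactly the ``tidier alternative'' you sketch in your last paragraph (which also matches the paper's Corollary, where $\eta=(\Psi'*\mu)$ plus boundary terms appears). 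Your main argument instead verifies \eqref{eqn:eta_gen} by direct substitution into \eqref{eqn:WienerHopf_gen}, splitting the Stieltjes measure into the atom $\mu(0)$ at $s=0$, the interior jumps $J_i$, and the density $\mu'$, and collapsing each piece with the resolvent identity $\int_0^r\Phi(r-v)\Psi(v)\,dv=\Psi(r)-I$ plus Fubini. This buys a self-contained, limit-free proof that handles the piecewise AC case head-on rather than waving at it, at the cost of the jump bookkeeping; the paper's approximation argument avoids that bookkeeping but leans on uniform approximation, bounded convergence, and an unproved reduction step. One small wrinkle in your write-up: the sentence claiming the $-I$ contributions sum to $-\mu(0)-\sum_iJ_i-(\mu(t)-\mu(0))=-\mu(t)$ is arithmetically off, because for piecewise AC $\mu$ the fundamental theorem gives $\int_0^t\mu'(s)\,ds=\mu(t)-\mu(0)-\sum_{0<t_i\le t}J_i$, not $\mu(t)-\mu(0)$; with that correction the telescoping closes exactly as the identity $\mu(0)+\sum_iJ_i+\int_0^t\mu'(s)\,ds=\mu(t)$ you correctly state in your final paragraph, so this is a transcription slip rather than a gap. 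Strictly, both your argument and the paper's should also invoke uniqueness of solutions to the Volterra equation of the second kind to conclude that the exhibited formula is \emph{the} rate function, but neither does, so you are no worse off.
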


\begin{corollary}
Suppose $\Psi$ and $\mu$ satisfy the same conditions as in Thm.  \ref{theo:average_general},
and define $\psi=\Psi'$, then the rate function is $\eta(t)=(\psi * \mu)(t)$.
In particular, if $\Phi(t)=Ae^{-\omega t}\one_{\geq0}(t)=[a_{ij}e^{-\omega t}\one_{\geq0}(t)]_{ij}$
then the rate function 
$\eta(t)=\mu(t)+A \int_{0}^{t} e^{(A-wI)(t-s)}\mu(s)ds$.
\end{corollary}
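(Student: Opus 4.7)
The plan is to deduce both the general convolution identity and the explicit formula from a single integration by parts applied to Thm.~\ref{theo:average_general}.

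First, I would record that setting $t=0$ in \eqref{eqn:Psi_gen} collapses the integral and gives $\Psi(0)=I$. Starting from the representation $\eta(t)=\int_{0}^{t}\Psi(t-s)\,d\mu(s)$ furnished by Thm.~\ref{theo:average_general}, I would apply Riemann--Stieltjes integration by parts, legitimate on each interval of absolute continuity of $\mu$ (with the jump contributions added back as point masses of $d\mu$). Since $\frac{d}{ds}\Psi(t-s)=-\Psi'(t-s)$, this yields
\begin{equation*}
\eta(t) \;=\; \Psi(0)\mu(t) - \Psi(t)\mu(0-) + \int_{0}^{t}\Psi'(t-s)\mu(s)\,ds.
\end{equation*}
Extending $\mu$ by zero for $s<0$ makes $\mu(0-)=0$, and $\Psi(0)=I$ reduces the boundary contribution to $\mu(t)$, so
\begin{equation*}
\eta(t) \;=\; \mu(t) + \int_{0}^{t}\Psi'(t-s)\mu(s)\,ds.
\end{equation*}
With $\psi=\Psi'$ and the convention that $\Psi$ is extended by $0$ to negative times (so its distributional derivative carries a point mass $I\,\delta_{0}$ at the origin), the right-hand side is exactly the convolution $(\psi * \mu)(t)$, giving the first claim.

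For the specific exponential kernel $\Phi(t)=Ae^{-\omega t}\one_{\geq0}(t)$, I would just differentiate the closed form from the Lemma, $\Psi(t) = e^{(A-\omega I)t} + \omega(A-\omega I)^{-1}\bigl(e^{(A-\omega I)t}-I\bigr)$. Using that $(A-\omega I)$ commutes with both $e^{(A-\omega I)t}$ and $(A-\omega I)^{-1}$, the two terms combine to
\begin{equation*}
\Psi'(t) \;=\; (A-\omega I)e^{(A-\omega I)t} + \omega e^{(A-\omega I)t} \;=\; A\,e^{(A-\omega I)t}.
\end{equation*}
Plugging this into the displayed formula for $\eta$ yields $\eta(t)=\mu(t)+A\int_{0}^{t}e^{(A-\omega I)(t-s)}\mu(s)\,ds$, as claimed.

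The only delicate point, and the one I would write out most carefully, is the bookkeeping at $s=0$: one must interpret $\psi*\mu$ so that the boundary term $\Psi(0)\mu(t)=\mu(t)$ produced by integration by parts is absorbed into the convolution (either via the $\delta_{0}$ part of the distributional derivative, or by writing the identity in the explicit form $\eta(t)=\mu(t)+\int_{0}^{t}\Psi'(t-s)\mu(s)\,ds$, as is done in the specialized statement). Everything else — bounded variation of $\mu$ on each piece, $C^{1}$ regularity of $\Psi$, commutativity of the matrix exponential with its generator and with $(A-\omega I)^{-1}$ — is routine once this convention is fixed.
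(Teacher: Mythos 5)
Your proof is correct and follows essentially the same route as the paper's: integration by parts on the Stieltjes representation $\eta(t)=\int_0^t\Psi(t-s)\,d\mu(s)$ from Thm.~\ref{theo:average_general}, followed by differentiating the closed form of $\Psi$ (using $\Psi(0)=I$) to obtain $\psi(t)=Ae^{(A-\omega I)t}$. Your explicit bookkeeping of the atom of $d\mu$ at $s=0$ (equivalently, the $I\,\delta_0$ carried by the distributional derivative of $\Psi$) is in fact more careful than the paper's one-line appeal to the derivative-of-convolution rule, and it is precisely what accounts for the leading $\mu(t)$ term appearing in the statement here but silently dropped in the appendix's restatement of the corollary.
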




\section{Multi-stage Closed-loop Control Problem}
Given the analytical relation between exogenous intensity and expected overall intensity
(rate function), 
one can solve a single one-stage campaigning problem to find the optimal 
constant intervention intensity~\cite{farajtabar2014activity}.
Alternatively, the time window can be partitioned into multiple stages and 
one can impose different levels of interventions in these stages. 
This yields an open-loop optimization of the cost function where
one selects all the intervention actions at initial time 0.
More effectively, we tackle the campaigning problem in a 
dynamic and adaptive manner where we can postpone deciding 
the intervention by observing the process until the next stage begins. 
This is called the \emph{closed-loop} optimization 
of the objective function.

In this section, we establish the foundation to formulate the problem as 
a multi-stage closed-loop optimal control problem.
We assume that $n$ users are generating events according to multi-dimensional 
Hawkes process with exogenous intensity $\mu(t) \in \RR^n$ and 
impact function $\Phi(t,s) \in \RR^{n \times n}$.

\subsection{Event exposure} Event exposure is the quantity of major interests in campaigning.
The exposure process is mathematically represented as a counting process, 
$\Ecal(t) = (\Ecal^1(t), \ldots, \Ecal^n(t))^{\top}$:
Here,  $\Ecal^i(t)$ records the number of times user $i$ is
exposed (she or one of her neighbors performs an activity) 
to the campaign by time $t$.
Let $B$ be the adjacency matrix of the user network, 
\ie, $b_{ij}=1$ if user $i$ follows user $j$ or equivalently user $j$ influences user $i$. 
We assume $b_{ii}=1$ for all $i$. 
Then the exposure process is  given by
$
\Ecal(t) = B \, \Ncal(t).
$

\subsection{Stages and interventions}
Let $[0,T]$ be the time horizon and 
$0=\tau_0 < \tau_1 < \ldots < \tau_{M-1} < \tau_M = T$ be a partition
into the $M$ stages.
In order to steer the activities of network towards a desired level
(criteria given below) at these stages,
we impose a constant intervention $u_m \in \RR^n$ 
to the existing exogenous intensity $\mu$ 
during time $[\tau_{m},\tau_{m+1})$ for each stage $m=0,1,\dots,M-1$.
The activity intensity at the $m$-th stage is
$
\label{eq:hawkes-matrix-with-control}
\lambda_m(t) = \mu + u_m +  \int_{0}^t \Phi(t,s) \, d\Ncal(s)
$
for $\tau_m \leq t < \tau_{m+1}$
where $\Ncal(t)$ tracks the counting process of activities
since $t=0$.
Note that the intervention itself exhibits a stochastic nature:
adding $u_m^i$ to $\mu^i$ is equivalent to incentivizing
user $i$ to increase her activity rate but it is still uncertain
when she will perform an activity, which appropriately mimics
the randomness in real-world campaigning. 
 
\subsection{States and state evolution}
Note that the Hawkes process is non-Markov and one needs complete
knowledge of the history to characterize the entire process. 
However, the conditional intensity $\lambda(t)$ only depends
on the state of process at time $t$ when the standard 
exponential kernel $\Phi(t,s)=Ae^{-\omega (t-s)}\one_{\geq0}(t-s)$
is employed. In this case, the activity rate at stage $m$ is
\begin{align}
\lambda_m(t) = \mu + u_m +  \underbrace{\int_{0}^{\tau_m} A   e^{-\omega(t-s)} \, d\Ncal(s)}_{\text{from previous stages}}  +
  \underbrace{\int_{\tau_m}^{t} Ae^{-\omega(t-s)} \, d\Ncal(s)}_{\text{current stage}}  
\end{align}
Define $x_m := \lambda_{m-1}(\tau_{m}) - u_{m-1} - \mu$ 
(and $x_0 = 0$ by convention)
then the intensity due to events of all previous $m$ stages can be written as
$
\int_{0}^{\tau_m} A  e^{-\omega(t-s)} \, d\Ncal(s) = x_m e^{-\omega(t-\tau_m)}
$.
In other words, $x_m$ is sufficient to encode the information of activity 
in the past $m$ stages that is relevant to future. 
This is in sharp contrast to the general case where the state space grows 
with the number of events.

\subsection{Objective function}
For a sequence of controls $u(t)=\sum_{m=0}^{M-1} u_m\one_{[\tau_m,\tau_{m+1})}(t)$,
the activity counting process $\Ncal(t)$ is generated by intensity
$\lambda(t) = \mu + u(t) +  \int_{0}^t A e^{-\omega (t-s)} \, d\Ncal(s)$.
For each stage $m$ from $0$ to $M-1$, $x_m$ encodes the effects from
previous $m$ stages as above and $u_m$ is the current control
imposed at this stage. Let $\Ecal_m^i(t;x_m,u_m):=B\int_{\tau_m}^td\Ncal^i(s)$ 
be the number of times user $i$
is exposed to the campaign by time $t\in[\tau_m,\tau_{m+1})$ in stage $m$,
then the goal is to steer the expected total number of exposure
$\bar{\Ecal}_m^i(x_m,u_m):=\EE[\Ecal_m^i(\tau_{m+1};x_m,u_m)] $ to a desired level.
In what follows, we introduce several instances of the objective function $g(x_m,u_m)$
in terms of $\{\bar{\Ecal}_m^i(x_m,u_m)\}_{i=1}^{n}$ in each stage $m$
that characterize different \emph{exposure shaping} tasks. Then the overall control problem
is to find $u(t)$ that optimizes 
the total objective $\sum_{m=0}^{M-1} g_m(x_m, u_m)$.
\begin {itemize}[leftmargin=*]
\item
\emph{Capped Exposure Maximization (CEM)}:
In real networks, there is a cap on the exposure each user can tolerate due to the limited attention of a user. Suppose we know the upper bound $\beta_m^i$ , on user $i$'s exposure tolerance over which the extra exposure is not counted towards the objective. Then, we can form the following \emph{capped exposure maximization}
\begin{align}
g_m(x_m, u_m) = \frac{1}{n} \sum_{i=1}^n \min \cbr{ \bar{\Ecal}_m^i( x_m, u_m ), \beta_m^i}
\end{align}
\item 
\emph{Minimum Exposure Maximization (MEM)}:
Suppose our goal is instead to maintain the exposure of campaign on each user above a certain minimum level, at each stage or, alternatively to make the user with the minimum exposure as exposed as possible, we can consider the following cost function:
\begin{align}
g_m(x_m, u_m) =  \min_i \bar{\Ecal}_m^i(x_m, u_m)
\end{align}
\item
\emph{Least-squares Exposure Shaping (LES)}:
Sometimes we want to achieve a pre-specified target exposure levels, $\gamma_m \in \RR^n$, for the users.
For example, we may like to divide users into groups and desire a different level of exposure in each group. To this end, we can perform 
least-squares campaigning task with the following cost function where $D$ encodes potentially additional constraints (e.g., group partitions):
\begin{align}
g_m(x_m, u_m) =  -\frac{1}{n} \| D \bar{\Ecal}_m(x_m, u_m) - \gamma_m \|^2
\end{align}
\end{itemize}

\subsection{Policy and actions}
By observing the counting process in previous stages (summarized in
a sequence of $x_m$) and
taking the future uncertainty into account, the control problem
is to design a policy $\pi = \{\pi_m:\RR^n \to \RR^n:m=0,\dots,M-1\}$ 
such that the controls $u_m=\pi_m(x_m)$ can maximize the total objective
$\sum_{m=0}^{M-1} g_m(x_m, u_m)$.
In addition, we may have constraints on the amount of control.
For example, a budget constraint on the sum of all interventions to 
users at each stage, or, a cap over the amount of intensity a user 
can handle.  
A feasible set or an action space over which we find the best intervention 
is represented as
$
\Ucal_m := \cbr{u_m \in \RR^n | c_m^{\top} u_m \le C_m, 0 \leqslant u_m \leqslant \alpha_m}
$.
Here, $c_m \in \RR_+^n$ contains the price of each person per unit increase of exogenous intensity and $C_m \in \RR_+ $ is the total budget at stage $m$. Also, $\alpha_m \in \RR_+^n$ is the cap on the amount of activities of the users. 

To summarize, the following problem is formulated to find the optimal control policy $\pi$:
\begin{equation}
	\label{eq:optimization-main}
	\begin{array}{l}
		\maximize_{\pi }  \sum_{m=0}^{M-1} g_m(x_m, \pi_m(x_m)), \ \, \, \\
		\mbox{subject to }    \pi_m(x_m) \in \Ucal_m, \mbox{ for } m=0,\dots,M-1.
	\end{array}
\end{equation}
\section {Closed-loop Dynamic Programming Solution}

We have formulated the control problem as an optimization in \eqref{eq:optimization-main}.
However, when control policy $\pi_m$ is to be implemented, only
$x_m$ is observed and there are still uncertainties in future 
$\{x_{m+1},\dots,x_{M-1}\}$. 
For instance, when $\pi_m$ is implemented according to $x_m$ starting
from time $\tau_m$, the
intensity $x_{m+1}:=f(x_m,\pi_m(x_m))$ at time $\tau_{m+1}$ 
depends on $x_m$ and the control $\pi_m(x_m)$,
but is also random due to the stochasticity of 
the process during time $[\tau_m,\tau_{m+1})$.
Therefore, the design of $\pi$ needs
to take future uncertainties into considerations.

Suppose we have arrived at stage $M$ at time $\tau_{M-1}$
with observation $x_{M-1}$, then the optimal policy $\pi_{M-1}$
satisfies $g_{M-1}(x_{M-1},\pi_{M-1}(x_{M-1}))
=\max_{u\in\Ucal_{M-1}}g_{M-1}(x_{M-1},u)=:J_{M-1}(x_{M-1})$.
We then repeat this procedure for $m$ from $M-1$ to $0$ backward 
to find the sequence of controls via dynamic programming such that
the control $\pi_m(x_m)\in \Ucal_m$ yields optimal objective value
\begin{align}
\label{eq:dynamic-programming}
J_{m}(x_m) = \max_{u_m\in \Ucal_m}  \,  \EE [g_m(x_m, u_m)  +  J_{m+1}(f(x_m, u_m))]
\end{align}


\begin{algorithm}[t]
\caption{Closed-loop Multi-stage Dynamic Programming}
\label{alg:dynamic} 
\begin{algorithmic}
\STATE \textbf{Input:} Intervention constraints: $c_0 \ldots c_{M-1}$, $C_0 \ldots C_{M-1}$, $\alpha_0\ldots \alpha_{M-1}$, 
\STATE \textbf{Input:} Objective-specific constraints: $\beta_0 \ldots \beta_{M-1}$ for CEM and
$\gamma_0 \ldots \gamma_{M-1}$ for LES
\STATE \textbf{Input:}  Time: $T$,  Hawkes parameters: $A$, $\omega$ 
\STATE \textbf{Output:} Optimal intervention $u_0 \dots u_{M-1}$, Optimal cost: $Cost$
\STATE Set $x_0 \gets 0$
\STATE Set $Cost \gets 0$ 
\FOR {$l \gets 0: M-1$ }
\STATE $(v_l \ldots v_{M-1})$ = $open\_loop(x_l)$
(Problems \eqref{eq:optimization-CEM}, \eqref{eq:optimization-MEM}, \eqref{eq:optimization-LES} for CEM, MEM, LES respectively)
\STATE  Set $u_l \gets v_l$  
\STATE Drop $v_{l+1} \ldots v_{M-1}$
\STATE Update next state $x_{l+1} \gets f_l(x_l, u_l)$
 \STATE  Update $Cost  \gets Cost + g_l(x_l, u_l)$
\ENDFOR
\end{algorithmic}
\end{algorithm}

\subsection{Approximate dynamic programming} Solving $\eqref{eq:dynamic-programming}$ for finding $J_m(x_m)$ analytically is intractable.
Therefore, we will adopt an approximate dynamic programming scheme.
In fact approximate control is as essential part of dynamic programming as the optimization is usually intractable due to curse of dimensionality except a few especial cases~\cite{bertsekas1995dynamic}.
Here we adopt a suboptimal control scheme, \emph{certainty equivalent control} (CEC), which applies at each stage the control that would be optimal if the uncertain quantities were fixed at some typical values like the average behavior. It results in an optimal control sequence, the first component of which is used at the current stage, while the remaining components are discarded. The procedure is repeated for the remaining stages. Algorithm~\ref{alg:dynamic} summarizes the dynamic programing steps. 
This algorithm has two parts: (i) certainty equivalence which the random behavior is replaced by its average; and (ii) the open-loop optimization.
Let's assume we are at the beginning of stage $l$ of the Alg.~\ref{alg:dynamic} with state vector $x_l$ at $\tau_l$.

\subsection{Certainty equivalence} 
We use the machinery developed in Sec.\  \ref{sec:mean} to compute the average of exposure at any stage $m=l, l+1, \ldots, M-1$.
\begin{align}
\label{eq:open-loop-exposure}
\bar{\Ecal}_m(x_{m},u_{m})  &
= B  \EE [\Ncal(\tau_{m+1})-\Ncal(\tau_{m}) ] 
= B \EE \left[ \int_{\tau_{m}}^{\tau_{m+1}}  d \Ncal(s) \right] 
= B \int_{\tau_{m}}^{\tau_{m+1}} \eta_m(s) \, ds
\end{align}
where $\eta_m(t) = \EE[ \lambda_m(t)]$ and
$\lambda_m(t) = \mu + u_m + x_{l} e^{-\omega (t-\tau_l)} + \int_{\tau_l}^{t} A e^{-\omega (t-s)} d \Ncal(s)$ for $t\in[\tau_m , \tau_{m+1})$.
Now,  we use the superposition property of point processes~\cite{DalVer2007} to decompose the  process as $\Ncal(t) = \Ncal^c(t) + \Ncal^v(t)$ 
corresponding to $ \lambda_m(t) = \lambda_m^c(t) + \lambda_m^v(t)$ 
where the first 
$ \lambda_m^c(t) = \mu + u_m + \int_{\tau_l}^{t} A  e^{-\omega (t-s)} d \Ncal^c(s)$
consists of events caused by exogenous intensity at current stage $m$
and the second
$\lambda_m^v(t) = x_{l} e^{-\omega (t-\tau_l)} + \int_{\tau_l}^{t} A  e^{-\omega (t-s)} d \Ncal^v(s)$
is due to activities in previous stages.
According to Thm.   \ref{theo:piecewise_constant_average} we have
\begin{align}
\eta_m^c(t):=\EE[\lambda_m^c(t)] = \Psi(t-\tau_l)   \mu +  
\Psi(t-\tau_l) u_l + \sum_{k=l+1}^{m-1}  \Psi(t-\tau_{k}) (u_{k}-u_{k-1}),
\end{align}
and according to Thm.  \ref{theo:average_general} we have
\begin{align}
\eta_m^v(t):=\EE[\lambda_m^v(t)] = \int_{\tau_l}^t \Psi(t-s) 
\, d(x_l e^{-\omega(s-\tau_l)} \one_{[\tau_l,\infty)}(s)).
\end{align}
From now on, for simplicity, we assume stages are based on equal partition of $[0,T]$ to $M$ segments where each has length $\Delta_M$. Combining Eq.~\eqref{eq:open-loop-exposure} and
$ \eta_m(t) = \eta_m^c(t)+\eta_m^v(t) $ yields:
\begin{equation}
\label{eq:epose-linear-first}
\begin{split}
\bar{\Ecal}_m(x_{m},u_{m})   =&  \Gamma((m-l+1) \Delta_M) u_l + 
 \Gamma((m-l) \Delta_M) (u_{l+1}-u_l)   + \ldots  \\ &
+ \Gamma(\Delta_M) (u_{m}-u_{m-1}) + \Gamma((m-l+1) \Delta_M) \mu +
 \Upsilon((m-l+1) \Delta_M) x_l
 \end{split}
\end{equation}
where $\Gamma(t)$ and $\Upsilon(t)$ are matrices independent of $u_m$'s and are defined as:
\begin{align}
& \Upsilon(t) = B \int_0^t e^{ (A-\omega I) s}\, ds   = B (A-\omega I)^{-1} (e^{ (A-\omega I) t}- I) \\
& \Gamma(t) =  B \int_0^t \Psi(s) \, ds = B \Upsilon(t) + B(A-\omega I)^{-1} (\Upsilon(t)- It)/\omega;
\end{align}
Note the linear relation between average exposure $\bar{\Ecal}_m(x_{m},u_{m}) $ and intervention values $u_l, \ldots, u_{m-1}$.
Let $\Gamma(k \Delta_M) = \Gamma_k$ and $\Upsilon(k \Delta_M) = \Upsilon_k$. Then  for every $m \geq l$, Eq.~\eqref{eq:epose-linear-first} is rewritten as:
\begin{align}
\bar{\Ecal}_m(x_{m},u_{m})  = \sum_{k=l}^{m-1} (\Gamma_{m-k+1}-\Gamma_{m-k}) u_k +  \Gamma_1 u_m + \Gamma_{m-l+1} \mu 
+ \Upsilon_{m-l+1} x_l.
\end{align}

\subsection{Open-loop optimization}
Having found the average exposure at stages  $m=l,  \ldots, M-1$ we formulate an open-loop optimization to find optimal $u_l, u_{l+1}, \ldots, u_{M-1}$.
Defining $\uhat_l = (u_l; \ldots; u_{M-1})$ and $\hat
{\Ecal}_l = (\bar{\Ecal}_l(x_{l},u_{l}); \ldots; \bar{\Ecal}_{M-1}(x_{M-1},u_{M-1})) $
we can aggregate
Aggregating these linear forms for all $l \geq m$ yields to the following matrix equation for finding $\hat{\Ecal}_l$:
\begin{equation}
\begin{split}
\underbrace{
\begin{bmatrix}
\bar{\Ecal}_l(x_{l},u_{l})   \\
\bar{\Ecal}_{l+1}(x_{l+1},u_{l+1})   \\
\bar{\Ecal}_{l+2}(x_{l+2},u_{l+2})  \\
\vdots \\
\bar{\Ecal}_{M-1}(x_{M-1},u_{M-1}) 
\end{bmatrix}
}_{\hat{\Ecal}_l}
= &
\underbrace{
\begin{bmatrix}
    \Gamma_1 & 0 & 0 & \dots  & 0 \\
    \Gamma_{2} -   \Gamma_{1} & \Gamma_{1}  & 0 & \dots  & 0 \\
    \Gamma_{3} - \Gamma_{2}  & \Gamma_{2} -  \Gamma_{1}  & \Gamma_{1} & \dots  & 0 \\
    \vdots & \vdots & \vdots & \ddots & \vdots \\
    \Gamma_{M-l}-\Gamma_{M-l-1} & \Gamma_{M-l-1}-\Gamma_{M-l-2}   & \ldots & 
    \Gamma_{2}-\Gamma_{1}  & \Gamma_{1}
\end{bmatrix}
}_{X_l}
\underbrace{
\begin{bmatrix}
u_l   \\
u_{l+1} \\
u_{l+2} \\
\vdots \\
u_{M-1}
\end{bmatrix}
}_{\uhat_l}  \\ 
+ &
\underbrace{
\begin{bmatrix}
\Gamma_1 \\
\Gamma_2 \\
\Gamma_3 \\
\vdots \\
\Gamma_{M-l}
\end{bmatrix}
}_{Y_l}
\mu
+ 
\underbrace{
\begin{bmatrix}
\Upsilon_1 \\
\Upsilon_2 \\
\Upsilon_3 \\
\vdots \\
\Upsilon_{M-l}
\end{bmatrix}
}_{W_l}
x_l
\end{split}
\end{equation}


Defining  the expanded form of constraint variables as $\chat_l = (c_l; \ldots ; c_{M-1})$, $\CChat_l = (C_l; \ldots ; C_{M-1})$, and $\alphahat_l = (\alpha_l ; \ldots ; \alpha_{M-1})$  we have
\begin{align}
\underbrace{
\begin{bmatrix}
    c_l^{\top}   & \dots  & 0^{\top}  \\
    \vdots & \ddots & \vdots \\
   0^{\top}  & \dots  & c_{M-1}^{\top}  \\
       I  & \dots  & 0 \\
    \vdots & \ddots & \vdots \\
   0& \dots  & I \\
       -I  & \dots  & 0 \\
    \vdots & \ddots & \vdots \\
   0& \dots  & -I
\end{bmatrix}
}_{Z_l}
\underbrace{
\begin{bmatrix}
u_l   \\
u_{l+1} \\
u_{l+2} \\
\vdots \\
u_{M-1}
\end{bmatrix}
}_{\uhat_l}
\leq
\underbrace{
\begin{bmatrix}
C_l   \\
\vdots \\
C_{M-1} \\
\alpha_l \\
\vdots \\
\alpha_{M-1}  \\
0 \\
\vdots \\
0
\end{bmatrix}
}_{z}.
\end{align}

In summary the following equation states the relation between intervention intensities given the constrains:
\begin{align} 
X_l \uhat_l + Y_l \mu + W_l x_l = \hat{\Ecal}_l  \quad \mbox{where} \quad  Z_l \uhat_l \leq z_l  
\end{align}
Then, we provide the optimization from of the above exposure shaping tasks.

For CEM consider $\betahat_l= (\beta_l; \ldots, \beta_{M-1})$. Then the problem
\begin{equation}
	\label{eq:optimization-CEM}
	\begin{array}{l}
		\mbox{maximize}_{\hhat, \uhat_l}   \frac{1}{n} \one^{\top} \hhat   \\
		\mbox{subject to} \, \,   X_l \uhat_l + Y_l \mu + W_l x_l  \geq \hhat, \,\,\,  \betahat_l \geq \hhat, \,\,\, Z_l \uhat_l \leq z_l ,\, \,
	\end{array}
\end{equation}
solves CEM where $\hhat$ is an auxiliary vector of size $n(M-l)$.

For MEM consider the auxiliary $h$ as a vector of size $M-l$ and $\hhat$ a vector of size $n(M-1)$.  $\hhat = (h(1); \ldots; h(1); h(2); \ldots, h(2); \ldots, h(M-l); \ldots; h(M-l))$ where each $h(k)$ is repeated $n$ times.  Then MEM is equivalent to
\begin{equation}
	\label{eq:optimization-MEM}
	\begin{array}{l}
		\mbox{maximize}_{\hhat, \uhat_l}  \one^{\top} \hhat   \\
		\mbox{subject to} \,\, X_l \uhat_l + Y_l \mu + W_l x_l  \geq \hhat, \,\,\,  
		\betahat_l \geq \hhat, \,\,\, Z_l \uhat_l \leq z_l 
	\end{array}
\end{equation}

For LES let $\gammahat_l = (\gamma_l; \ldots; \gamma_{M-1})$ and $\hat{D}_l = diag(D, \ldots, D)$, then
\begin{equation}
	\label{eq:optimization-LES}
	\begin{array}{ll}
		\mbox{minimize}_{\uhat_l}    \frac{1}{n} \|\hat{D}_l ( X_l \uhat_l + Y_l \mu + W_l x_l) - \gammahat_l \|^2 \\
		\mbox{subject to} \,\,  Z_l \uhat_l \leq z_l 
	\end{array}
\end{equation}
All the three tasks involve convex (and linear) objective function with linear constraints which impose a convex feasible set. Therefore, one can use the rich and well-developed literature on convex optimization and linear programming to find the optimum intervention.


\subsection{Scalable optimization}
All the exposure shaping problems defined above require an efficient evaluation of average intensity $\eta(t)$ at all stages, which entails computing  matrices $X_l$, $Y_l$, $W_l$, and $Z_l$. This leads to work with matrix exponentials and inverse matrices to obtain $\Upsilon_m$, and $\Gamma_m$ for $m=1, \ldots, M-1$.
In small or medium networks, we can rely on well-known numerical methods to compute matrix exponentials and inverse.
However, in large networks, 
the explicit computation of $X_l$, $Y_l$, $W_l$, and $Z_l$ becomes intractable.
Fortunately, we can exploit the following key property of our convex campaigning framework: the  average intensity itself and the gradient of the objective functions only depends on $X_l$, $Y_l$, $W_l$, and $Z_l$ (and consequently on $\Upsilon_m$, and $\Gamma_m$) through matrix-vector product operations. 
Similar to \cite{farajtabar2014activity} for the computation of the product of matrix exponential with a vector, one can use the iterative algorithm by Al-Mohy et al.~\cite{al2011computing}, which combines a scaling and squaring method  with a truncated Taylor series approximation to the matrix exponential.
For solving the sparse linear system of equation, we use the well-known GMRES method, which is an Arnoldi process for constructing an $l_2 $-orthogonal basis of Krylov subspaces. The method solves the linear system by iteratively minimizing the norm of the residual vector over a Krylov subspace. 
For details please refer to \cite{farajtabar2014activity}.
Last but not least, we don't need to explicitly build $X_l$, $Y_l$, $W_l$, and $Z_l$. 
At each step of gradient computation all the operations involving them are multiplication of$\Upsilon_1, \ldots, \Upsilon_{M}$, and  $\Gamma_1, \ldots, \Gamma_{M}$ to vectors such as $u_0, \ldots, u_{M-1}$ and $\mu$.


\begin{figure*}[!t]
  \centering
  \setlength{\tabcolsep}{6pt}
  \begin{tabular}{ccc}
          \hspace{-4mm}
          \includegraphics[width=0.33\textwidth]{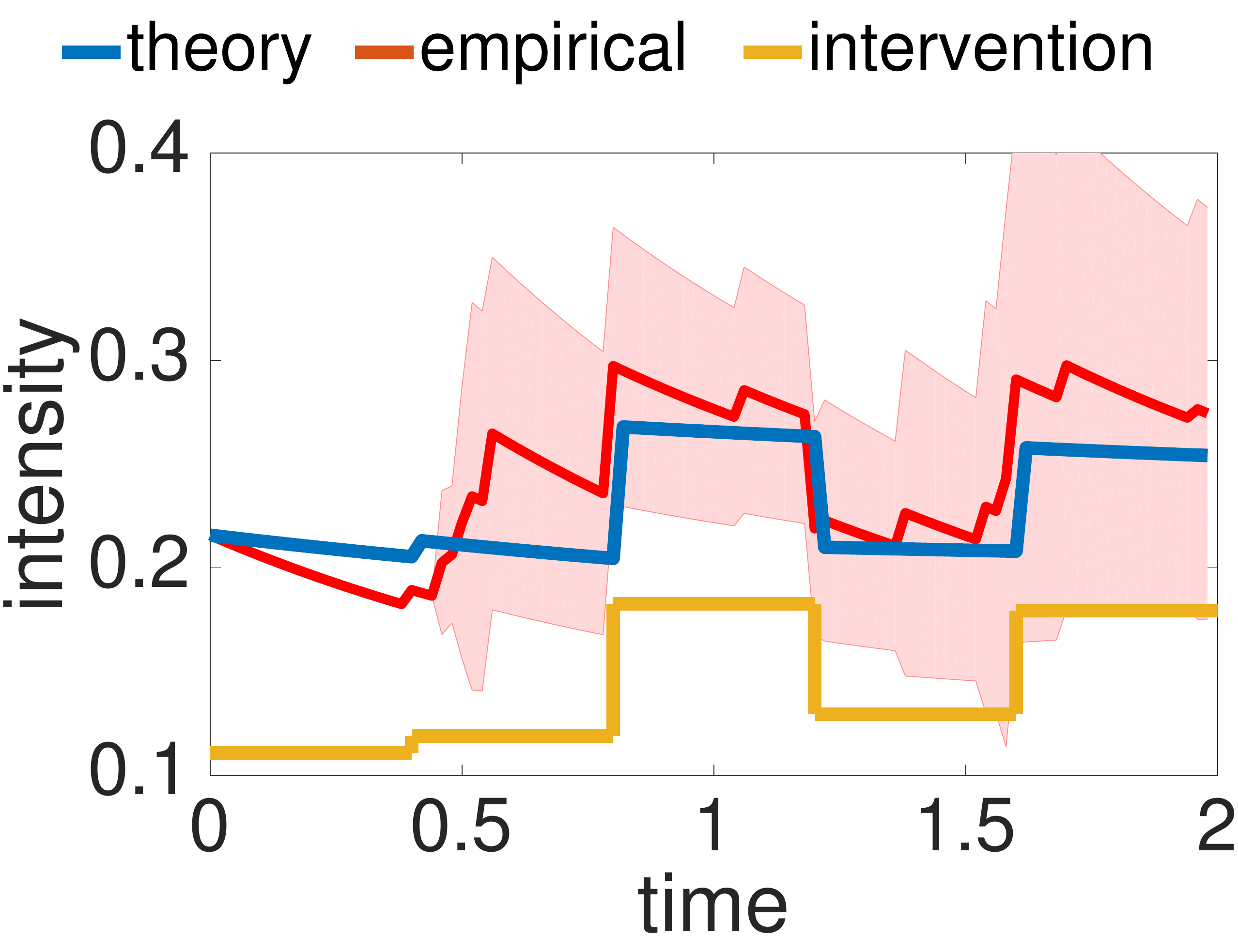} &
          \hspace{-4mm}
          \includegraphics[width=0.33\textwidth]{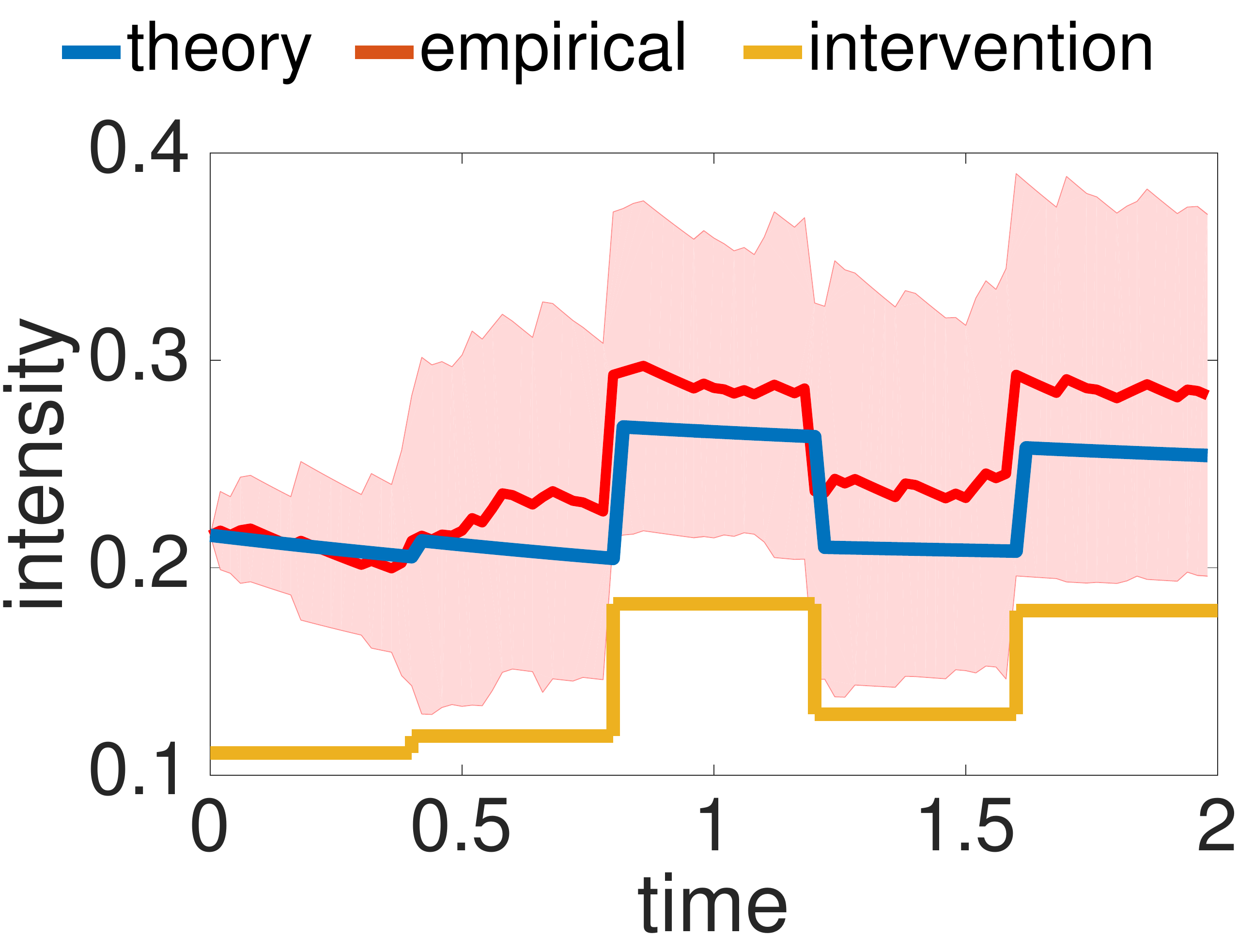} &
          \hspace{-4mm}
          \includegraphics[width=0.33\textwidth]{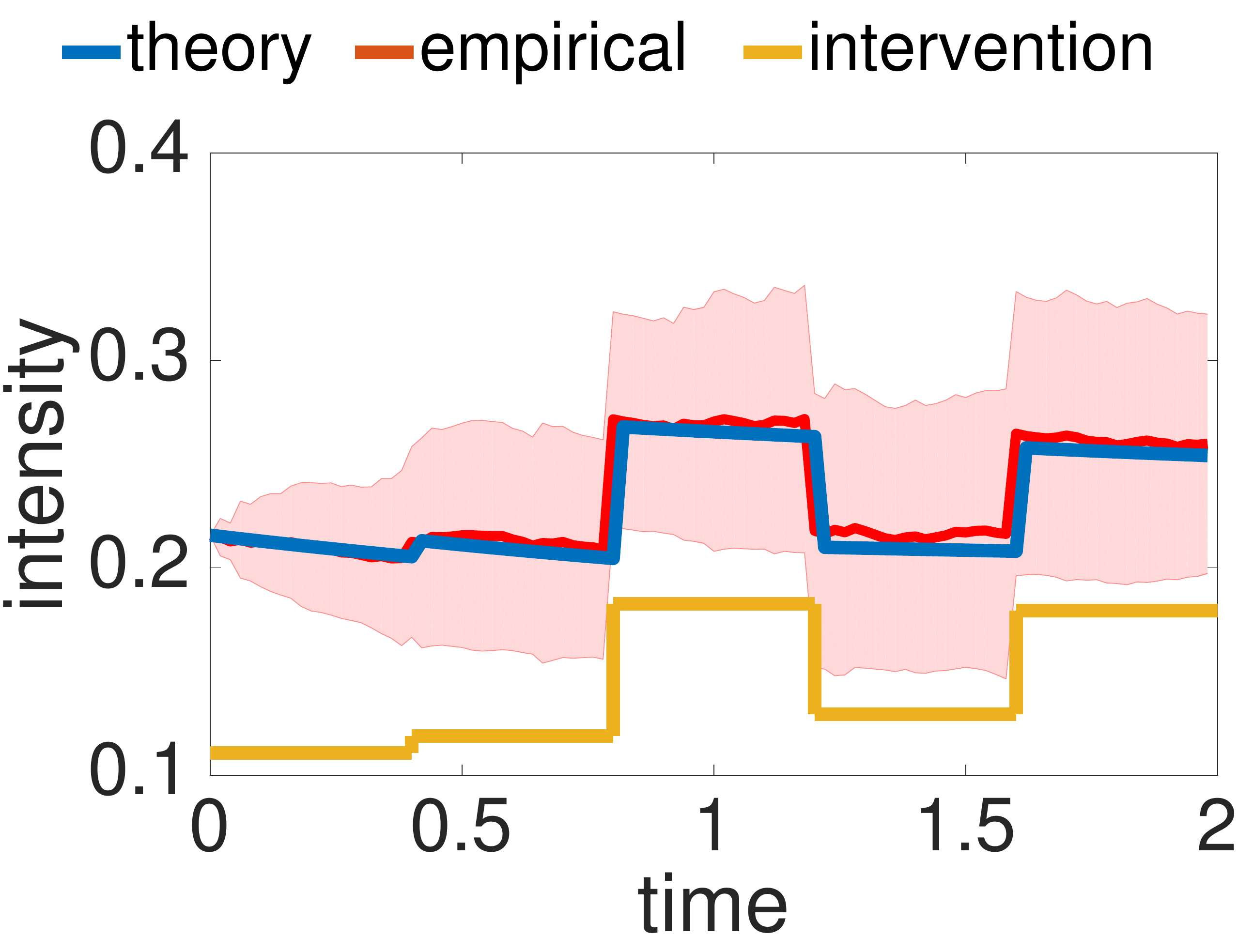} \\
                 \hspace{-4mm}
          \includegraphics[width=0.33\textwidth]{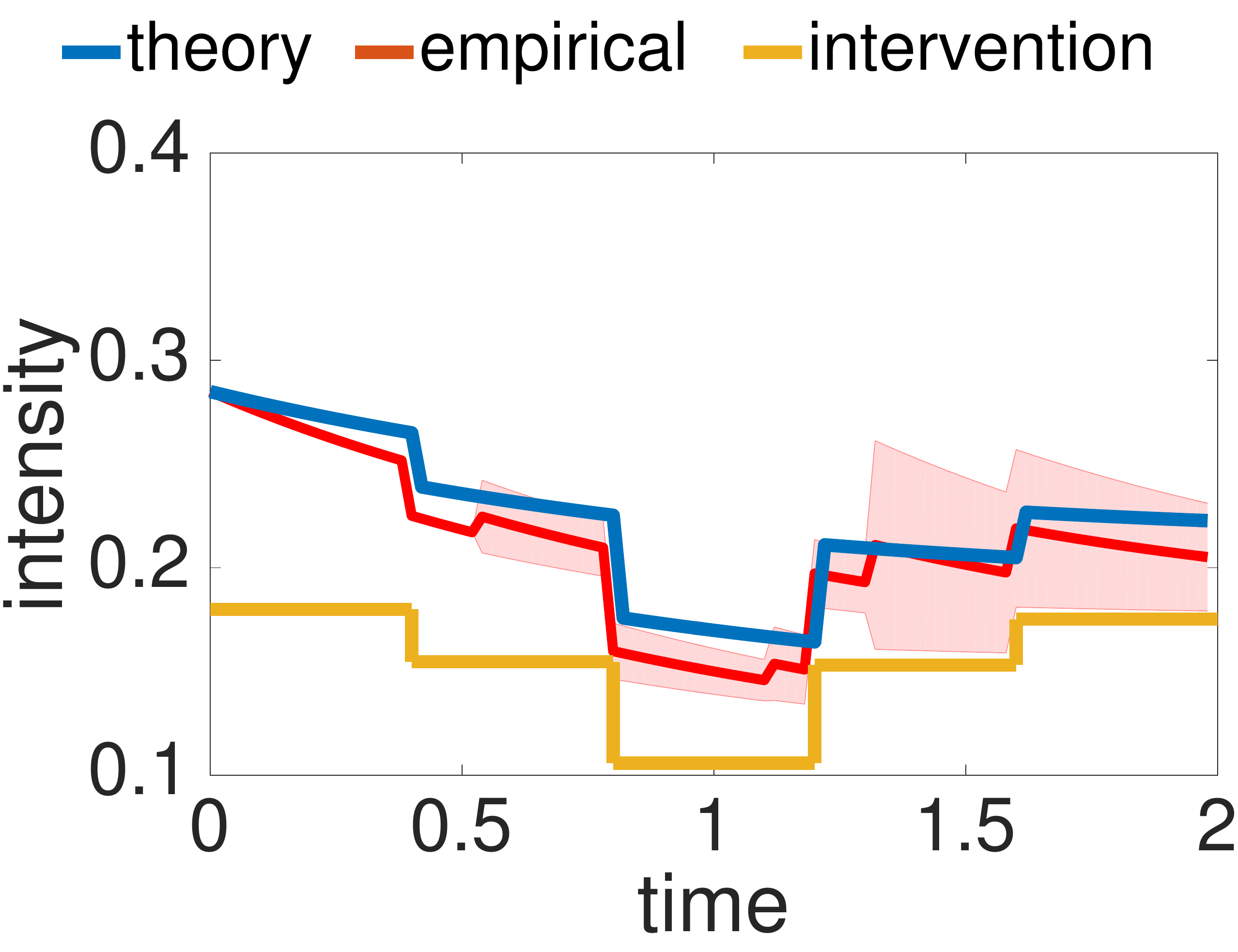} &
          \hspace{-4mm}
          \includegraphics[width=0.33\textwidth]{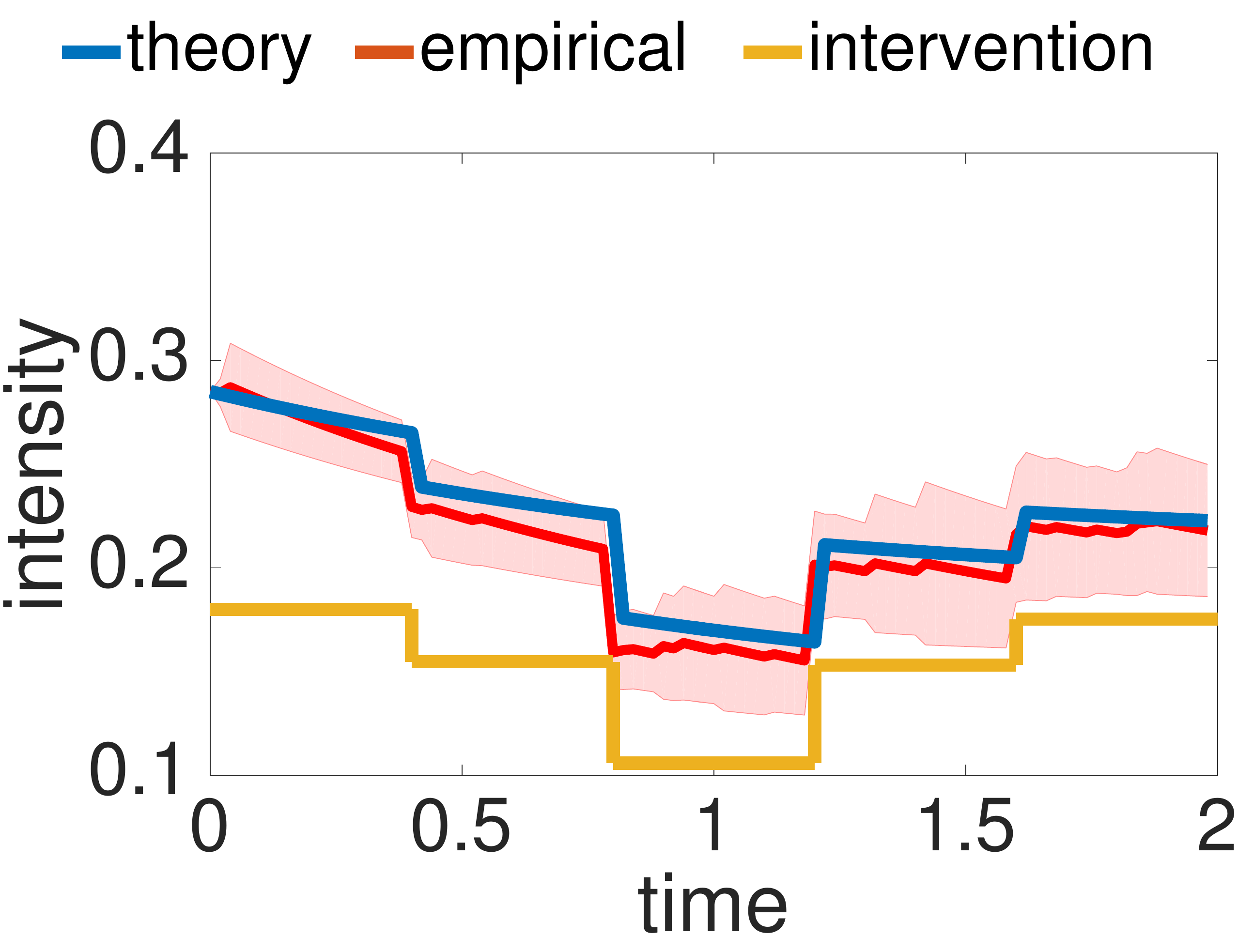} &
          \hspace{-4mm}
          \includegraphics[width=0.33\textwidth]{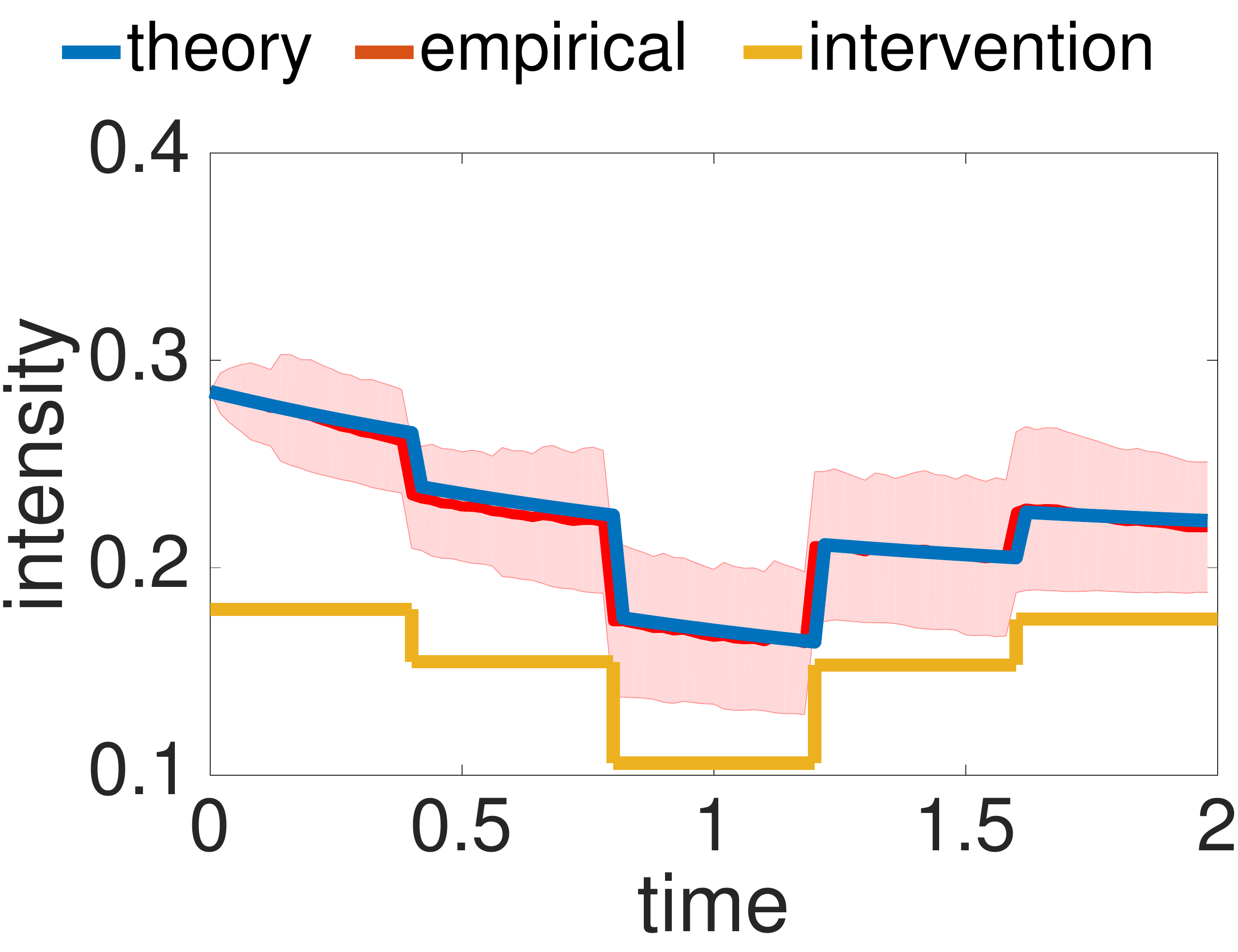} \\
                    \hspace{-4mm}
          \includegraphics[width=0.33\textwidth]{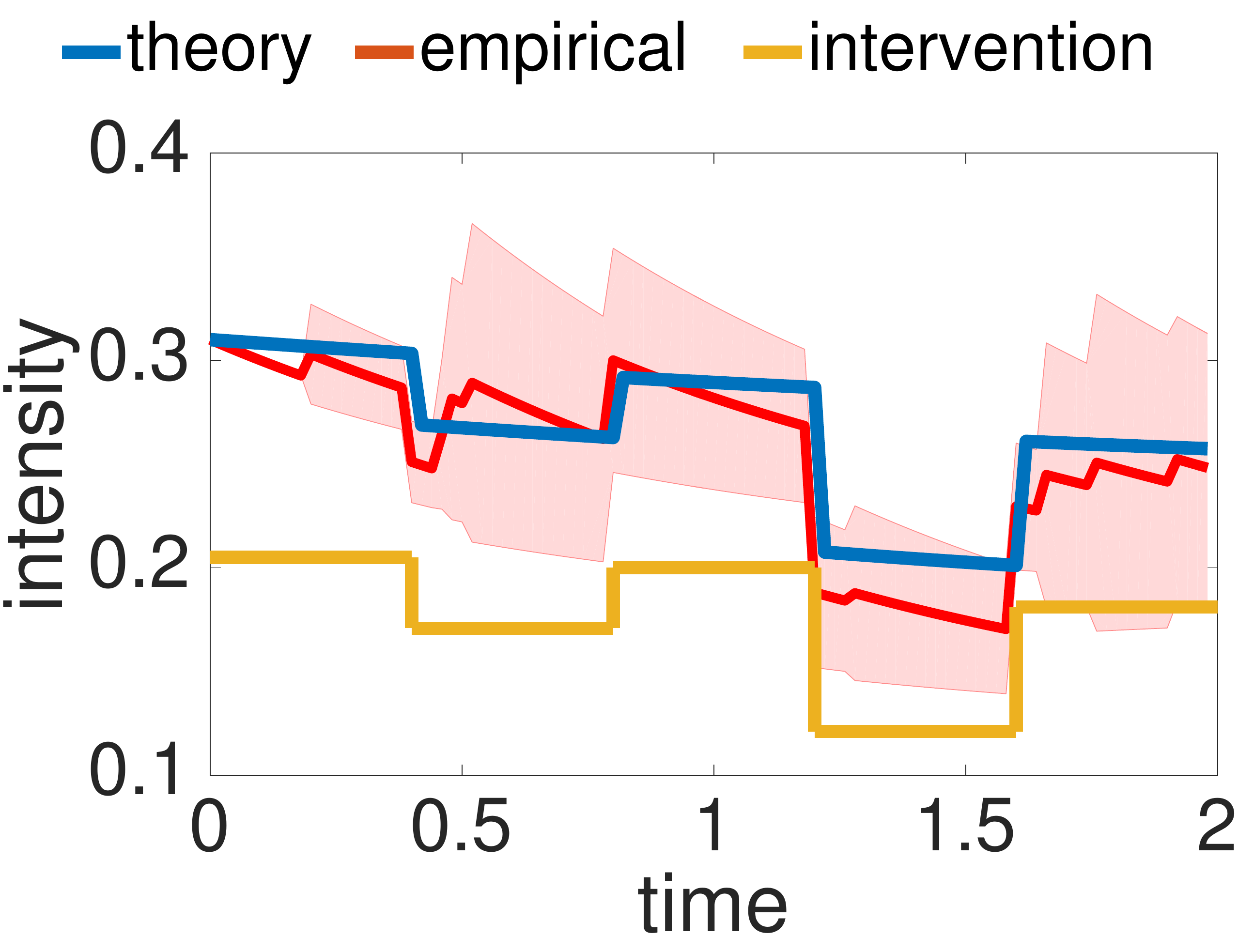} &
          \hspace{-4mm}
          \includegraphics[width=0.33\textwidth]{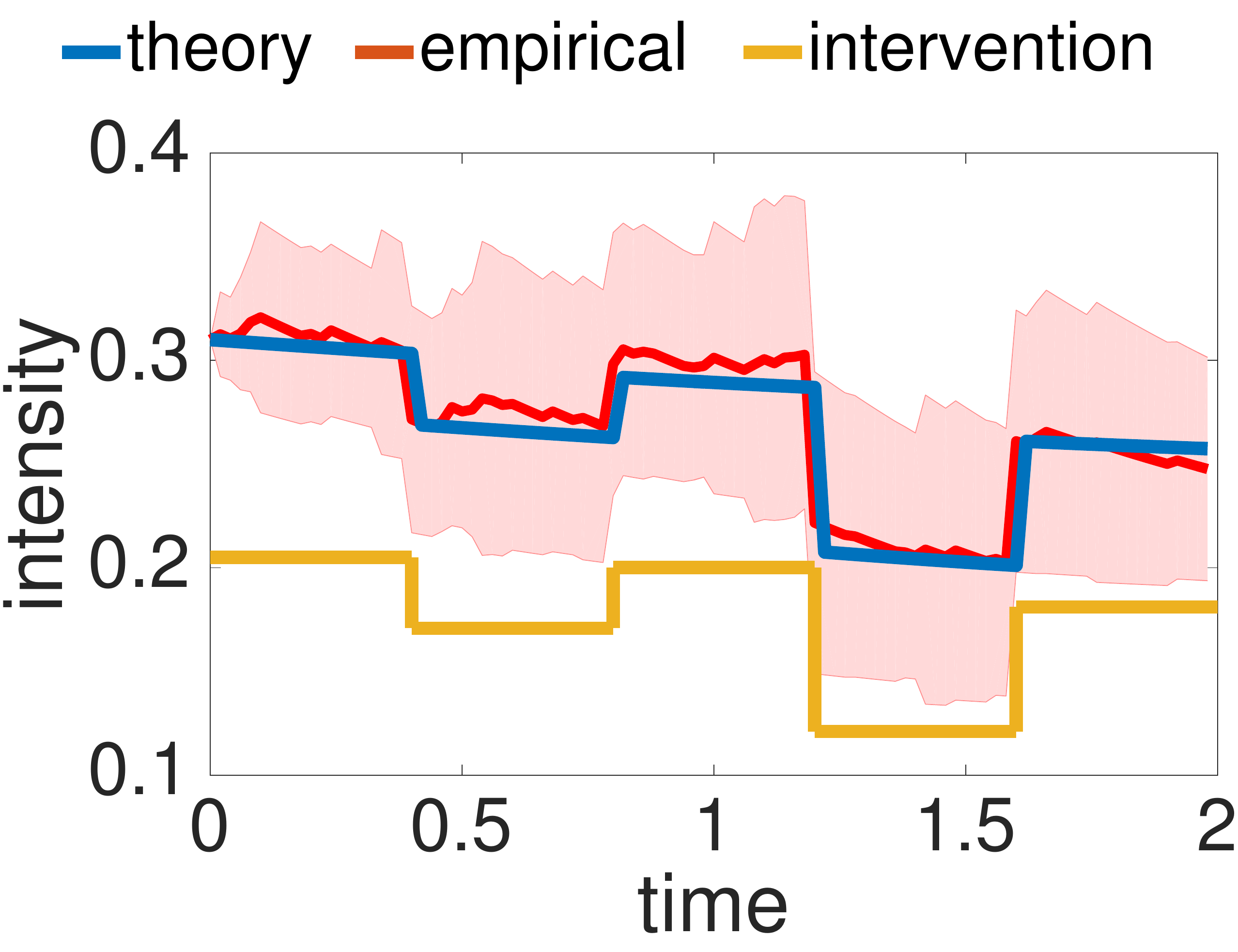} &
          \hspace{-4mm}
          \includegraphics[width=0.33\textwidth]{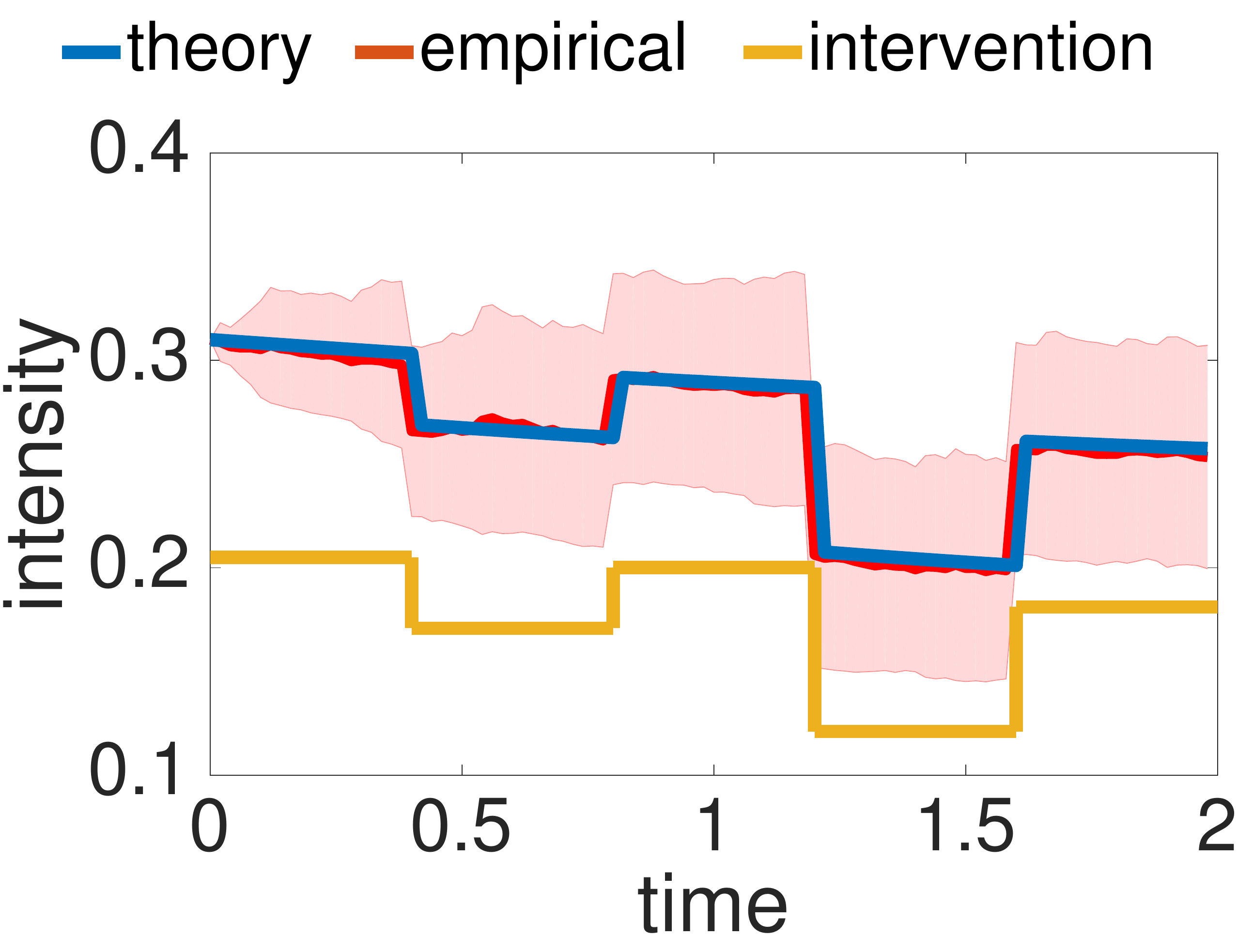}\\
          a) $5$ runs & b) $20$ runs & c) $100$ runs
  \end{tabular}
  \caption{Empirical investigation of theoretical results in Thm. ~\ref{theo:piecewise_constant_average}. blue: theoretical average intensity; red: empirical average intensity and sample standard deviation; orang: piecewise-constant exogenous intensity (interventions)}
  \label{fig:intensity}
\end{figure*}

\begin{figure*}[!t]
  \centering
  \setlength{\tabcolsep}{6pt}
  \begin{tabular}{ccc}
          \hspace{-4mm}
          \includegraphics[width=0.33\textwidth]{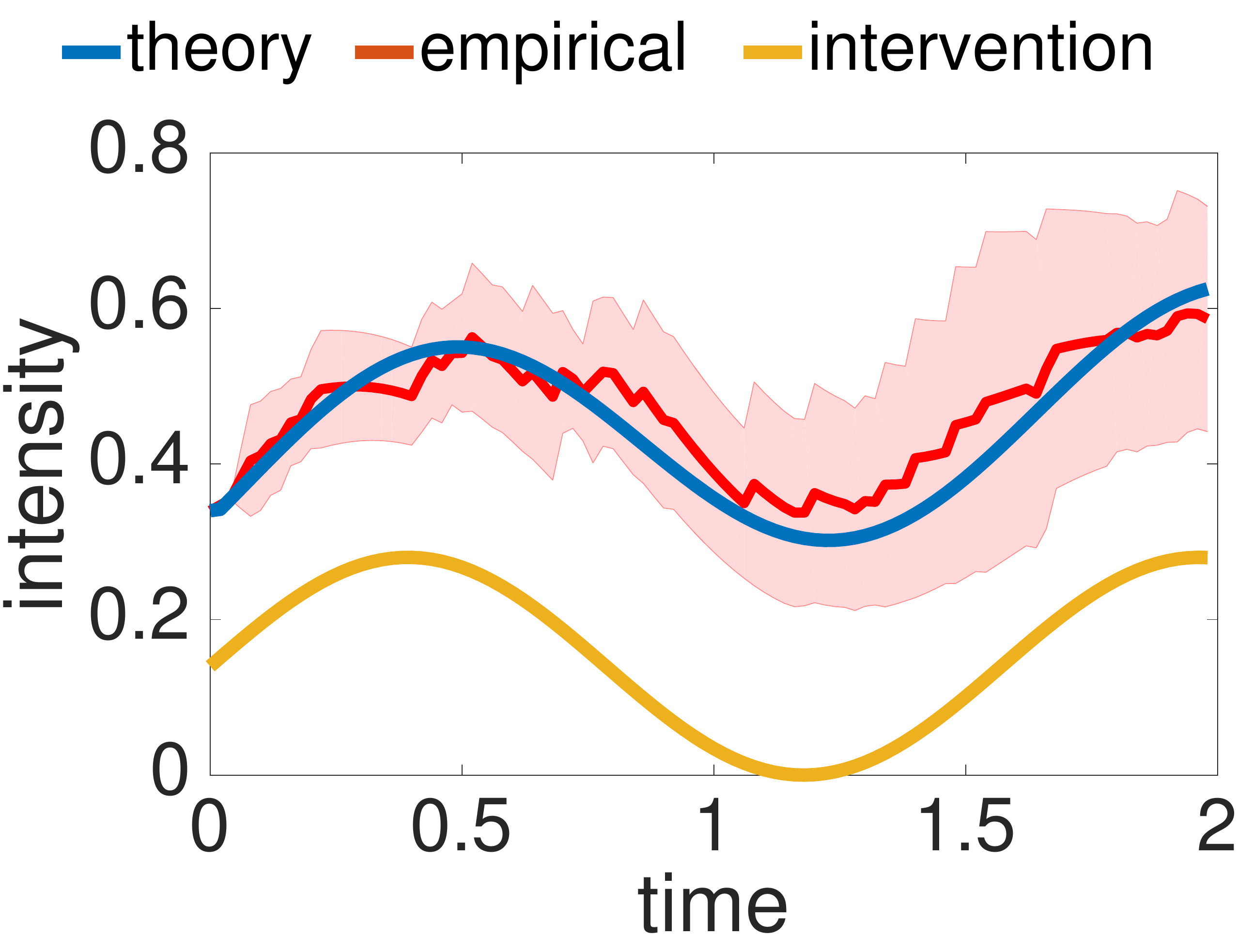} &
          \hspace{-4mm}
          \includegraphics[width=0.33\textwidth]{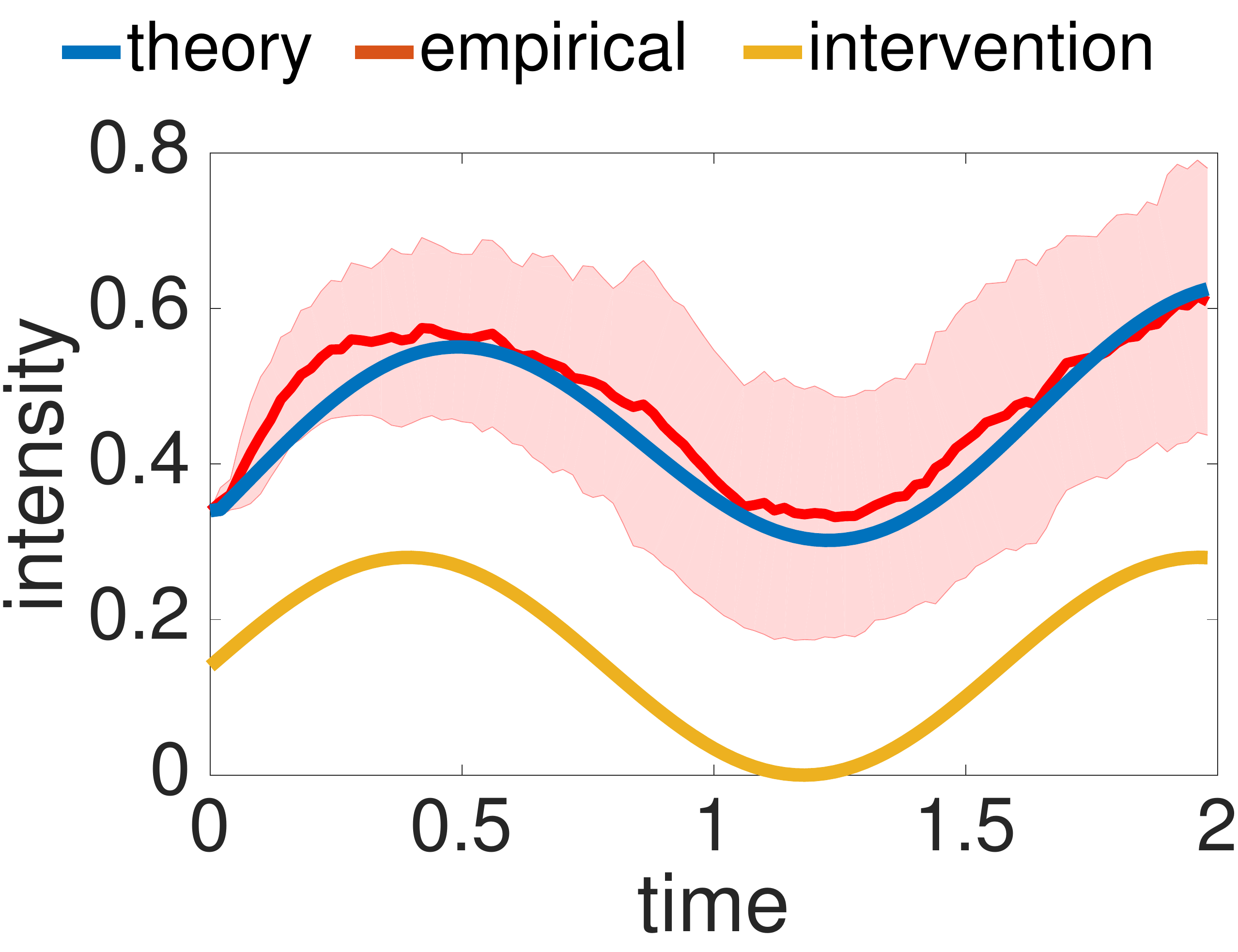} &
          \hspace{-4mm}
          \includegraphics[width=0.33\textwidth]{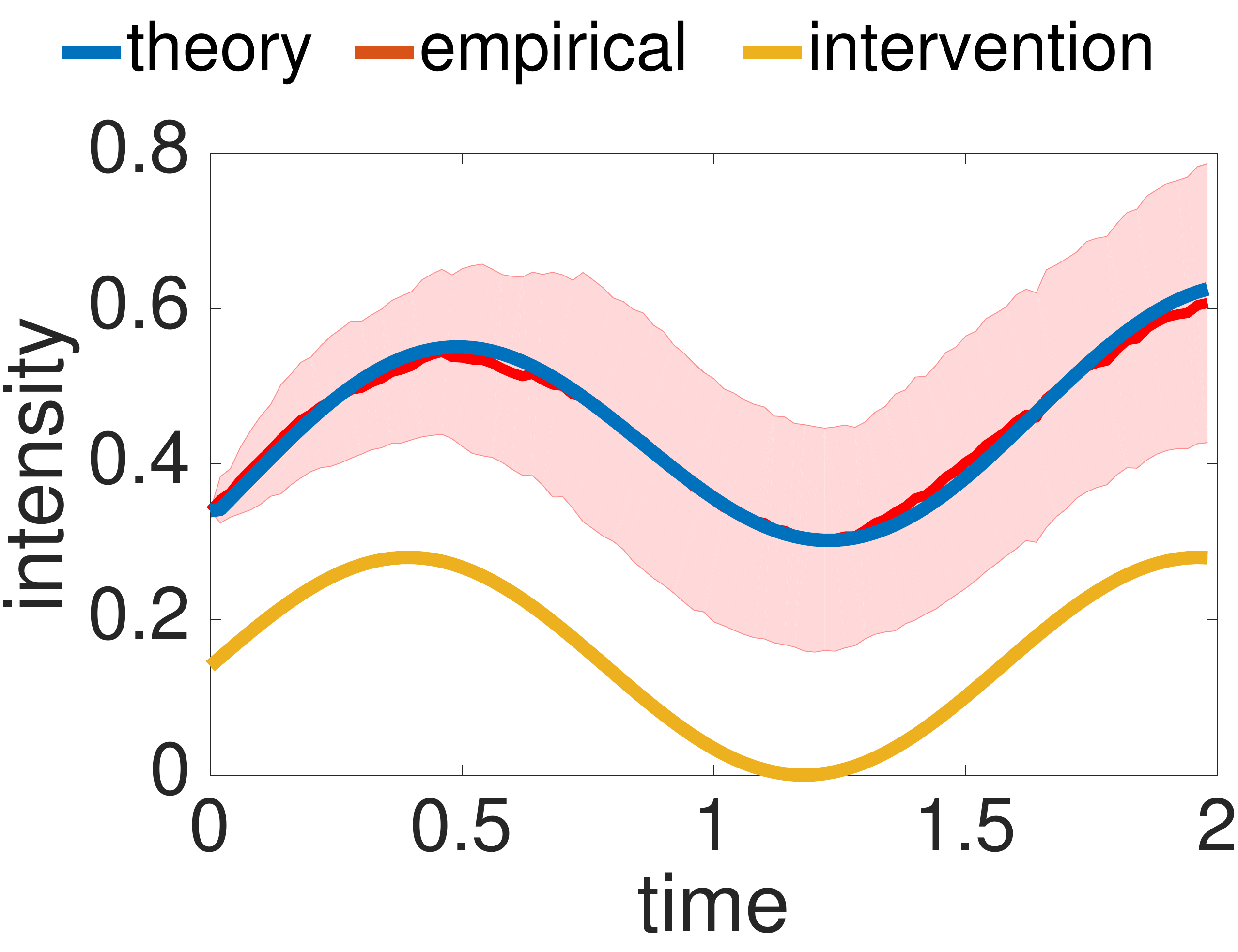} \\
                    \hspace{-4mm}
          \includegraphics[width=0.33\textwidth]{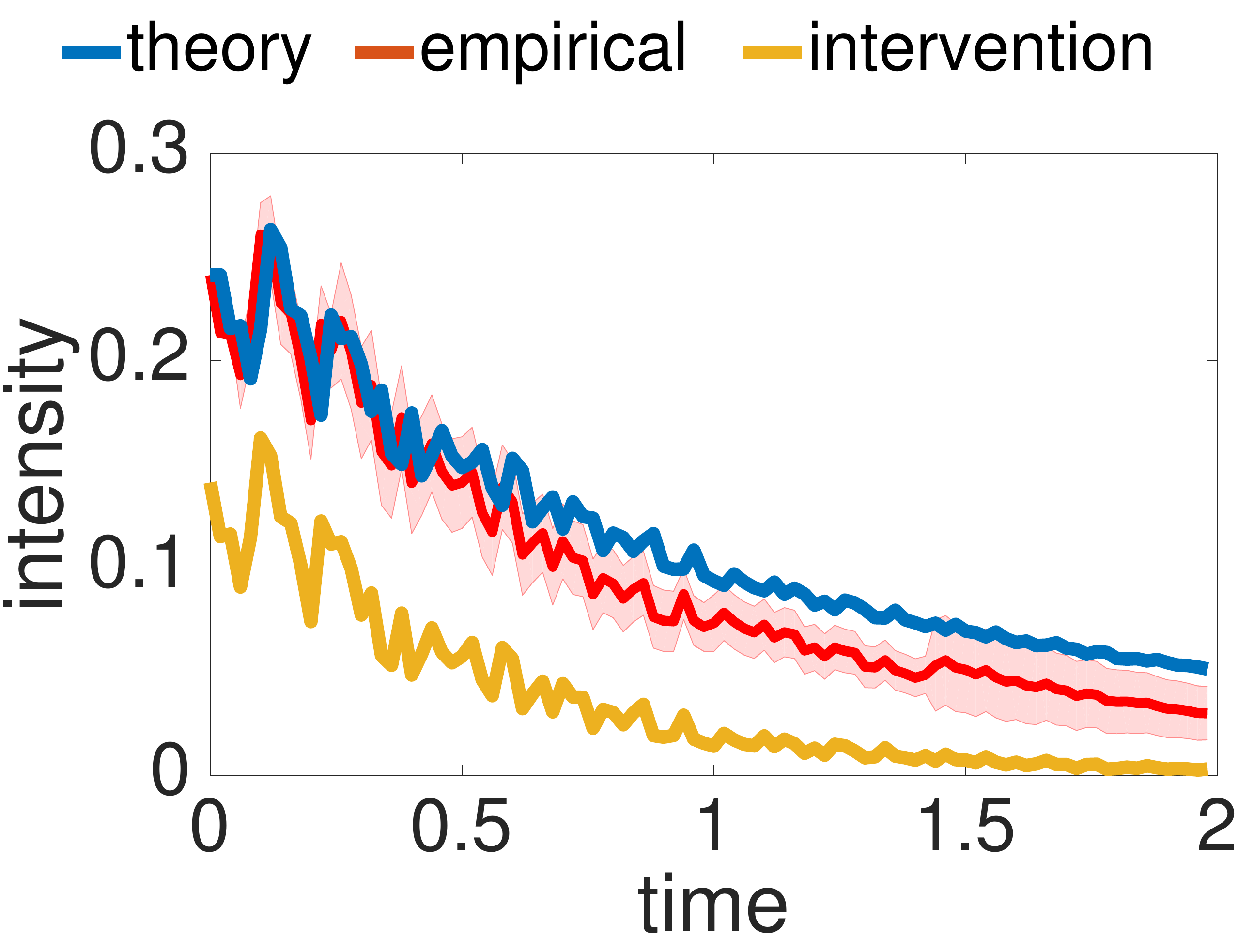} &
          \hspace{-4mm}
          \includegraphics[width=0.33\textwidth]{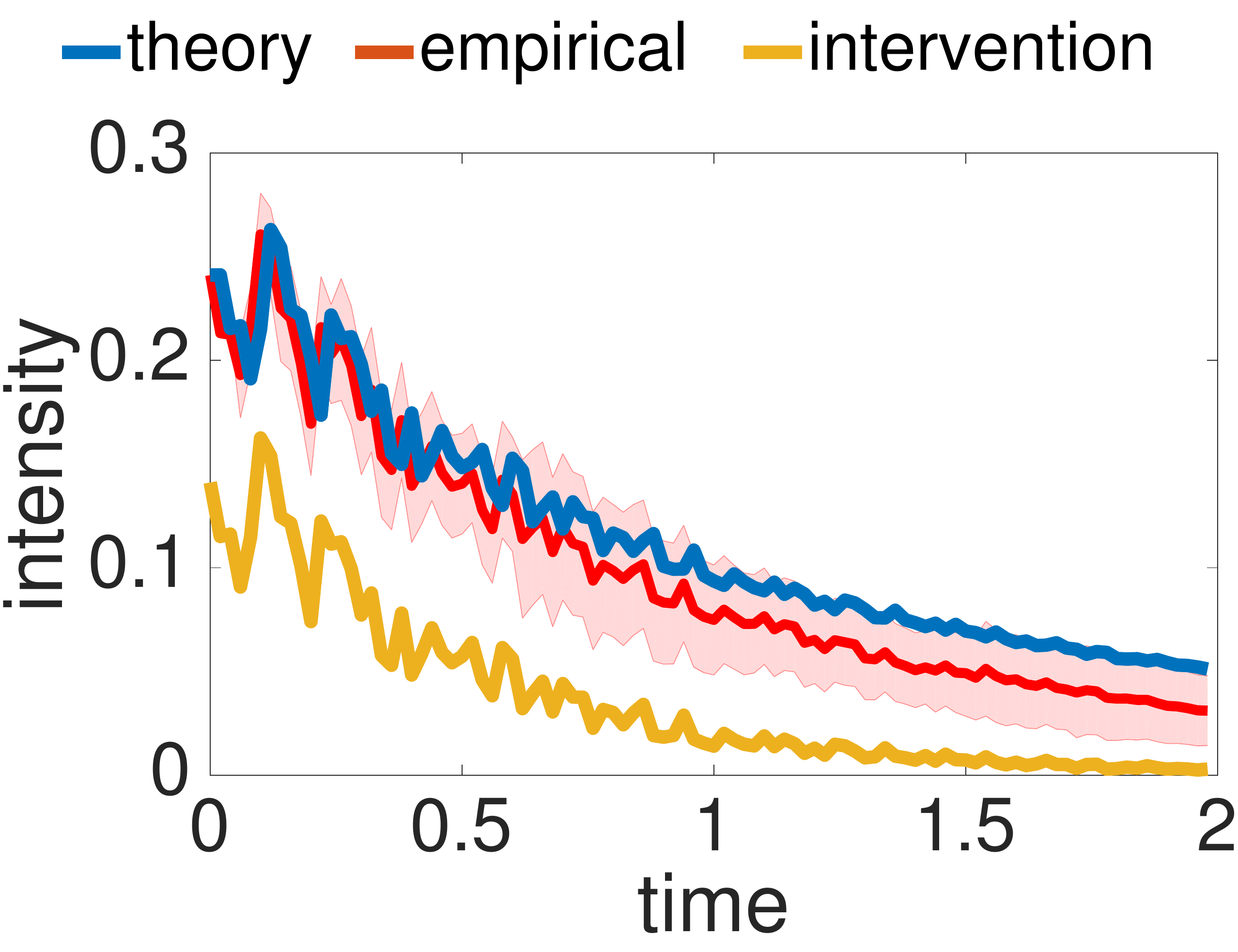} &
          \hspace{-4mm}
          \includegraphics[width=0.33\textwidth]{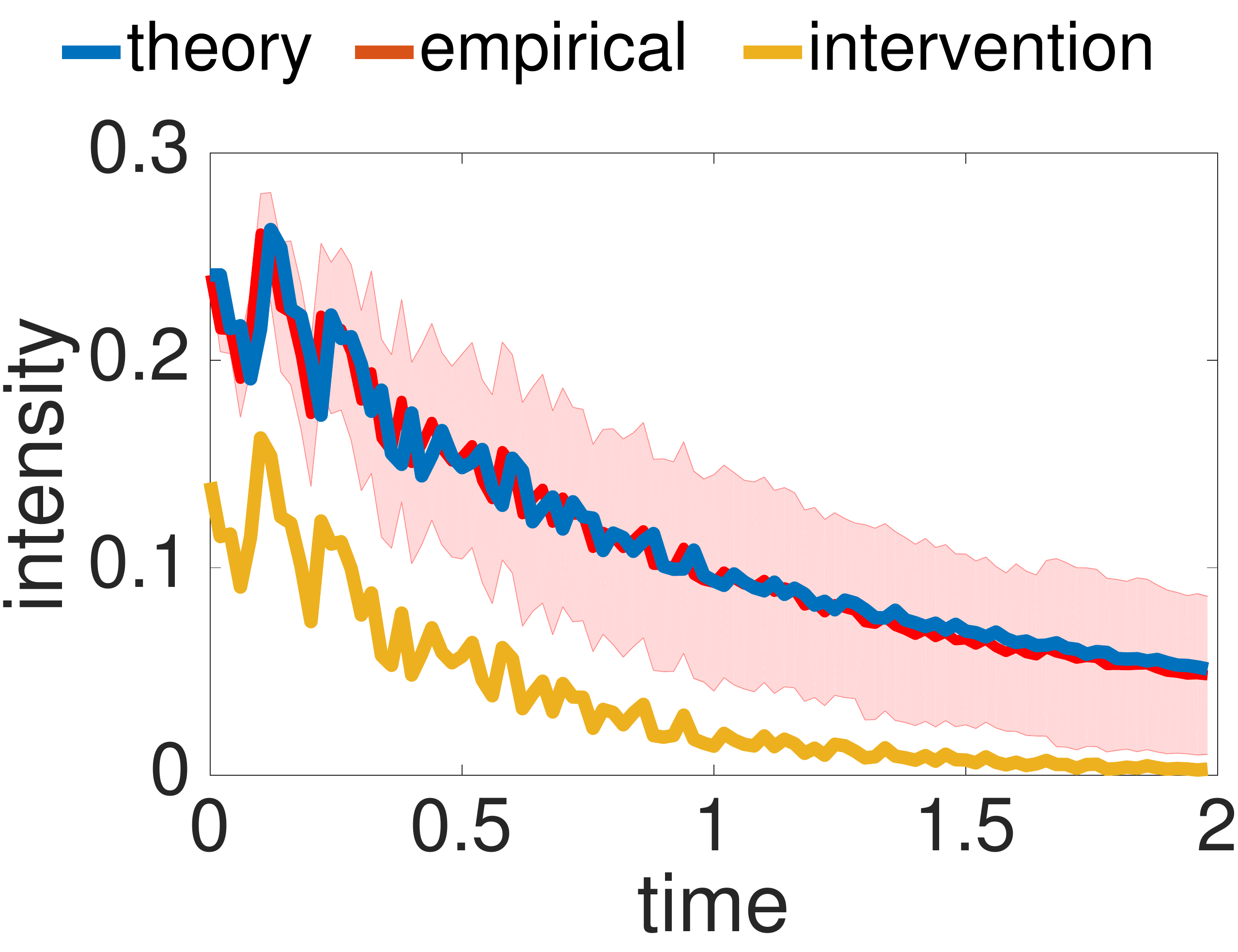}\\
                 \hspace{-4mm}
          \includegraphics[width=0.33\textwidth]{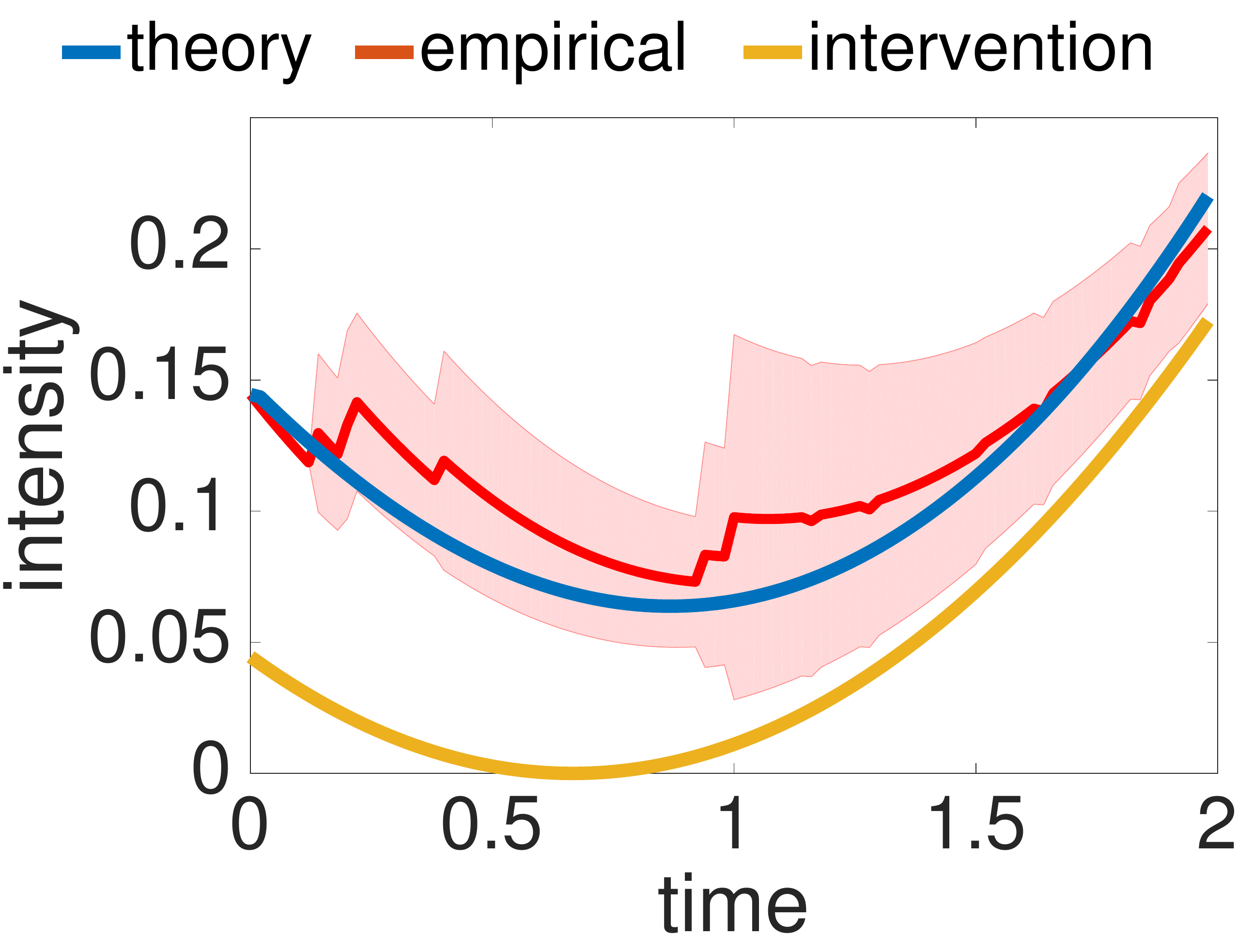} &
          \hspace{-4mm}
          \includegraphics[width=0.33\textwidth]{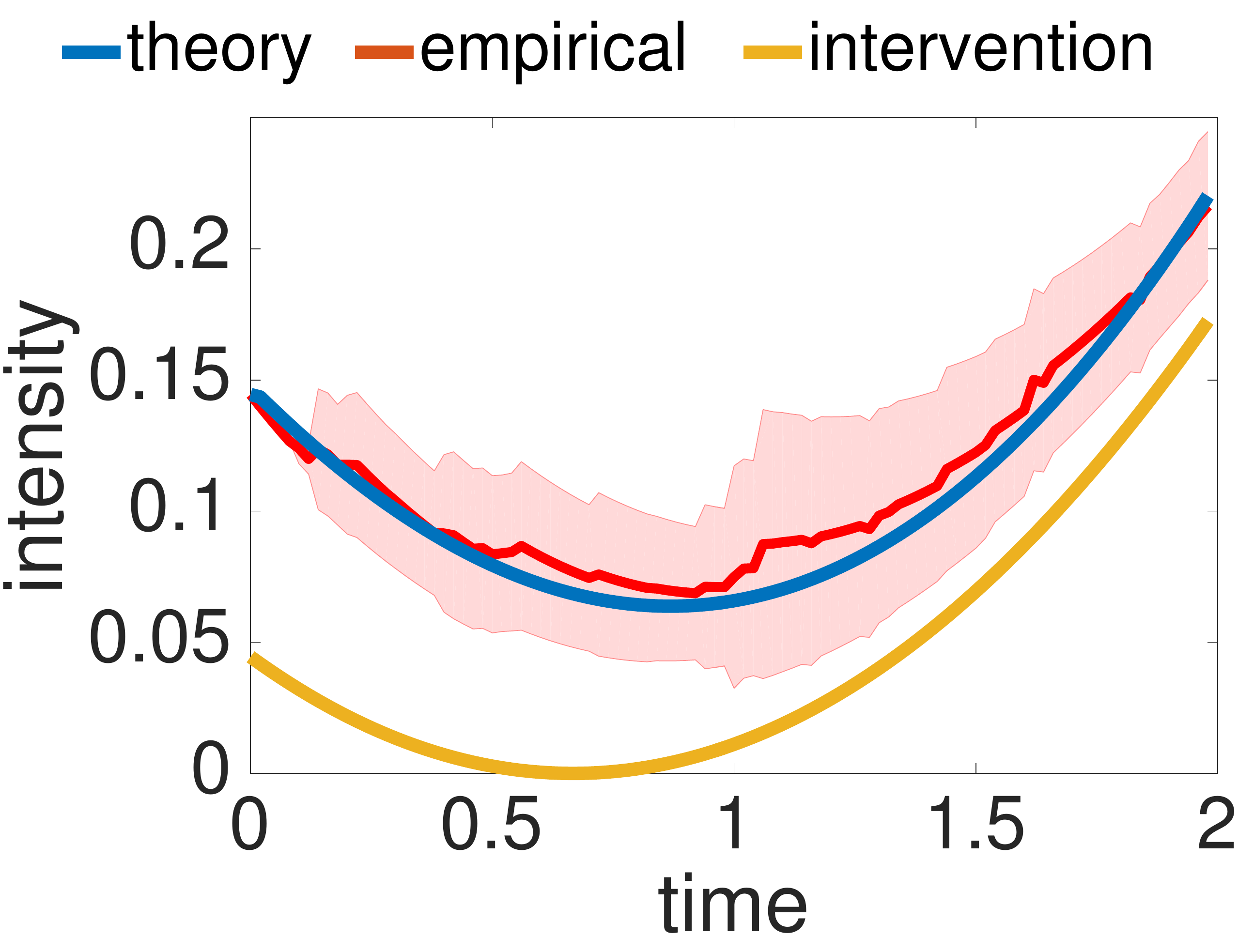} &
          \hspace{-4mm}
          \includegraphics[width=0.33\textwidth]{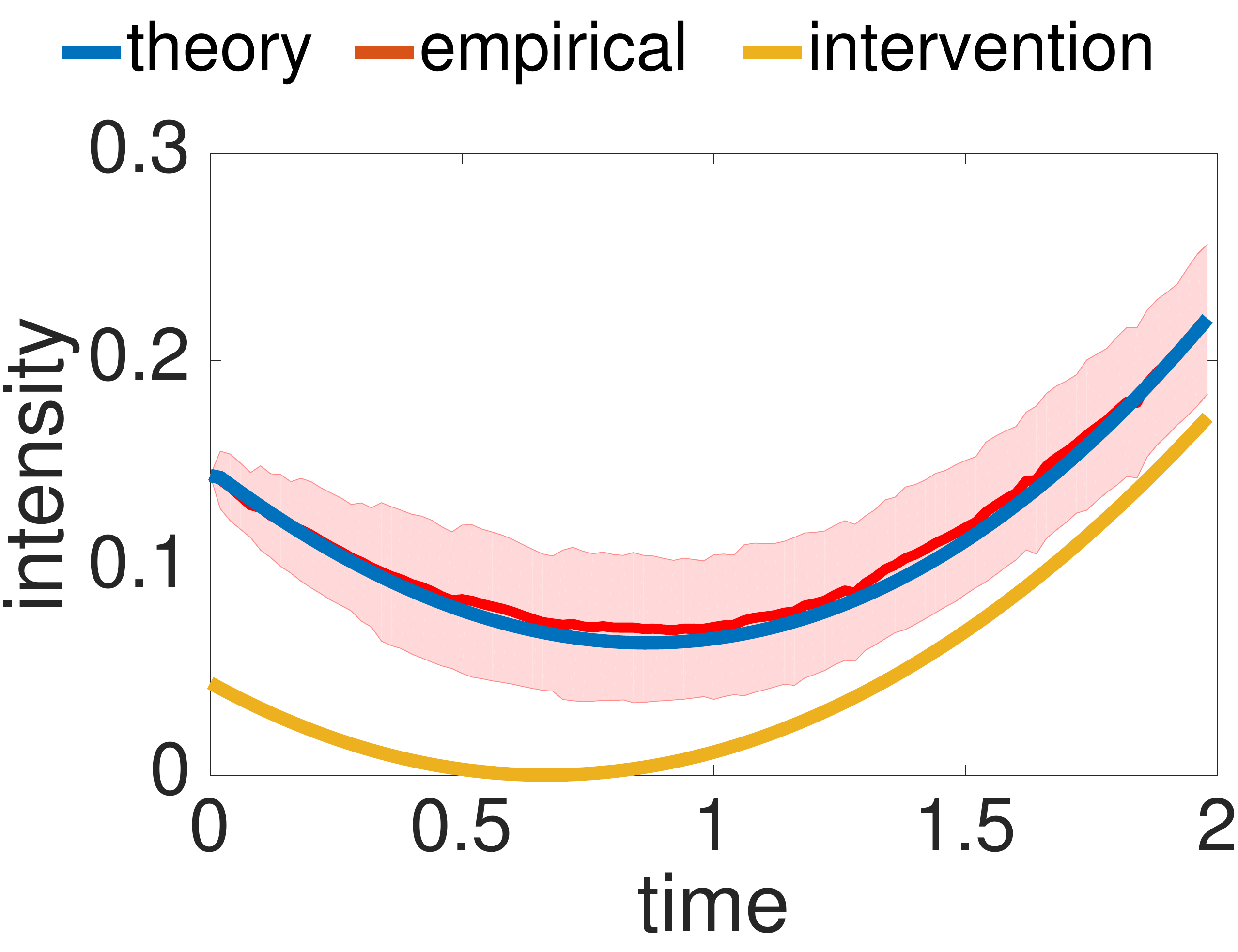} \\

          a) $5$ runs & b) $20$ runs & c) $100$ runs
  \end{tabular}
  \caption{Empirical investigation of theoretical results in Thm. ~\ref{theo:average_general}. blue: theoretical average intensity; red: empirical average intensity and sample standard deviation; orang: general time-varying intensity (interventions)}
  \label{fig:intensity-cont}
\end{figure*}

\section{Temporal Properties}
\label{appen-temporal}
In this section we empirically study the theoretical results of section~\ref{sec:mean}. The empirical mean and standard deviation of the intensity averaged over multiple number of cascades is compared theoretical mean. Besides this, the other purpose of the experiment is to advocate verification process in the synthetic experiments when we used simulation to evaluate the merits of the proposed algorithm and compare to the baselines. In other words, we show that the empirical activity (and hence the average exposure) is very close to its theoretical value and it is justifiable to be used for the comparison.

Fig.~\ref{fig:intensity} demonstrates the activity profile of 3 random users picked in a network of 300 ones simulated 100 times to investigate theorem.~\ref{theo:piecewise_constant_average}. The setting is similar to synthetic experiment in the main paper. Piecewise exogenous intensity (interventions) are picked randomly in $[0.1,0.2]$ with a slight noise. We consider 5 stages for changing the exogenous intensity. The empirical average and standard deviation is compared to theoretical average intensity for 3 different number of runs namely, 5, 20, and 100 times. 
Also, Fig.~\ref{fig:intensity-cont} demonstrate the general case where the exogenous intensity is a time-varying function in theorem. ~\ref{theo:average_general}. We take 3 sample functions to investigate this case; a sinusoidal function; an exponential decaying function with added noise; and a quadratic function.

We observe a couple of interesting facts.
Firstly, it's apparent that by increasing the number of averages the empirical intensity tends to theoretical one very fast. Secondly, as the mean becomes more accurate by increasing the number of cascades the standard deviation increases; e.g., compare the standard deviation in first and third column.
Thirdly, the standard deviation is increasing with time. This is due to the fact that as time passes random elements are aggregated more and this increases variance.


\section{Experiments}
We evaluate our campaigning framework using both simulated and real world data and show that our approach significantly outperforms several baselines.

\subsection{Synthetic data generation}
The network is generated synthetically using varying number of nodes. Initial exogenous intensity is set uniformly at random, $u_m^i \sim \Ucal[0, 0.1]$. Endogenous intensity coefficients (influence matrix elements) are set similarly, $a_{ij} \sim \Ucal[0, 0.1]$. To mimic sparse real networks half of the elements are set to 0 randomly.
The matrix is scaled appropriately such that the spectral radius of the coefficient matrix is a random number smaller than one and the stability of the process is ensured.

The upper bound for the intervention intensity is set randomly from interval $\alpha^i \sim \Ucal[0,0.1]$. The price of each person is set $c_m^i =1$, and the total budget at stage $m$ is randomly generated as $C_m \sim (n/10) \Ucal[0, 0.1]$. 
For the capped exposure maximization case the upper bound is set $\alpha_m^i \sim \Ucal[0, 1]$ and target in least-squares exposure shaping is set similarly $v_m^i \sim (n/10) \Ucal[0, 1]$. Furthermore, the shaping matrix $D$ is set to $I$.
In all the synthetic experiments $\omega = 0.01$ which roughly means loosing 63 \% of influence after 100 units of time (minutes, hours, etc). Furthermore, the exposing matrix is set to the unweighted adjacency matrix \ie, $B_{ij} = 1$ if  and only if $A_{ij} \geq 10^{-4}$. This way the results are reported in terms of the exact number of exposures and are easily interpretable. In general applications any $B$ can be used for example using the influence matrix $A$ yields to a wighted exposure count. In all the synthetic and real experiments the above settings are assumed unless it is explicitly mentioned.

\subsection{Real data description and network inference}
In real data, we use a temporal resolution of one hour and selected the bandwidth $\omega=0.001$ by cross validation. Roughly speaking, it corresponds to loosing almost 50 \% of the initial influence after 1 month. The upper bound for intervention intensity is set uniformly at random with mean equal to empirical intensity learned from data. The upper bound for the cap and target exposure are set  similarly.
For the 10 pairs of cascades we used first 3 months of data to learn the network parameters.
We then drop the exogenous intensity $\mu$, and keep the influence network parameters $A$. By fixing $A$ we use the next 6 months of data to learn the exogenous intensity of sites in the two cascades at each of the $M$ stages and name them $\mu^{c_1}_m$ and $\mu^{c_2}_m$. Given $A$ we find the optimal intervention intensity $u^{opt}_m$ stage by stage, for each of the three exposure shaping tasks assuming $\mu=0$. Then, our prediction is: cascade $c1$ will reach a better objective value at stage $m$ if $dist(u^{opt}_m, \mu^{c_1}_m) < dist(u^{opt}_m, \mu^{c_2}_m)$ and vice versa measured by cosine similarity. The prediction accuracy is then reported as a performance measure.

\subsection{Baselines}
In this section, we describe several baselines we compare our approach.
Most often, these baseline methods utilize a property to prioritize users for budget assignment. 

For the capped exposure maximization problem, we consider the following four baselines: 
 \begin{itemize}
 \item {\bf OPL}: It allocates the budget according to the solution to the dynamic programming in an \emph{open loop} setting, \ie, the decisions on the allocation policy are made once and for all at the initial intervention points at initial time $t=0$. This is very important baseline to which comparison quantify the so called \emph{value of information} in the context of dynamic programming and optimal control. As the name suggests it indicates how much knowing what happened so far helps making decisions for future. For the minimum and capped exposure maximization  creasing $n$ the objective function is normalized by the size of network.

\item {\bf RND}: It assigns a random point in the convex space of feasible solutions.

\item {\bf PRK}: At each stage it subtracts the previous state ($x_m^i$) from the cap ($\alpha_m^i$) and multiply by the page rank score of the the node ($r^i$) computed with damping factor $0.85$ and allocates the budget proportional to this value, \ie, $u_m^i \propto \max \rbr{(\alpha_m^i-x_m^i) r^i, 0}$. The proposed solution is then projected to the feasible set of actions in that stage and the extra amount is redistributed similarly. The process is iterated until all the budget are allocated.
This baseline assumes that more central users can leverage the total activity, therefore, assigns the budget dynamically to the more connected users proportional to their page rank score.

\item {\bf WEI}:
 This baseline uses sum of out-going influence  ($\, {q}^i = \sum_j a_{ji}$) as a measure of centrality of users. Similar to the previous one it assigns budget dynamically to the users proportionally to 
 $u_m^i \propto \max \rbr{(\alpha_m^i-x_m^i)\, {q}^i, 0}$.
This heuristic allows us to understand the effect of considering the whole network and the propagation layout with respect to only consider the direct (out-going) influence.

 \end{itemize}

For the max-min exposure shaping problem, we implement the following four baselines: 

\begin{itemize}
\item {\bf OPL}: Similar to the previous objective it represents the open loop solution.

\item {\bf RND}: Similar to the previous objective it allocates the budget randomly within the feasible set.
\item {\bf WFL}: It takes a \emph{water filling} approach. It sorts the users in ascending order of the exposure in the previous stage. Then allocates budget to the first users until the the summation of its previous exposure and the allocated budget reaches the second lowest value or it violates a constraint. Then, assigns the budget to these two until they reach the third user with lowest exposure or a constraint is violated. This process is continued until the budget is allocated. 

\item {\bf PRP}: It allocates the budget inversely proportional to the the exposure at the previous stage.
 \end{itemize}

For the least-square exposure shaping problem, we compare our method with four baselines:
\begin{itemize}
\item {\bf OPL}: Similar to the previous objective it represents the open loop solution.

\item {\bf RND}: Similar to the previous objective it allocates the budget randomly within the feasible set.

 \item {\bf GRD}: It finds the difference between the exposure at previous stage ($x_m^i$ and the target from $v$ and sorts them decreasingly. Then, allocates budget one at a time until a constraint is violated. It iterates over the users until the budget is fully allocated.
 
 \item {\bf REL}: Similar to the above finds the difference from the target but allocates the budget proportionally, \ie, $u_m^i \propto \max \rbr{(v^i-x_m^i), 0}$ for all users. If one allocation violates a constraint the extra amount is reallocated in the same manner.
 
 \end{itemize}

\begin{figure*}[!t]
  \centering
  \setlength{\tabcolsep}{6pt}
  \begin{tabular}{ccc}
          \hspace{-3mm}
          \includegraphics[width=0.27\textwidth]{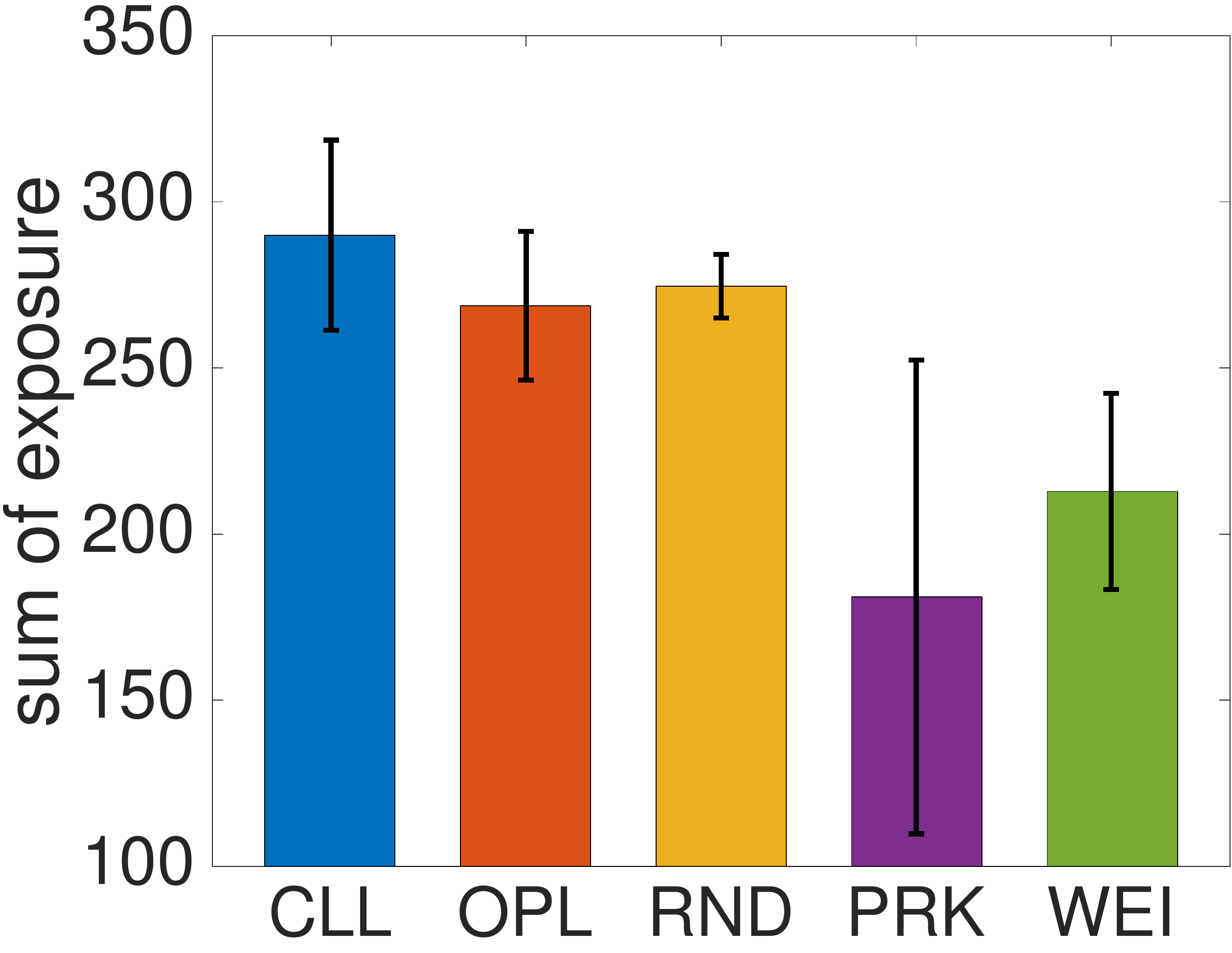} &
          \hspace{-3mm}
          \includegraphics[width=0.27\textwidth]{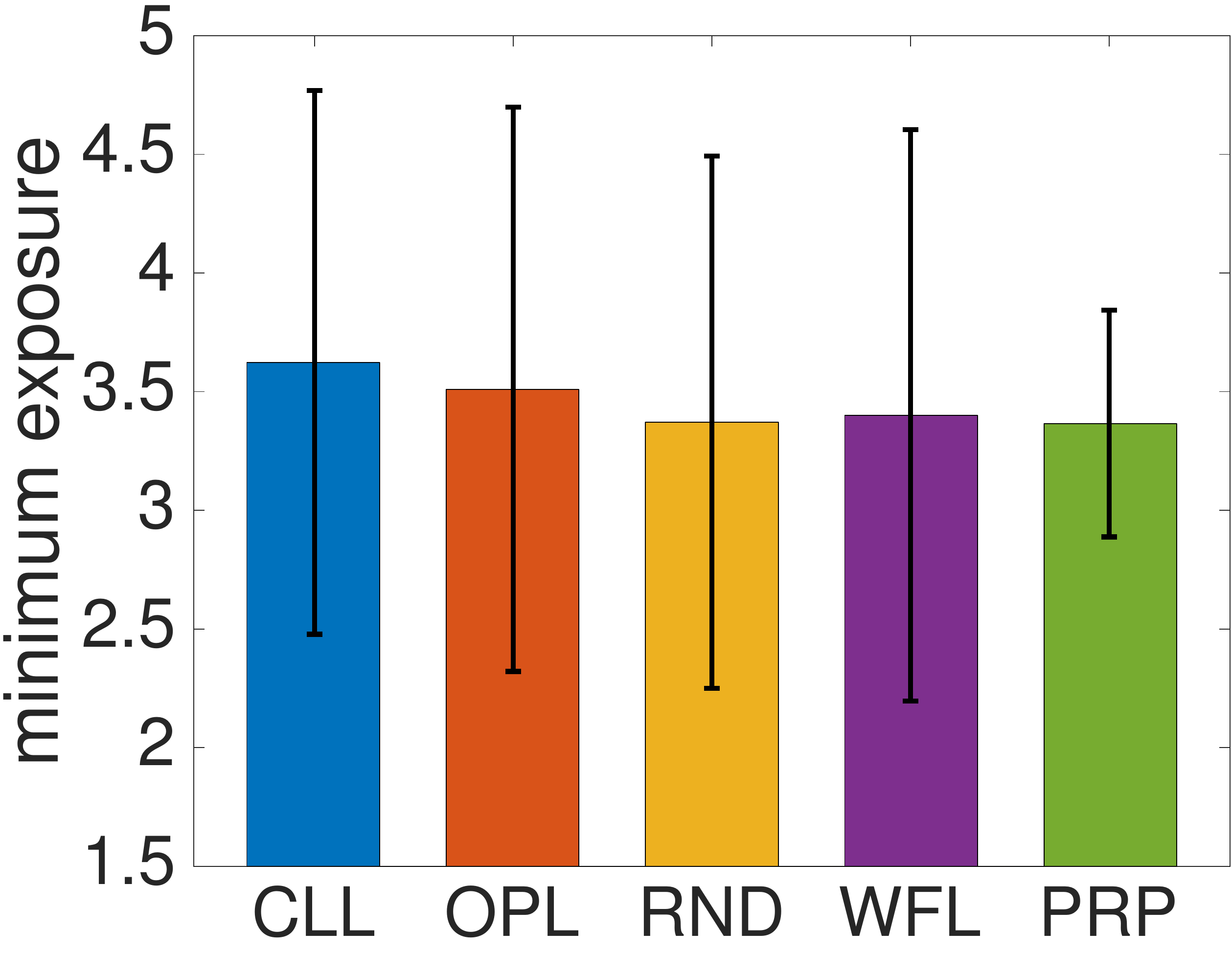} &
          \hspace{-3mm}
          \includegraphics[width=0.27\textwidth]{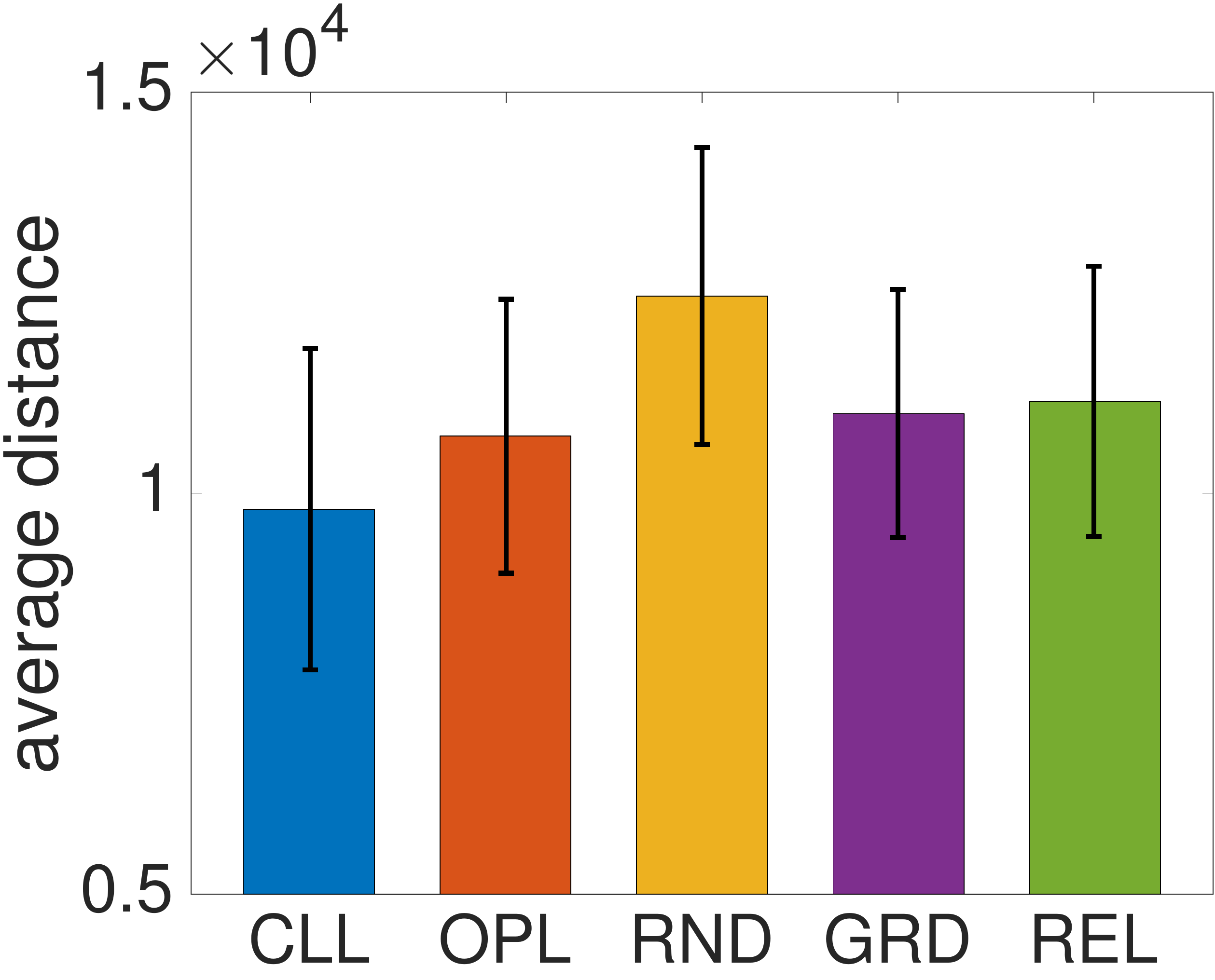} \\
          \hspace{-2mm}
          a) Capped maximization &  
          \hspace{-2mm} 
          b) Minimum maximization & 
          \hspace{-2mm}
          c) Least-squares shaping
  \end{tabular}
  \caption{The objective on simulated events and synthetic network; $n=300$, $M=6$, $T=40$}
  \label{fig:synth-results}
\end{figure*}

\subsection{Campaigning results on synthetic networks}
In this section, we experiment with a synthetic network of $300$ nodes. 
We focus on three tasks: capped exposure maximization, minimax exposure shaping, and least square exposure shaping. To compare the methods we simulate the network with the prescribed intervention intensity and compute the objective function based on the events happened during the simulation. The mean and standard deviation of the objective function out of 10 runs are reported.

Fig.~\ref{fig:synth-results} summarizes the performance of the proposed algorithm (CLL) and 4 other baselines on different campaigning tasks.
For {\bf CEM}, our approach consistently outperforms the others by at least 10. This means it exposes each user to the campaign at least 10 times more than the rest consuming the same budget and within the same constraints.
The extra 20 units of exposures of over OPL or value of information shows how much we gain by incorporating a dynamic closed-loop solution as opposed to open-loop one-time optimization over all stages. 
For {\bf MEM}, the proposed method outperforms the others by a smaller margin, however, the 0.1 exposure difference with the second best method is not trifling. This is expected as lifting the minimum exposure is a difficult task~\cite{farajtabar2014activity}. 
For {\bf LES}, results demonstrate the superiority of CLL by a large margin. The $10^3$ difference with the second best algorithm aggregated over 6 stages roughly is translated to $\sqrt{10^3/6} \sim 13$ difference in the number of exposures per user. Given the heterogeneity of the network activity and target shape, this is a significant improvement over the baselines.
Appendix \ref{appen-synth} includes further results on varying number of nodes, number of stages, and duration of each stage.

\begin{figure}[!t]
  \centering
  \begin{tabular}{cccc}
          \hspace{-3mm}
          { \footnotesize \rotatebox{90}{Capped Maximization}} &
          \hspace{-1mm}
          \includegraphics[width=0.305\textwidth]{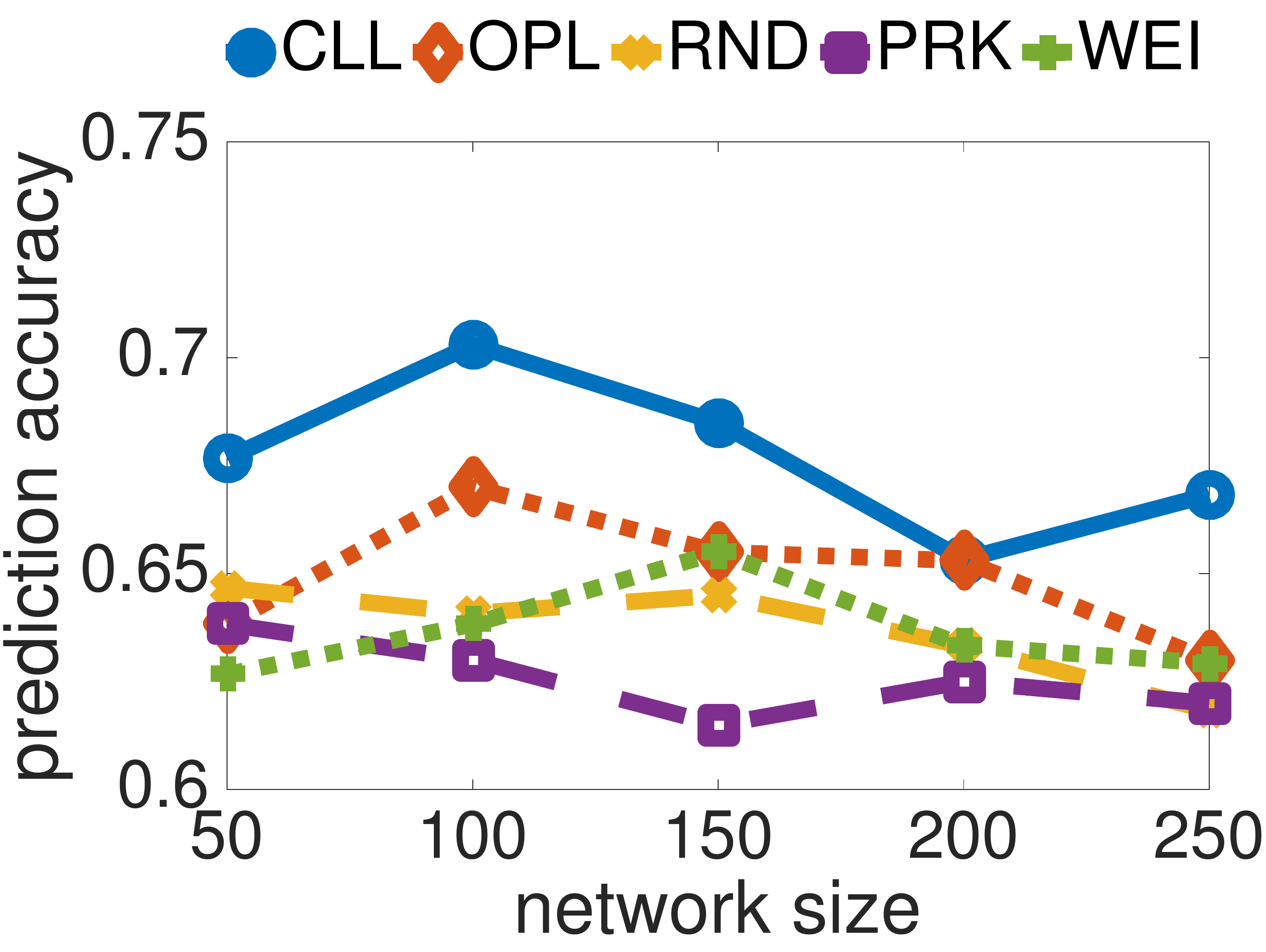} &
          \hspace{-3mm}
          \includegraphics[width=0.305\textwidth]{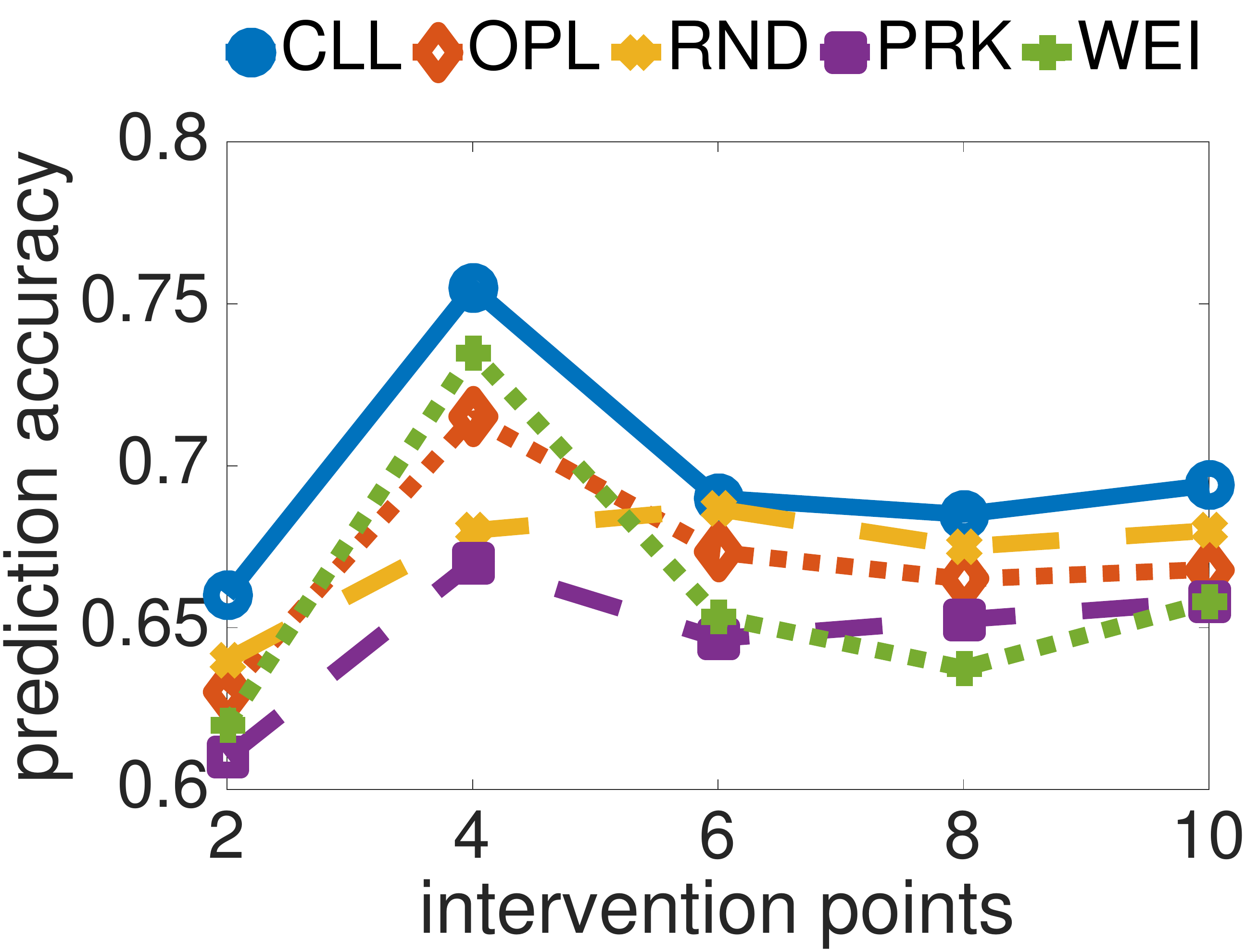} &
          \hspace{-3mm}
          \includegraphics[width=0.305\textwidth]{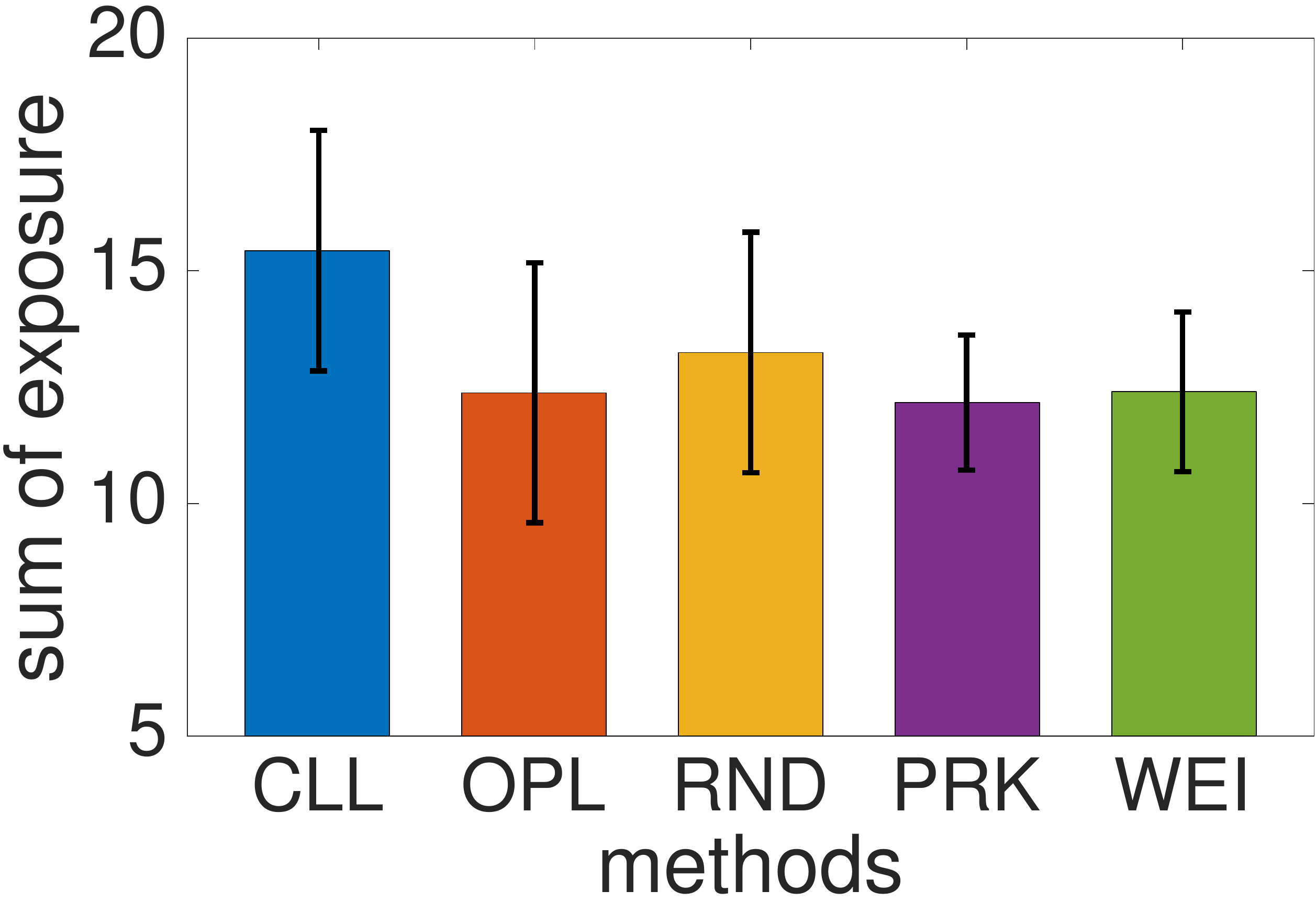}  \\
          \hspace{-3mm}
          { \footnotesize \rotatebox{90}{Minimum Maximization}} &
          \hspace{-1mm}
          \includegraphics[width=0.305\textwidth]{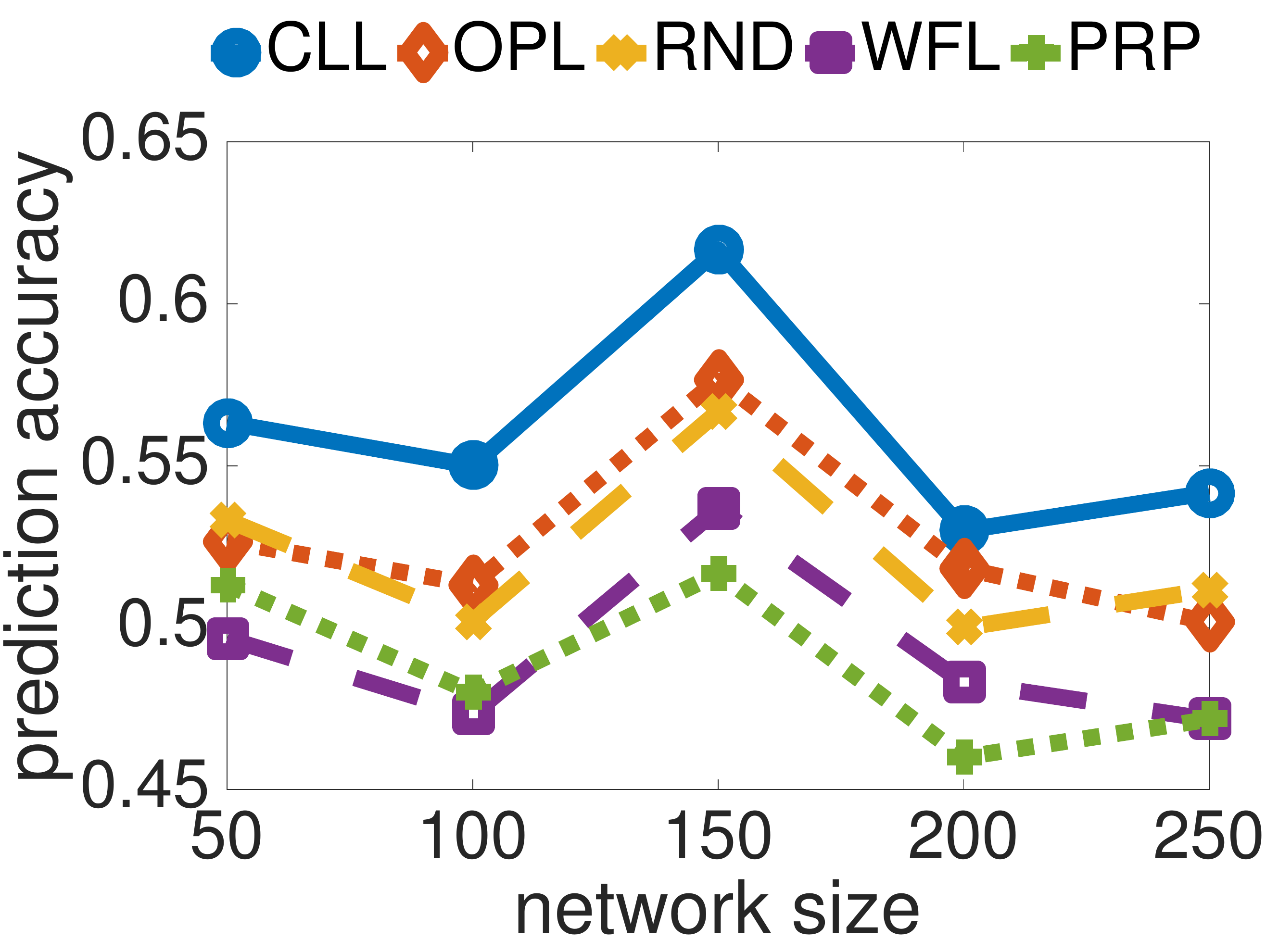} &
          \hspace{-3mm}
          \includegraphics[width=0.305\textwidth]{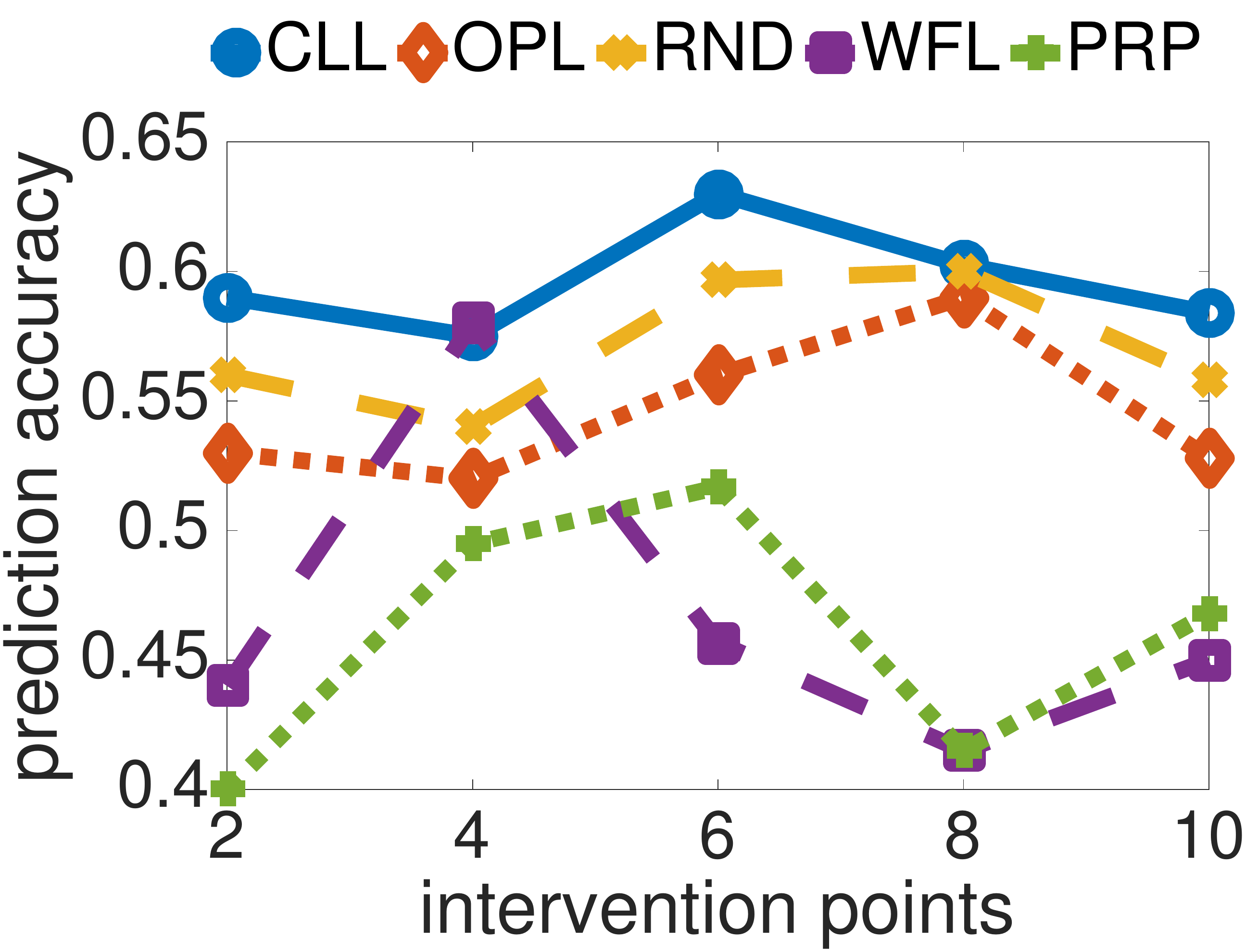} &
          \hspace{-3mm}
          \includegraphics[width=0.305\textwidth]{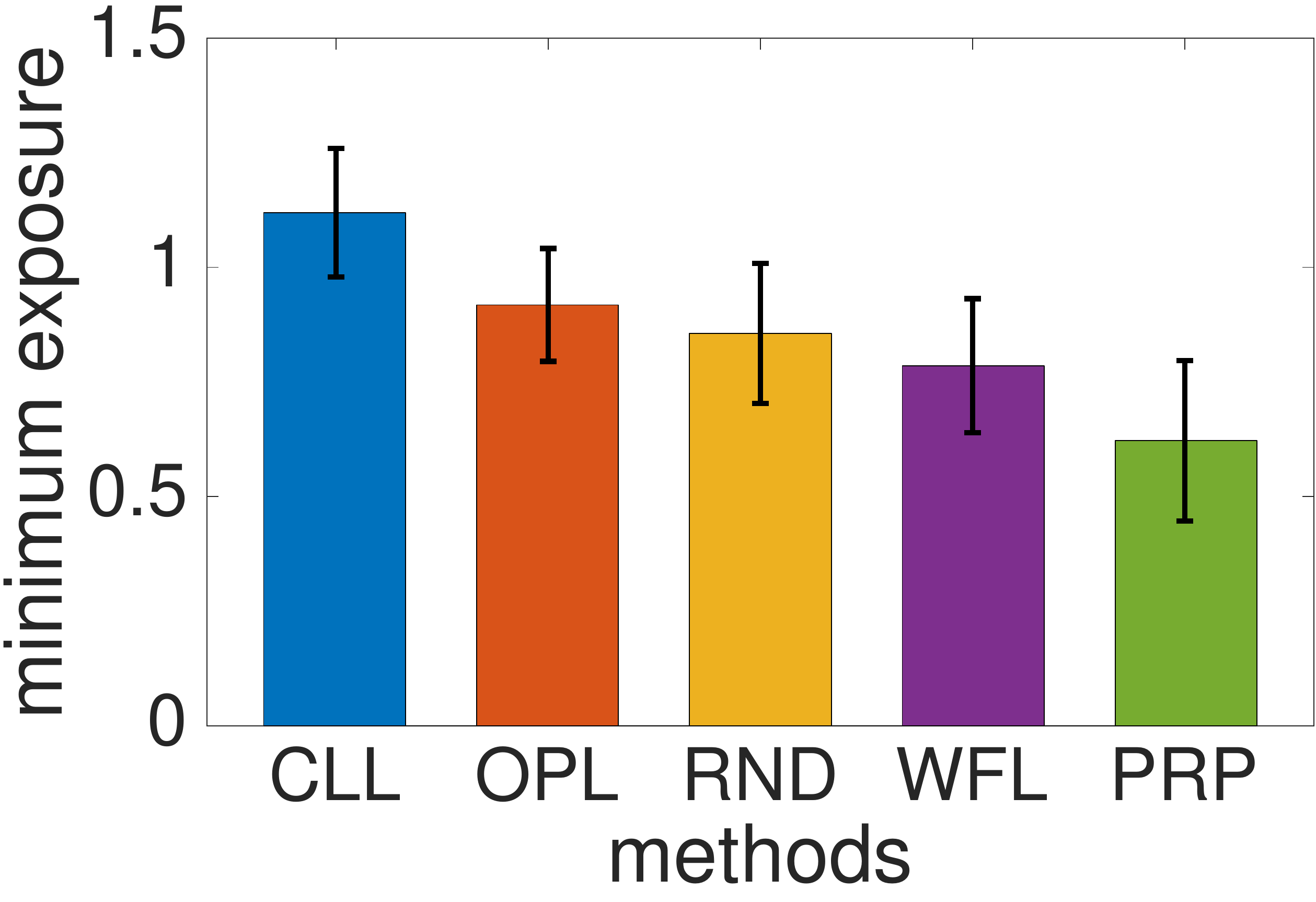} \\
           \hspace{-3mm}
          { \footnotesize \rotatebox{90}{Least-squares Shaping}} &
           \hspace{-1mm}
          \includegraphics[width=0.305\textwidth]{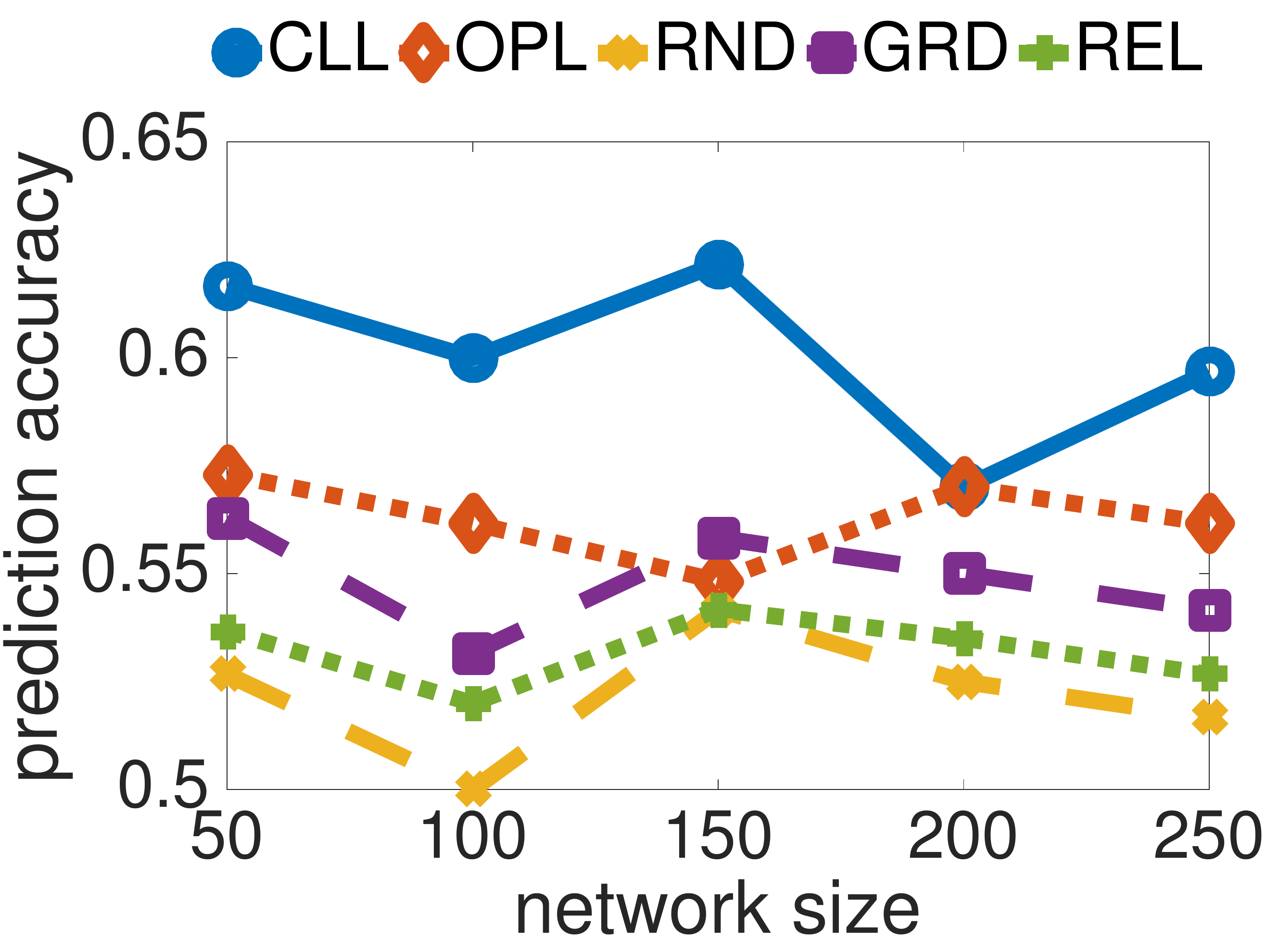} &
          \hspace{-3mm}
          \includegraphics[width=0.305\textwidth]{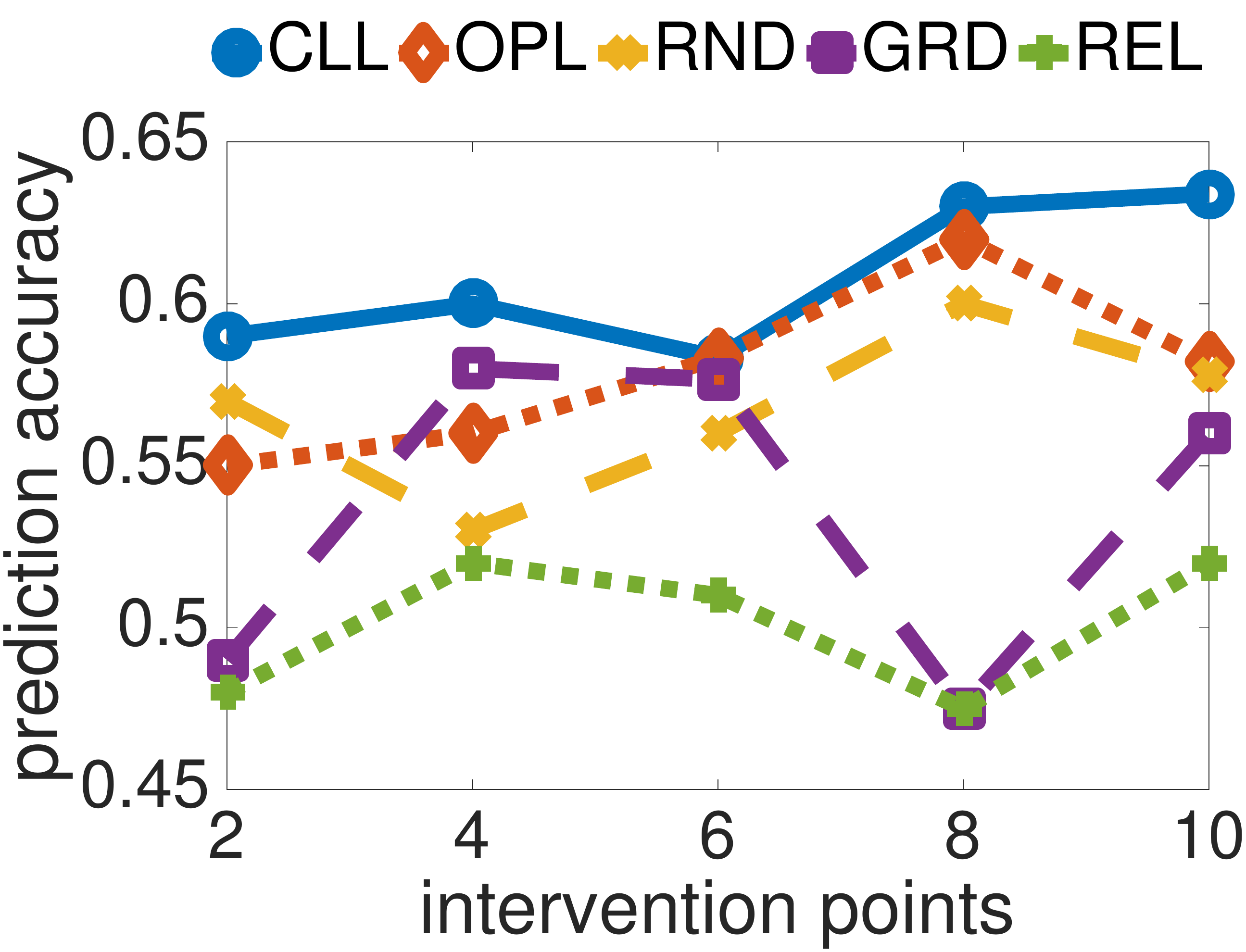} &
          \hspace{-3mm}
          \includegraphics[width=0.305\textwidth]{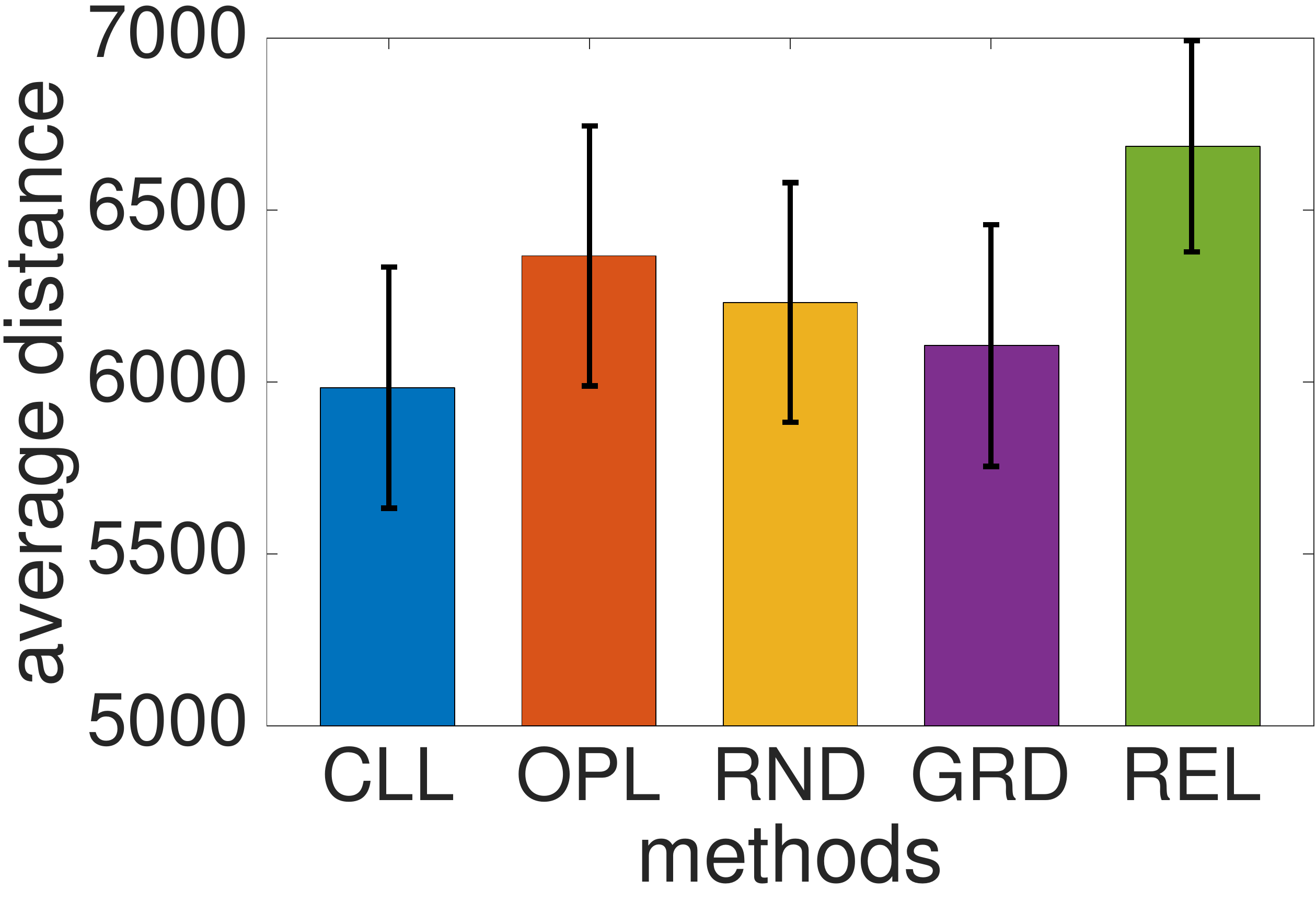} \\
          &
           Performance vs. \# users &
           Performance vs. \# points &
          Objective function \\
  \end{tabular} 
  \caption{real world dataset results; $n=300$, $M=6$, $T=40$}
  \label{fig:real-results}
\end{figure}

\subsection{Campaigning results on real world networks}
We also evaluate the proposed framework on real world data. 
To this end, we utilize the MemeTracker dataset~\cite{leskovec2009meme} which contains the information flows captured by hyperlinks between different sites with timestamps during 9 months. 
This data has been previously used to validate Hawkes process models of social activity~\cite{ZhoZhaSon13,YanZha13}.
%
%
For the real data, we utilize two evaluation procedures. First, similar to the synthetic case, we simulate the network, but now on a network based on the learned parameters from real data. 
However, the more interesting evaluation scheme would entail carrying out real intervention in a social media platform. Since this is very challenging to do, instead, in this evaluation scheme we used held-out data to mimic such procedure.
Second, we form 10 pairs of clusters/cascades by selecting any 2 combinations of 5 largest clusters in the Memetracker data. Each is a cascade of events around a common subject. 
For any of these 10 pairs, the methods are faced to the question of predicting which cascade will reach the objective function better. 
They should be able to answer this by measuring how similar their prescription is to the real exogenous intensity. 
 The key point here is that the real events happened are used to evaluate the objective function of the methods.
Then the results are reported on average prediction accuracy on all stages over 10 runs of random constraint and parameter initialization on 10 pairs of cascades. 

Fig.~\ref{fig:real-results}, left column illustrates the performance with respect to increasing the number of users in the network. The performance drops slightly with the network size. This means that prediction becomes more difficult as more random variables are involved.
The middle panel shows the performance with respect to increasing the number of intervention points. Here, a slight increase in the performance is apparent. As the number of intervention points increases the algorithm has more control over the outcome and can reach the objective function better.

Fig.~\ref{fig:real-results} top row summarizes the results of {\bf CEM}. The left panel demonstrates the predictive performance of the algorithms. CLL consistently outperforms the rest. With 65-70 \% of accuracy in predicting the optimal cascade.
The right panel shows the objective function simulated 10 times with the learned parameters for 
network of $n=300$ users on $6$ intervention points. 
The extra 2.5 extra exposure per user compared to the second best method with the same budget and constraint would be a significant advertising achievement.
Among the competitors OPL and RND seem to perform good. 
If there where no cap over the resultant exposure, all methods would perform comparably because of the linearity of sum of exposure. However, the successful method is the one who manage to maximize exposure considering the cap. 
Failure of PRK and WEI indicates that structural properties are not enough to capture the influence. Compared to these two, RND performs better in average, however exhibits a larger variance as expected.

Fig.~\ref{fig:real-results} middle row summarizes the results for {\bf MEM} and shows CLL outperforms others consistently.
CLL still is the best algorithm and OPL and RND are the significant baselines. Failure of WFL and PRP shows the network structure plays a significant role in the activity and exposure processes. 

The bottom row in Fig.~\ref{fig:real-results} demonstrates the results of {\bf LES}. CLL is still the best method.
Among the competitors, OPL is still strong but RND  is not performing well for this task. The objective function is summation of the square of the gap between target and current exposure. This explains why GRD is showing a comparable success, since, it starts with the highest gap in the exposure and greedily allocates the budget.

\section{Related Work}
\label{appen-related}

Exposure shaping problems are significantly more challenging than traditional influence maximization problems, which aim to identify a set of users who influence others in the network and trigger a large cascade of adoptions~\cite{richardson2002mining,kempe2003maximizing}.  
First, in influence maximization, the state of each user is often assumed to be binary.  However, such assumption does not capture the recurrent nature of social activity. 
Second, while influence maximization methods identify a set of users to provide incentives, they do not typically provide a quantitative prescription on how much incentive should be provided to each user. 
Third, exposure shaping concerns about a larger variety of target states, such as minimum exposure requirement and homogeneity, not just maximization.

Existing work in stochastic optimal control includes jump diffusion 
stochastic differential equations (SDE)~\cite{hanson2007applied,hu2008partial} which focuses on controlling the SDEs with the jump term driven by Poisson processes not for Hawkes processes. 
Inspired by the opinion dynamics model proposed in~\cite{de2015modeling}, the authors in~\cite{wang2016steering} proposes  a multivariate jump diffusion process framework for modeling opinion dynamics over networks and determining the control over such networks.

In \cite{bloembergen2014influencing}, a continuous action iterated prisoners' dilemma was used to model
the interactions in a social network and extended by incorporating a mechanism for external influence 
on the behavior of individual nodes. 
Markov Decision Process (MDP) framework is proposed to develop several scheduling
algorithms for optimal control of information epidemics with susceptible-infected (SI) model on 
Erd\H{o}s-R\'{e}nyi and scale-free networks \cite{kandhway2015campaigning}.
In~\cite{chen2014optimal}, the authors provided an analytically tractable model for information 
dissemination over networks and solved the optimal control signal distribution time for minimizing 
the accumulated network cost via dynamic programming. 
Furthermore, \cite{karnik2012optimal} formulated the maximization of spread of a given message in the population within the stipulated time as continuous-time deterministic optimal control problem.

In contrast, our work has been built on the well-developed theory of point processes~\cite{AalBorGje08, DalVer2007}.
Their usage in modeling activity in social network is becoming increasingly popular \cite{lian2015multitask, parikh2012conjoint, hall2014tracking, FarGomDuZamZhaSon15, PerWol13}. 
More specifically, we utilizes the Hawkes process~\cite{Hawkes71} which its self-exciting property has been proved to be an appropriate choice in modeling processes of and on the networks:
\cite{LinAdaRya14, ZhoZhaSon13, IwaShaGha13, YanZha13, BluBecHelKat12} model and infer the social activity in networks
Based on Hawkes process assumption of activity in social networks
\cite{FarWanGomLiZhaSon15, TraFarSonZha15, chow2015influence, xu2016learning, he2015hawkestopic, competing15icdm} study one or several phenomena in the social network.
In~\cite{karimi2016smart} authors proposed a broadcasting algorithm to maximize the visibility of posts in Twitter. It only consider the direct followers and does not involve peer influence in propagation process.
Our work is closely related to~\cite{farajtabar2014activity}, which is extended 
in two significant directions here: First, we generalize their result on driving a time-dependent 
average intensity in the case where the exogenous intensity is not constant. 
Second, instead of one-shot optimization we pose the problem as a multi-stage optimal control problem which is more fit to real world applications. Then we propose a dynamic programming solution to the multi-stage optimization problem.


\section{Conclusion and Future work}
In this paper, we introduced the optimal multistage campaigning problem, which is a generalization of the activity shaping and influence maximization problems, and it allows for more elaborate goal functions. 
Our model of social activity is based on multivariate Hawkes process, and for the first time, we manage to derive a linear connection between a time-varying exogenous intensity  
(\ie, the part that can be easily manipulated via incentives) and the overall network exposure of the campaign. 
The multistage optimal control problem is introduced and an approximate closed-loop dynamic programming approach is proposed to find the optimal interventions.
This linear connection between exogenous intensity and campaign's exposure enables developing a convex optimization framework for exposure shaping, deriving the necessary incentives to reach a global exposure pattern in the network. The method is evaluated 
on both synthetic and real-world held-out data and is shown to outperform several heuristics.

Experiments on synthetic and real world datasets reveal a couple of interesting facts: 
\begin{itemize}
\item
Most notable lesson is the presence of the so-called \emph{value of information}. We have witnessed, both in synthetic and real dataset, it is possible to achieve lower cost, essentially by taking advantage of extra information.
If the information was not available the controller couldn't adapt appropriately to the unexpected behavior and consequently the cost could have been adversely affected. 
\item
What we have empirically observed is that the performance, measured in achieving the lower cost and accurate prediction, improves with increasing  the number of intervention points. The more control over social network the better one can steer the campaign towards a goal.
\item
The performance slightly decreases with increasing the number of nodes. That might be due to the increased dimensionality of the optimization problem.

\end{itemize}
We acknowledge that our method has indeed limitations. For the networks at the scale of web or large social networks faster and scalable methods need to be explored and developed which remains as future works.
There are many other interesting venues for future work too. For example, considering competing/collaborating campaigns and their equilibria and interactions, a continuous-time intervention scheme, and exploring other approximate dynamic programming approaches remain as future work.

\bibliographystyle{unsrt}



\clearpage
\newpage

\appendix

\section{Proofs}
\label{appen-proofs}
\setcounter{theorem}{0}
\begin{lemma}
Suppose $\Psi:[0,T] \to \mathbb{R}^{n\times n}$ is a matrix function, then
for every fixed constant intensity $\mu(t) = c\in\mathbb{R}_{+}^n$, $\eta_c(t):=\Psi(t)c$
solves the semi-infinite integral equation
\begin{equation}\label{eqn_app:WienerHopf_constant}
\eta(t) = c + \int_{0}^{t} \Phi(t-s)\eta(s)ds,\quad \forall t\in[0,T],
\end{equation}
if and only if $\Psi(t)$ satisfies 
\begin{equation}\label{eqn_app:Psi_gen}
\Psi(t) = I + \int_{0}^{t} \Phi(t-s)\Psi(s)ds,\quad \forall t\in[0,T].
\end{equation}
In particular, if $\Phi(t)=Ae^{-\omega t}\one_{\geq0}(t)=[a_{ij}e^{-\omega t}\one_{\geq0}(t)]_{ij}$
where $0\leq \omega\notin \text{Spectrum}(A)$, then 
 \begin{equation}\label{eqn_app:defPsi}
     \Psi(t) =  e^{(A-\omega I) t} + \omega(A-\omega I )^{-1} ( e^{(A-\omega I) t} - I)
\end{equation}
for $t\in[0,T]$, where, $\one_{\geq0}(t)$ is an indicator function for $t \geq 0$.
\end{lemma}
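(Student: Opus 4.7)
The statement has two independent components that I will address in sequence: an equivalence between the vector and matrix integral equations, and then an explicit solution of the matrix equation for the exponential kernel.

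\textbf{Equivalence of the two integral equations.} The forward direction is essentially a column-by-column reading of the hypothesis. If $\Psi(t)c$ satisfies \eqref{eqn_app:WienerHopf_constant} for every $c\in\mathbb{R}_{+}^n$, I would choose $c$ to range over the standard basis $\{e_1,\dots,e_n\}$. For each $e_j$ the identity reads $\Psi(t)e_j = e_j + \int_0^t \Phi(t-s)\Psi(s)e_j\, ds$, i.e.\ the $j$-th column of \eqref{eqn_app:Psi_gen}. Stacking the columns yields \eqref{eqn_app:Psi_gen}. The nonnegativity of $c$ is not an obstacle here because the equation is linear in $c$ and holds for a set that spans $\mathbb{R}^n$ (indeed, each $e_j$ is in $\mathbb{R}_{+}^n$). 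Conversely, multiplying \eqref{eqn_app:Psi_gen} on the right by an arbitrary constant vector $c$ and using that the integral commutes with right multiplication by a constant yields $\Psi(t)c = c + \int_0^t \Phi(t-s)\Psi(s)c\, ds$, which is exactly \eqref{eqn_app:WienerHopf_constant} for $\eta_c(t)=\Psi(t)c$. Both directions are therefore immediate from linearity, so this half of the proof is routine.

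\textbf{Explicit form for the exponential kernel.} With $\Phi(t-s)=Ae^{-\omega(t-s)}$ I would convert the integral equation into a matrix ODE. Setting $G(t):=\int_0^t A e^{-\omega(t-s)}\Psi(s)\,ds = A e^{-\omega t}\int_0^t e^{\omega s}\Psi(s)\,ds$ and differentiating by Leibniz yields $G'(t) = -\omega G(t) + A\Psi(t)$. Since $\Psi(t)=I+G(t)$, substituting $G=\Psi-I$ gives the linear ODE
\begin{equation}
\Psi'(t) = (A-\omega I)\Psi(t) + \omega I, \qquad \Psi(0)=I,
\end{equation}
where the initial condition follows by evaluating \eqref{eqn_app:Psi_gen} at $t=0$. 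The variation of constants formula produces
\begin{equation}
\Psi(t) = e^{(A-\omega I)t} + \omega\int_0^t e^{(A-\omega I)(t-s)}\,ds.
\end{equation}
Because $\omega\notin\mathrm{Spectrum}(A)$, the matrix $A-\omega I$ is invertible, so the change of variable $u=t-s$ evaluates the remaining integral as $(A-\omega I)^{-1}(e^{(A-\omega I)t}-I)$, producing \eqref{eqn_app:defPsi}. I would finish by noting that this $\Psi$ solves the ODE and initial condition, hence (by reversing the derivation) the integral equation, so uniqueness of solutions to the Volterra equation \eqref{eqn_app:Psi_gen} pins down the formula.

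\textbf{Main obstacle.} The only genuinely delicate point is justifying the differentiation step: one needs $\Psi$ (and hence $G$) to be sufficiently regular to apply Leibniz, and one needs to know that the Volterra equation \eqref{eqn_app:Psi_gen} admits a unique solution so that the ODE-derived formula is in fact \emph{the} $\Psi$. Since $\Phi$ is continuous on $[0,T]$ (in fact analytic), standard Picard iteration on the Volterra kernel gives both existence and uniqueness of a continuous solution, and the resulting $\Psi$ inherits $C^1$ regularity from the smoothness of the kernel; I would record this briefly before invoking the ODE argument. Everything else is bookkeeping.
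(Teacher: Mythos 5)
Your proof is correct. The equivalence half is essentially the paper's own argument: the authors also substitute $\eta_c=\Psi c$ into the scalar equation, collect terms into $\bigl[\Psi(t)-I-\int_0^t\Phi(t-s)\Psi(s)\,ds\bigr]c=0$, and conclude from the arbitrariness of $c$ (your use of the standard basis vectors $e_j\in\mathbb{R}_+^n$ just makes the ``arbitrary $c$'' step explicit), with the converse noted as trivial by linearity. Where you genuinely diverge is the exponential-kernel formula: the paper disposes of it in one sentence, asserting that ``one can readily check'' that \eqref{eqn_app:defPsi} satisfies \eqref{eqn_app:Psi_gen}, i.e.\ it offers a verification without a derivation (and without addressing uniqueness of the solution to the Volterra equation, so strictly speaking it only shows the displayed $\Psi$ is \emph{a} solution). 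You instead derive the formula: differentiating the convolution term turns the Volterra equation into the matrix ODE $\Psi'=(A-\omega I)\Psi+\omega I$, $\Psi(0)=I$, which variation of constants solves in closed form, and you pin the answer down with Picard existence/uniqueness for the Volterra equation plus the bootstrap that a continuous solution is automatically $C^1$. Your route costs a short regularity discussion but buys an explanation of where \eqref{eqn_app:defPsi} comes from, cleanly uses the hypothesis $\omega\notin\mathrm{Spectrum}(A)$ to invert $A-\omega I$, and closes the uniqueness gap the paper leaves implicit; the paper's route is shorter but presupposes the formula.
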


\begin{proof}
Suppose that $\Psi(t)c$ solves \eqref{eqn_app:WienerHopf_constant} for every $c$, then
substituting $\eta(t)$ by $\eta_c(t):=\Psi(t)c$ in \eqref{eqn_app:WienerHopf_constant} we obtain
$\left[\Psi(t)-I-\int_{0}^{t}\Phi(t-s)\Psi(s)ds\right]c=0$. Since $c\in\mathbb{R}_{+}^n$
is arbitrary, we know that $\Psi(t)-I-\int_{0}^{t}\Phi(t-s)\Psi(s)ds=0$ for all $t$,
and hence \eqref{eqn_app:Psi_gen} follows.
The converse is trivial to verify. Furthermore, one can readily check that
\eqref{eqn_app:defPsi} satisfies \eqref{eqn_app:Psi_gen} for $\Phi(t)=Ae^{-\omega t}\one_{\geq0}(t)$.
\end{proof}

\begin{theorem}
\label{theo:piecewise_constant_average}
Let $\Psi(t)$ satisfy \eqref{eqn_app:Psi_gen} and $\mu(t)$ be a right-continuous piecewise 
constant intensity function of form \eqref{eq:piecewise-constant-intensity}, then
the rate function $\eta(t)$ is given by
\begin{align}\label{eqn_app:eta_pwconst}
\eta(t) = \sum_{k=0}^m \Psi(t-\tau_k) (c_{k}-c_{k-1}),
\end{align}
for all $t\in (\tau_{m-1},\tau_m]$ and $m=1,\dots,M$, where $c_{-1}:=0$ by convention.
\end{theorem}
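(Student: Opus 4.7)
The plan is to exploit two structural features of the Wiener--Hopf equation $\eta(t)=\mu(t)+\int_0^t\Phi(t-s)\eta(s)\,ds$: its linearity in the source $\mu$, and the translation invariance built into the convolution kernel $\Phi(t-s)$. Together with the Lemma, these will force formula \eqref{eqn_app:eta_pwconst}, without any computation specific to $\Phi(t)=Ae^{-\omega t}\one_{\geq 0}(t)$.

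First, I would decompose $\mu$ telescopically as a sum of Heaviside-type steps:
\begin{equation*}
\mu(t)=\sum_{k=1}^{M}(c_k-c_{k-1})\,\one_{[\tau_{k-1},T]}(t),
\end{equation*}
with the convention $c_0=0$ (matching the paper's $c_{-1}=0$ after an index shift). For $t\in(\tau_{m-1},\tau_m]$ only the terms with $k\leq m$ contribute, so the sum is finite.

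Second, I would solve the equation for a single step source $\mu_{\tau,v}(t):=v\,\one_{[\tau,T]}(t)$ and show that the unique solution is $\eta_{\tau,v}(t)=\Psi(t-\tau)\,v\,\one_{[\tau,T]}(t)$. Vanishing on $[0,\tau)$ follows from Volterra uniqueness applied to the homogeneous equation $\eta(t)=\int_0^t\Phi(t-s)\eta(s)\,ds$; a Picard iteration or a Gronwall bound using the boundedness of $\Phi$ on $[0,T]$ forces $\eta\equiv 0$ there. For $t\geq\tau$, I would substitute $s=s'+\tau$ in the integral, use $\eta\equiv 0$ on $[0,\tau)$, and set $\tilde{\eta}(r):=\eta_{\tau,v}(r+\tau)$; the equation reduces to
\begin{equation*}
\tilde{\eta}(r)=v+\int_0^r\Phi(r-s')\tilde{\eta}(s')\,ds',
\end{equation*}
whose unique solution by the Lemma is $\Psi(r)v$. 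The translation invariance $\Phi(t,s)=\Phi(t-s)$ is essential at precisely this step.

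Third, by linearity of the Wiener--Hopf equation in $\mu$, superposing the step responses for $\tau=\tau_{k-1}$ and $v=c_k-c_{k-1}$ over $k=1,\ldots,m$ yields
\begin{equation*}
\eta(t)=\sum_{k=1}^{m}\Psi(t-\tau_{k-1})(c_k-c_{k-1}),\qquad t\in(\tau_{m-1},\tau_m],
\end{equation*}
which upon reindexing is exactly \eqref{eqn_app:eta_pwconst}. The main obstacle I anticipate is not the superposition itself, which is algebraic, but the step-response argument: one has to invoke Volterra uniqueness explicitly to rule out nontrivial solutions on $[0,\tau)$, and one must notice that reducing to the constant-source case of the Lemma uses the convolution structure of $\Phi$ in an essential way --- it would fail for a general two-variable kernel.
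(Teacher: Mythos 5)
Your argument is correct and proves the theorem, but it is organized differently from the paper's proof, which proceeds by induction on the number of pieces $M$: the base case is the Lemma, and the inductive step appends one additional step $(c-c_M)\one_{(\tau,T]}$ to $\mu$ and verifies by direct substitution into the Wiener--Hopf equation --- via the change of variables $u=s-\tau$ and the identity $\Psi(t)=I+\int_0^t\Phi(t-s)\Psi(s)\,ds$ --- that the rate function changes by exactly $\Psi(t-\tau)(c-c_M)\one_{(\tau,T]}(t)$. Your telescoping decomposition plus superposition is the same mechanism unrolled: the single-step response you compute by translating the equation back to the origin is precisely the computation hidden inside the paper's inductive step, and both proofs invoke the convolution structure of $\Phi$ at the same point. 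Where you genuinely add something is uniqueness: the paper only checks that its candidate \emph{satisfies} the integral equation, leaving implicit both that the rate function is characterized as a solution of that equation and that the solution is unique; you make the latter explicit via the Gronwall/Picard argument for the homogeneous Volterra equation, which you need twice --- to force the step response to vanish before the step turns on, and to conclude that the superposed solution is \emph{the} rate function rather than \emph{a} solution. A last, cosmetic point: your formula $\sum_{k=1}^{m}\Psi(t-\tau_{k-1})(c_k-c_{k-1})$ with $c_0=0$ attaches each jump $c_k-c_{k-1}$ to the time $\tau_{k-1}$ at which it actually occurs under the paper's definition \eqref{eq:piecewise-constant-intensity}, whereas the displayed \eqref{eqn_app:eta_pwconst} pairs that jump with $\tau_k$; so your ``reindexing'' is really a (correct) repair of a harmless off-by-one in the statement's indexing rather than a mere relabeling.
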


\begin{proof}
We prove this result by induction on partition size $M$. 
The previous lemma shows \eqref{eqn_app:eta_pwconst} for constant $\mu(t)=c\one_{[0,T]}(t)$
(i.e., $M=1$).
Suppose \eqref{eqn_app:eta_pwconst} is true for any given piecewise constant $\mu(t)$ of form 
\eqref{eq:piecewise-constant-intensity}
with $M$ partitions. If we impose a constant control $c\in\mathbb{R}_{+}^n$ (different from original
$c_{M}$) since time $\tau \in (\tau_{M-1},T]$, 
namely the piecewise constant intensity function is updated to $\hat{\mu}(t):=\mu(t)+(c-c_{M})\one_{(\tau,T]}(t)$,
then we need to show that the updated rate function $\hat{\eta}(t)$ is
\begin{align}\label{eqn_app:eta_hat_pwconst}
\hat{\eta}(t) = \eta(t) + \Psi(t-\tau)(c-c_{M})\one_{(\tau,T]}(t),
\end{align}
for all $t\in[0,T]$. This result can by verified easily for $t\in [0,\tau]$. If $t\in (\tau,T]$, then
$\hat{\mu}(t)=\mu(t)+(c-c_M)\one_{(\tau,T]}(t)=\mu(t)+(c-c_M)$ and
\begin{align}
& \hat{\mu}(t)+ \int_{0}^{t} \Phi(t-s)\hat{\eta}(s)ds \nonumber \\
=\ & \mu(t) + (c-c_{M}) + \int_{0}^{t} \Phi(t-s)[\eta(s)+\Psi(s-\tau)(c-c_M)\one_{(\tau,T]}(s)] ds \nonumber \\
=\ & \eta(t) + (c-c_{M}) + \int_{0}^{t} \Phi(t-s)\Psi(s-\tau)(c-c_M)\one_{(\tau,T]}(s)ds \\
=\ & \eta(t) + \left[ I + \int_{0}^{t-\tau} \Phi(t-\tau-u)\Psi(u)du \right](c-c_{M}) \nonumber \\
=\ & \eta(t) + \Psi(t-\tau)(c-c_{M}) = \hat{\eta}(t), \nonumber
\end{align}
where we used the fact that $\eta(t)$ is the rate function for intensity $\mu(t)$ to get
the second equality, applied change of variables $u=s-\tau$ to obtain the third equality,
and the property \eqref{eqn_app:Psi_gen} of $\Psi(t)$ to get the fourth equality.
This implies that the rate function is $\hat{\eta}(t)$ given in \eqref{eqn_app:eta_hat_pwconst}
for the updated piecewise constant intensity $\hat{\mu}(t)$ with $M+1$ partitions,
and hence completes the proof.
\end{proof}

\begin{theorem}
\label{theo:average_general}
If $\Psi\in C^{1}([0,T])$ and satisfies \eqref{eqn_app:Psi_gen}, and exogenous intensity $\mu$ 
is bounded and piecewise absolutely continuous
on $[0,T]$ where $\mu(t+)=\mu(t)$ at all discontinuous points $t$, then $\mu$ is differentiable almost everywhere,
and the semi-indefinite integral
\begin{equation}\label{eqn_app:WienerHopf_gen}
\eta(t) = \mu(t) + \int_{0}^{t} \Phi(t-s)\eta(s)ds,\quad \forall t\in[0,T],
\end{equation}
yields a rate function $\eta:[0,T]\to \mathbb{R}_{+}^n$ given by
\begin{align}\label{eqn_app:eta_gen}
\eta(t) = \int_{0}^{t}\Psi(t-s) d\mu(s).
\end{align}
\end{theorem}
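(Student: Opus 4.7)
The plan is to verify directly that the candidate rate function $\eta(t)=\int_{0}^{t}\Psi(t-s)\,d\mu(s)$ satisfies the Wiener--Hopf equation \eqref{eqn_app:WienerHopf_gen}, with $d\mu$ interpreted as a Lebesgue--Stieltjes measure on $[0,T]$ whose cumulative mass from $0^-$ to $t$ equals $\mu(t)$ (i.e.\ with an atom of size $\mu(0)$ concentrated at the origin). Piecewise absolute continuity plus boundedness of $\mu$ guarantees that $\mu$ has bounded variation on $[0,T]$, so this Lebesgue--Stieltjes measure is a well-defined finite signed measure, and $\mu$ is differentiable almost everywhere by Lebesgue's differentiation theorem \cite{Folland2013}.

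First I would substitute the candidate into the right-hand side of \eqref{eqn_app:WienerHopf_gen}, obtaining
\begin{equation*}
\mu(t) + \int_{0}^{t} \Phi(t-s)\int_{0}^{s}\Psi(s-r)\,d\mu(r)\,ds.
\end{equation*}
Since $\Phi$ and $\Psi$ are bounded on $[0,T]$ and $d\mu$ is a finite measure, Fubini's theorem applies and lets me swap the order of integration over the triangular region $\{0\le r\le s\le t\}$:
\begin{equation*}
\mu(t) + \int_{0}^{t} \left[\int_{r}^{t} \Phi(t-s)\Psi(s-r)\,ds\right] d\mu(r).
\end{equation*}
Next, the change of variables $u=s-r$ in the inner integral gives $\int_{0}^{t-r}\Phi(t-r-u)\Psi(u)\,du$, which by the defining identity \eqref{eqn_app:Psi_gen} for $\Psi$ equals $\Psi(t-r)-I$.

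Assembling the pieces,
\begin{equation*}
\mathrm{RHS} \;=\; \mu(t) + \int_{0}^{t}\bigl[\Psi(t-r)-I\bigr]\,d\mu(r) \;=\; \mu(t) + \int_{0}^{t}\Psi(t-r)\,d\mu(r) - \mu(t),
\end{equation*}
where the last term used $\int_{0}^{t} d\mu(r)=\mu(t)$ under our convention. The two $\mu(t)$ terms cancel, leaving exactly $\eta(t)$ as given by \eqref{eqn_app:eta_gen}. The main obstacle is bookkeeping at the endpoints: one must confirm that the Stieltjes integral is set up so that jumps of $\mu$ (including the implicit jump from $0$ to $\mu(0)$ at the origin) contribute correctly, which is precisely the role of the right-continuity hypothesis $\mu(t+)=\mu(t)$ and the piecewise absolute continuity structure. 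A reassuring cross-check is to specialize to the piecewise constant case: then $d\mu=\sum_{k=0}^{M-1}(c_{k}-c_{k-1})\delta_{\tau_{k}}$ (with $c_{-1}:=0$), and \eqref{eqn_app:eta_gen} collapses to \eqref{eqn_app:eta_pwconst}, recovering Theorem \ref{theo:piecewise_constant_average}. This consistency also suggests an alternative proof route by approximating $\mu$ by piecewise constant functions $\mu_{M}$ with mesh tending to $0$, invoking Theorem \ref{theo:piecewise_constant_average} for each approximant, and passing to the limit via dominated convergence on both sides of \eqref{eqn_app:WienerHopf_gen}; I would keep this as a fallback if the direct Fubini argument runs into measurability technicalities at discontinuities of $\mu$.
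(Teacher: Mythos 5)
Your proposal is correct, but it takes a genuinely different route from the paper. You verify directly that the candidate $\eta(t)=\int_{0}^{t}\Psi(t-s)\,d\mu(s)$ solves \eqref{eqn_app:WienerHopf_gen}: Fubini over the triangle $\{0\le r\le s\le t\}$, the substitution $u=s-r$, and the defining identity \eqref{eqn_app:Psi_gen} collapse the double integral to $\int_{0}^{t}[\Psi(t-r)-I]\,d\mu(r)$, and the $-I$ term cancels $\mu(t)$ under the convention that $d\mu$ carries an atom of mass $\mu(0)$ at the origin (the analogue of $c_{-1}:=0$ in Thm.~\ref{theo:piecewise_constant_average}, so the convention is consistent with the rest of the paper). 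The paper instead argues by approximation: it reduces to the absolutely continuous case, approximates $\mu$ uniformly by piecewise constant functions $\mu_k$, writes $\int_{0}^{t}\Psi(t-s)\,d\mu_k(s)$ via Stieltjes integration by parts as $\int_{0}^{t}\Psi'(t-s)\mu_k(s)\,ds+\Psi(0)\mu_k(t)-\Psi(t)\mu_k(0)$, and passes to the limit by bounded convergence. Your argument buys several things: it is self-contained, it never uses $\Psi\in C^{1}$ (only boundedness of $\Phi$ and $\Psi$ and bounded variation of $\mu$), and it verifies the integral equation for the limit object directly, whereas the paper's limiting argument implicitly assumes that the map $\mu\mapsto\eta$ is continuous under uniform convergence of intensities --- a step it does not spell out. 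What the paper's route buys in exchange is the explicit representation $\eta(t)=\int_{0}^{t}\Psi'(t-s)\mu(s)\,ds+\Psi(0)\mu(t)-\Psi(t)\mu(0)$, which feeds directly into the convolution form $\eta=\psi*\mu$ of the subsequent corollary; if you adopt your proof, you would derive that representation separately by one integration by parts at the end. Your flagged fallback (approximating by piecewise constant functions and invoking Thm.~\ref{theo:piecewise_constant_average}) is essentially the paper's actual proof.
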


\begin{proof}
It suffices to show \eqref{eqn_app:eta_gen} for absolutely continuous $\mu(t)$ on $[0,T)$
since extending the proof to piecewise absolutely continuous function is straightforward. 
We first define $\mu(T)=\mu(T-)$ and obtain a continuous $\mu(t)$ on $[0,T]$.
Since $[0,T]$ is compact, we know $\mu(t)$ is uniformly continuous, and hence
there exists a sequence of piecewise constant functions $\{\mu_k\}_{k=1}^{\infty}$
such that $\mu_k\to \mu$ uniformly on $[0,T]$, i.e., $\lim_{k\to\infty}\sup_{0\leq t\leq T}|\mu_k(t)-\mu(t)|=0$ 
\cite[Thm.  2.3.6]{Hijab2007} 
This also implies that $\{\mu_k\}$ is uniformly bounded.
For every $k$, piecewise constant function $\mu_k$ has bounded variation, therefore we have by 
\cite[Thm.  3.36]{Folland2013} that
\begin{equation}\label{eqn_app:eta_seq}
\eta_k(t):=\int_{0}^{t}\Psi(t-s)d\mu_k(s)=\int_{0}^{t}\Psi'(t-s)\mu_k(s)ds+\Psi(0)\mu_k(t)-\Psi(t)\mu_k(0),
\end{equation}
for all $t\in[0,T]$.
Since $\Psi\in C^1$ we know $\Psi'$ is continuous and bounded on $[0,T]$. 
By Lebesgue's bounded convergence theorem we know
\begin{equation}
\int_{0}^{t}\Psi'(t-s)\mu_k(s)ds \to \int_{0}^{t}\Psi'(t-s)\mu(s)ds.
\end{equation}
Furthermore, using the uniform convergence of $\{\mu_k\}$ to $\mu$, we
know the right hand side \eqref{eqn_app:eta_seq} converges to
$\int_{0}^{t}\Psi'(t-s)\mu(s)ds+\Psi(0)\mu(t)-\Psi(t)\mu(0)$. 
Then integration by parts for piecewise absolutely continuous function
$\mu$ which has bounded variation implies that 
$\eta(t) = \int_{0}^{t}\Psi(t-s) d\mu(s)$ for all $t\in[0,T]$.
\end{proof}

\begin{corollary}
Suppose $\Psi$ and $\mu$ satisfy the same conditions as in Thm.  \ref{theo:average_general},
and define $\psi=\Psi'$, then the rate function is $\eta(t)=(\psi * \mu)(t)$.
In particular, if $\Phi(t)=Ae^{-\omega t}\one_{\geq0}(t)=[a_{ij}e^{-\omega t}\one_{\geq0}(t)]_{ij}$
then the rate function $\eta(t)=A \int_{0}^{t} e^{(A-wI)(t-s)}\mu(s)ds$.
\end{corollary}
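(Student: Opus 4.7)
The plan is to peel the result off Theorem~\ref{theo:average_general} by integration by parts, and then specialize to the exponential kernel by direct differentiation of the closed form of $\Psi$ supplied by Lemma~1.

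First, I would start from $\eta(t)=\int_{0}^{t}\Psi(t-s)\,d\mu(s)$, which is guaranteed by Theorem~\ref{theo:average_general} under the stated hypotheses. Since $\Psi\in C^{1}([0,T])$, this Riemann--Stieltjes integral can be recast in terms of $\psi=\Psi'$: using the chain-rule identity $\tfrac{d}{ds}\Psi(t-s)=-\psi(t-s)$ together with the boundary value $\Psi(0)=I$ (immediate from \eqref{eqn_app:Psi_gen} at $t=0$), integration by parts on each interval of absolute continuity yields
\[
\int_{0}^{t}\Psi(t-s)\,d\mu(s) \;=\; \int_{0}^{t}\psi(t-s)\,\mu(s)\,ds \;+\; [\text{boundary contributions}].
\]
Under the hypothesis that $\mu$ is right-continuous at its discontinuities, the interior jump contributions telescope, and the endpoint residues are absorbed once the convolution is interpreted as $(\psi\ast\mu)(t)=\int_{0}^{t}\psi(t-s)\mu(s)\,ds$. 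This identifies $\eta=\psi\ast\mu$.

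Second, for the exponential kernel $\Phi(t)=Ae^{-\omega t}\mathbf{1}_{\geq 0}(t)$ I would simply differentiate the explicit formula for $\Psi$ from Lemma~1:
\[
\Psi'(t) \;=\; (A-\omega I)\,e^{(A-\omega I)t} \;+\; \omega(A-\omega I)^{-1}(A-\omega I)\,e^{(A-\omega I)t} \;=\; \bigl[(A-\omega I)+\omega I\bigr]\,e^{(A-\omega I)t} \;=\; A\,e^{(A-\omega I)t},
\]
where I used that $(A-\omega I)^{-1}(A-\omega I)=I$ (since $\omega\notin\text{Spectrum}(A)$). Substituting $\psi(t)=A\,e^{(A-\omega I)t}$ into $(\psi\ast\mu)(t)=\int_{0}^{t}\psi(t-s)\mu(s)\,ds$ gives the claimed expression $A\int_{0}^{t}e^{(A-\omega I)(t-s)}\mu(s)\,ds$.

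The main obstacle will be the boundary bookkeeping in the integration by parts: since $\mu$ may carry a nonzero value at $0$ and finitely many jumps inside $(0,T)$, the residues $\Psi(0)\mu(t)-\Psi(t)\mu(0)$ at the endpoints, together with the analogous contributions at interior jump points, must be shown to combine cleanly with the regular Lebesgue piece to produce exactly $(\psi\ast\mu)(t)$. The right-continuity hypothesis on $\mu$ and the $C^{1}$ regularity of $\Psi$ are precisely what makes this accounting succeed; once those are tracked carefully, the remaining steps are a routine application of the fundamental theorem of calculus plus the commutativity of $A$ with $(A-\omega I)^{-1}$ used in the final differentiation.
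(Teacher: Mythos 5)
Your proof takes essentially the same route as the paper's: integration by parts turns $\int_0^t\Psi(t-s)\,d\mu(s)$ into the convolution with $\psi=\Psi'$, and the exponential case follows by differentiating the closed form of $\Psi$ from Lemma~1 to get $\psi(t)=Ae^{(A-\omega I)t}$, exactly as the paper asserts. The one caveat is the boundary bookkeeping you flag: it does not in fact close cleanly (test $\mu\equiv c$, where Lemma~1 gives $\eta(t)=\Psi(t)c$ but $(\psi*\mu)(t)=\Psi(t)c-c$, so the residue $\Psi(0)\mu(t)=\mu(t)$ survives, which is why the main-text version of this corollary carries an extra $\mu(t)$ term), but the paper's own proof glosses over this in precisely the same way.
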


\begin{proof}
Note that both $\psi$ and $\mu$ have supports in $\mathbb{R}_{+}$, 
therefore integration by parts and the property of derivative of convolution 
\cite[p.~126]{bracewell1965fourier} imply that $\eta(t)=\int_{0}^{t}\Psi'(t-s)\mu(s)ds=(\psi * \mu)(t)$.
If $\Phi(t)=Ae^{-\omega t}\one_{\geq0}(t)=[a_{ij}e^{-\omega t}\one_{\geq0}(t)]_{ij}$,
then $\Psi(t)$ is given in \eqref{eqn_app:defPsi}, and hence $\psi(t)=Ae^{(A-wI)t}$
and the closed form of $\eta(t)$ follows as in the claim.
\end{proof}


\begin{figure*}[!t]
  \centering
  \setlength{\tabcolsep}{6pt}
  \begin{tabular}{cccc}
          \hspace{-5mm}
          \includegraphics[width=0.25\textwidth]{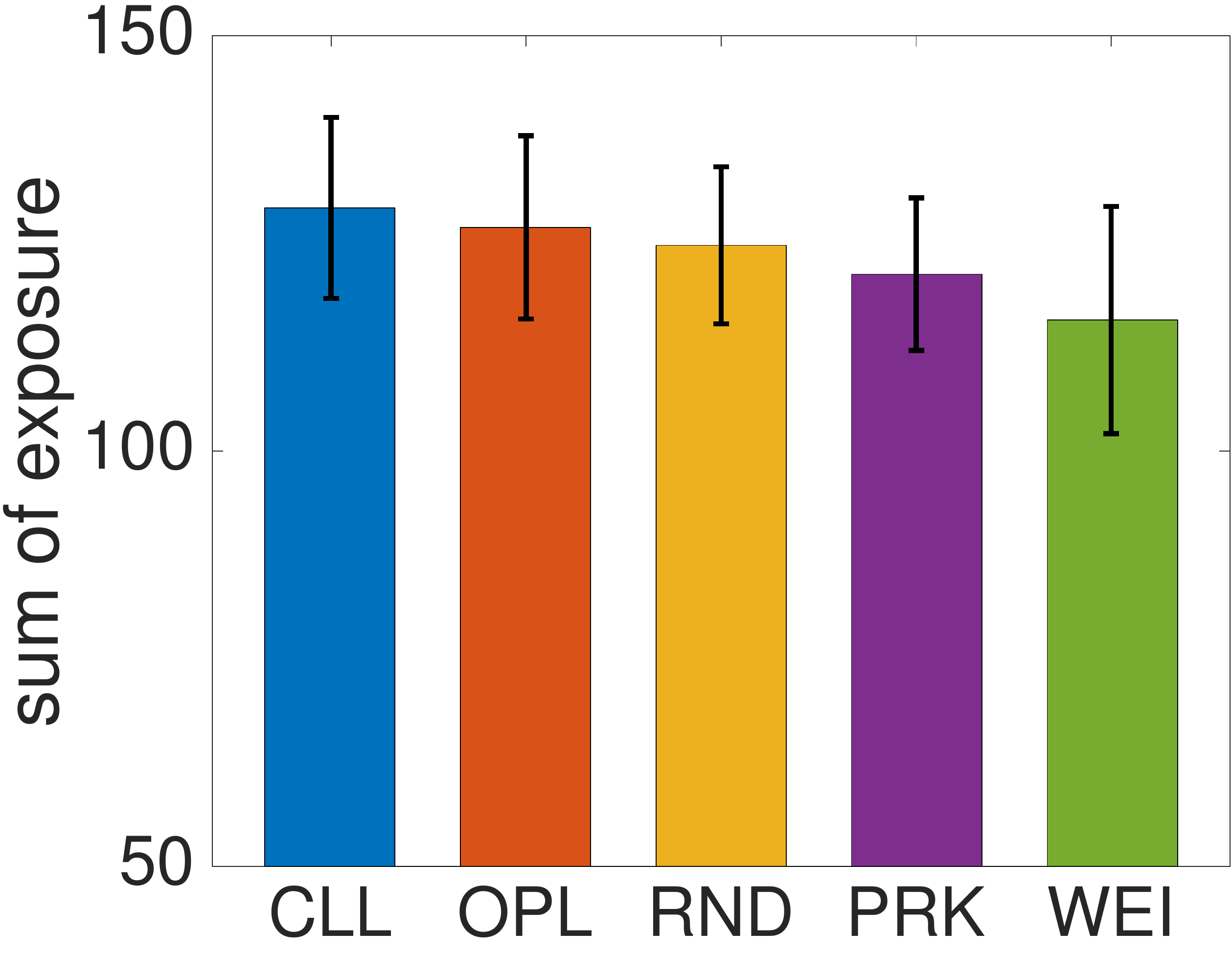} &
          \hspace{-5mm}
          \includegraphics[width=0.25\textwidth]{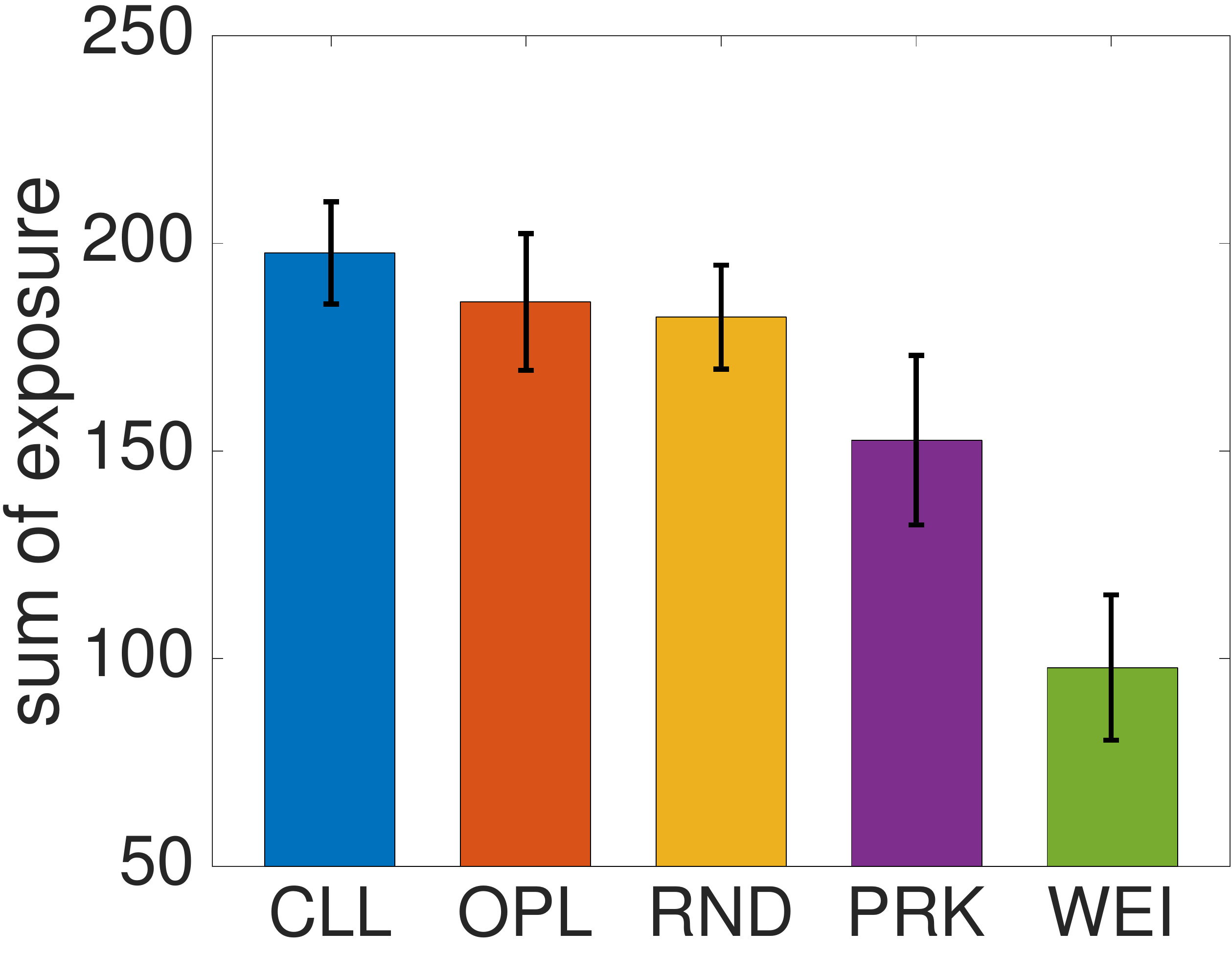} &
          \hspace{-5mm}
          \includegraphics[width=0.25\textwidth]{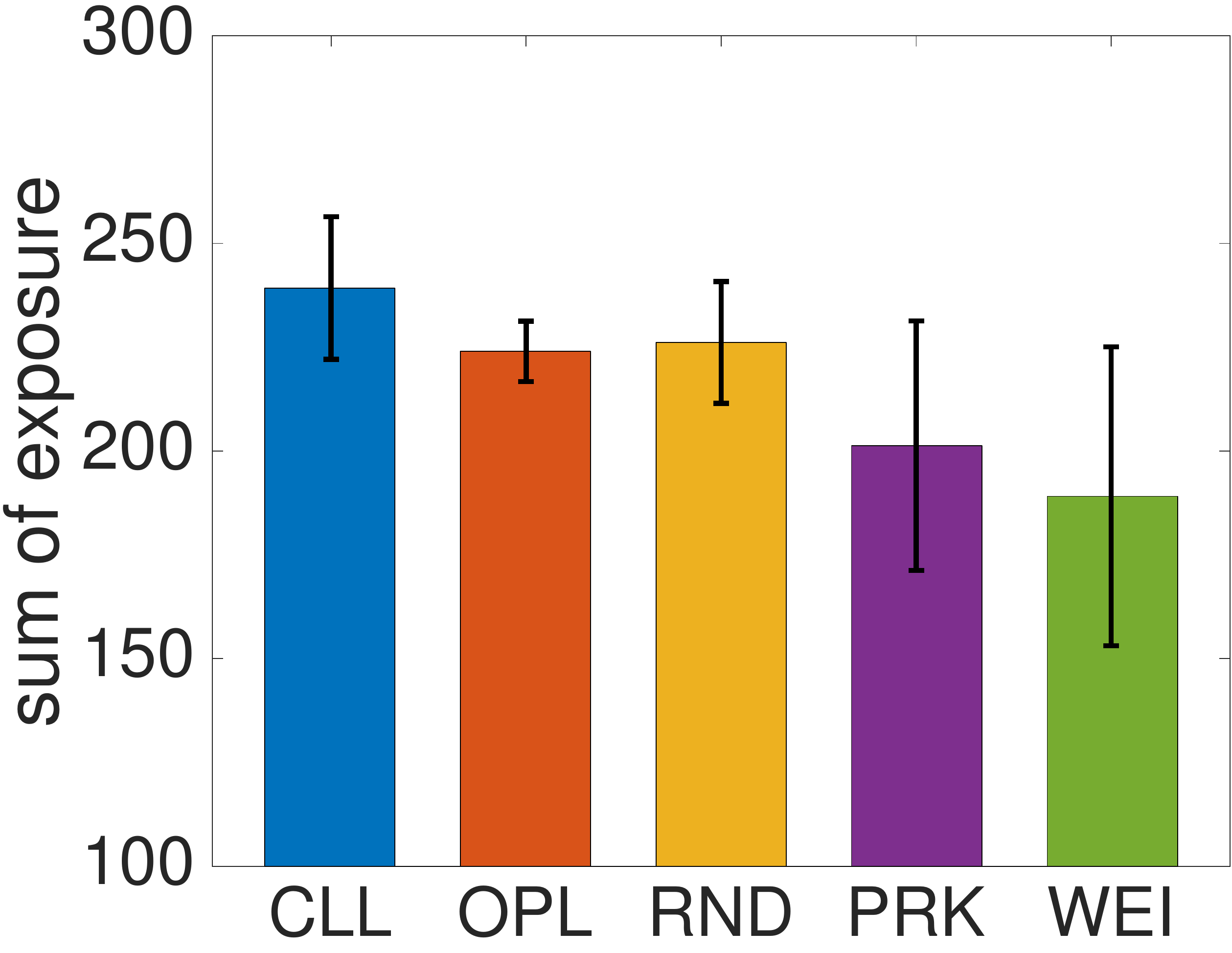} &         
          \hspace{-5mm}
          \includegraphics[width=0.25\textwidth]{fig-synth-CEM-size-6} \\
          a) $n=150$ & b) $n=200$ & c) $n=250 $ & d) $n=300$ \\
          \hspace{-5mm}
          \includegraphics[width=0.25\textwidth]{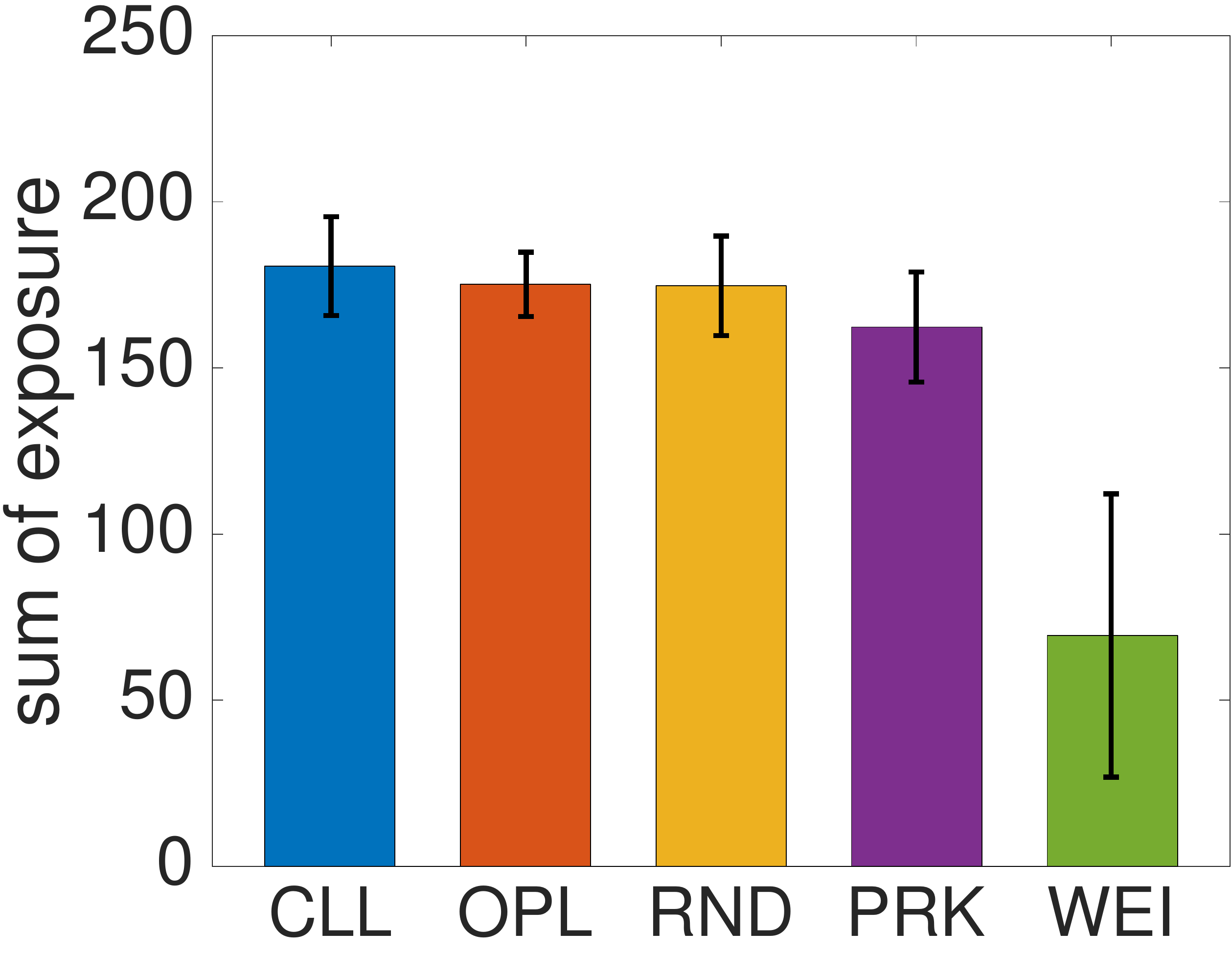} &
        \hspace{-5mm}
          \includegraphics[width=0.25\textwidth]{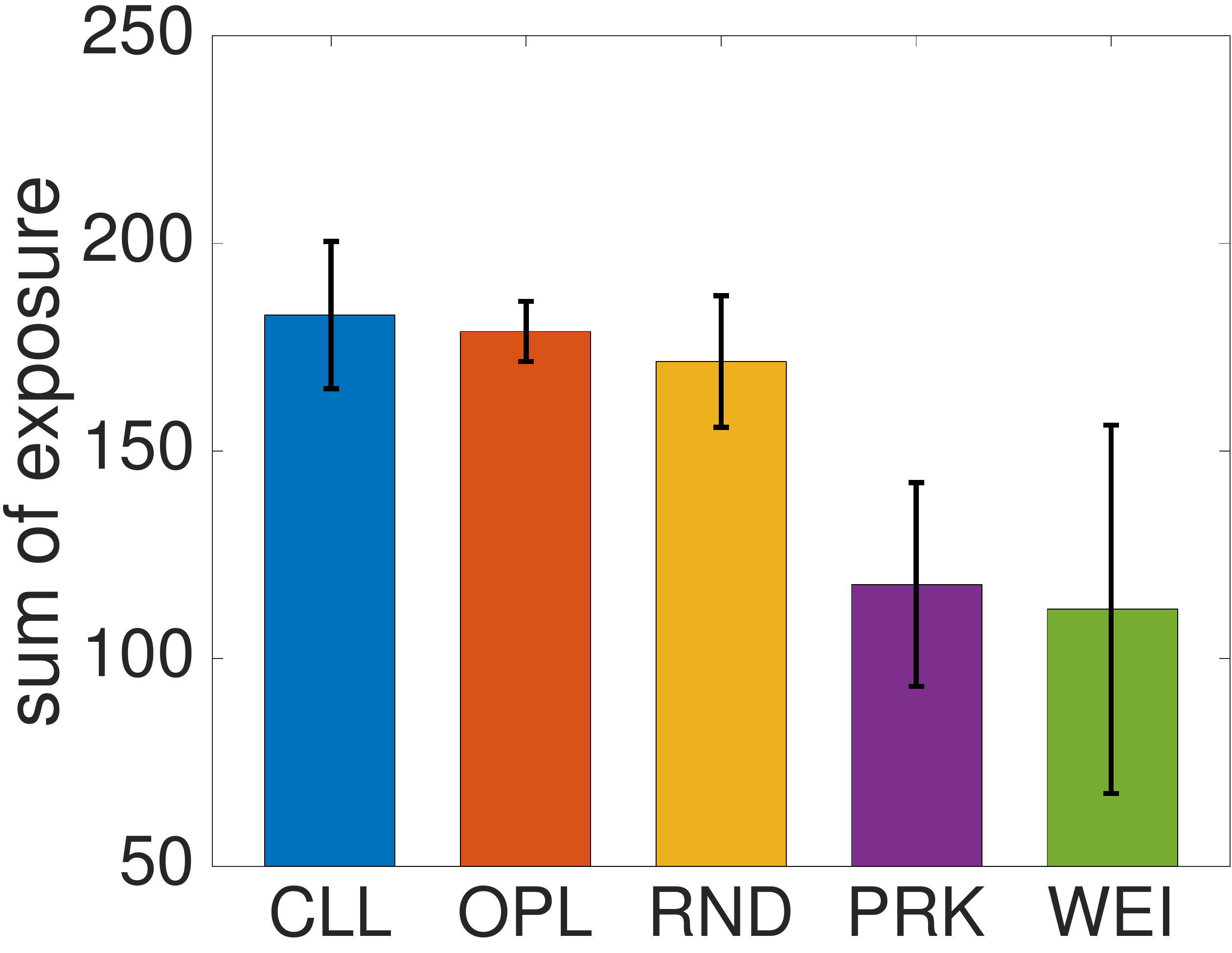} &
          \hspace{-5mm}
          \includegraphics[width=0.25\textwidth]{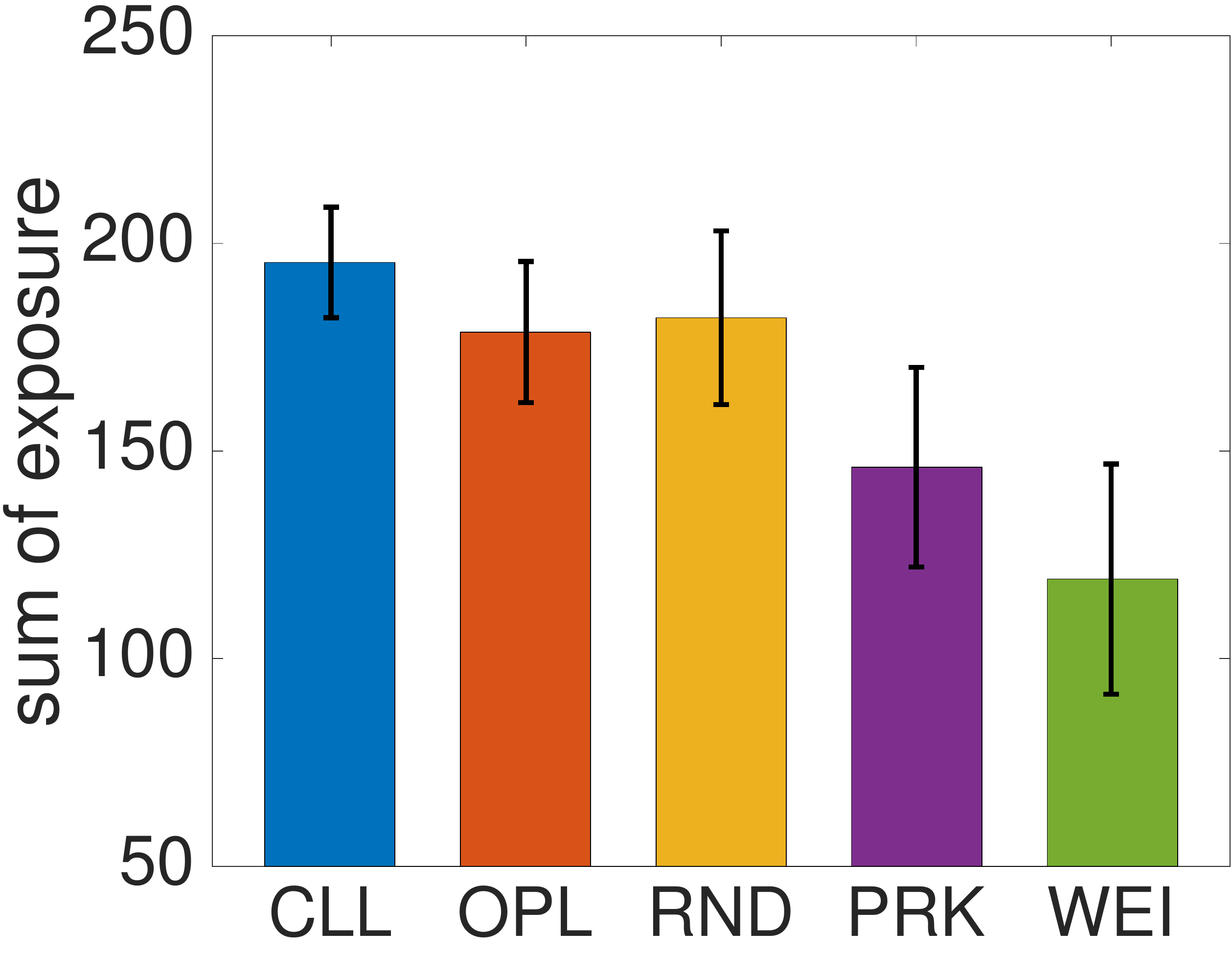} &
          \hspace{-5mm}
          \includegraphics[width=0.25\textwidth]{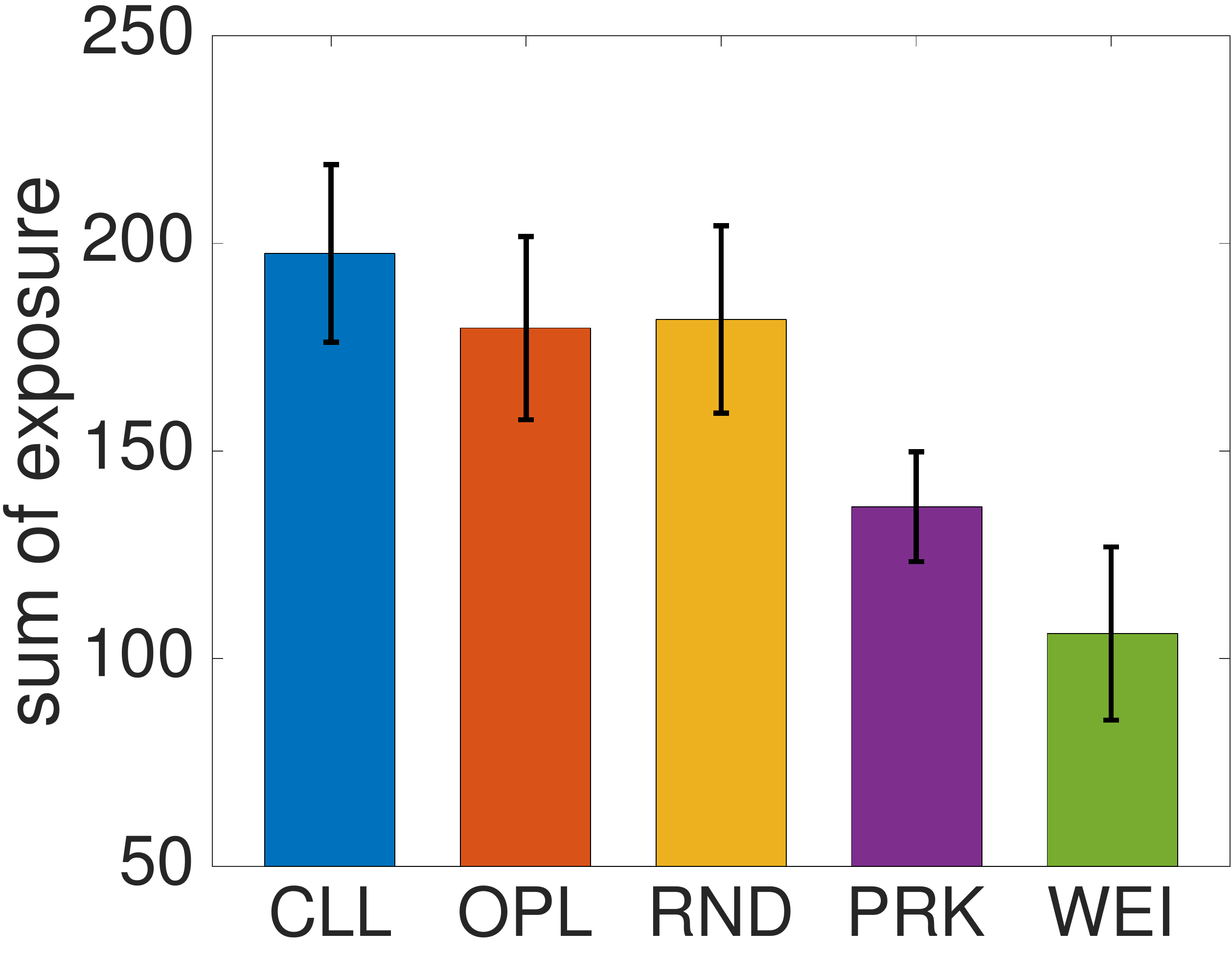}  \\
            e) $M=4$ & f) $M=6$ & g) $M=8$ & h) $M=10$ \\
          \hspace{-5mm}
          \includegraphics[width=0.25\textwidth]{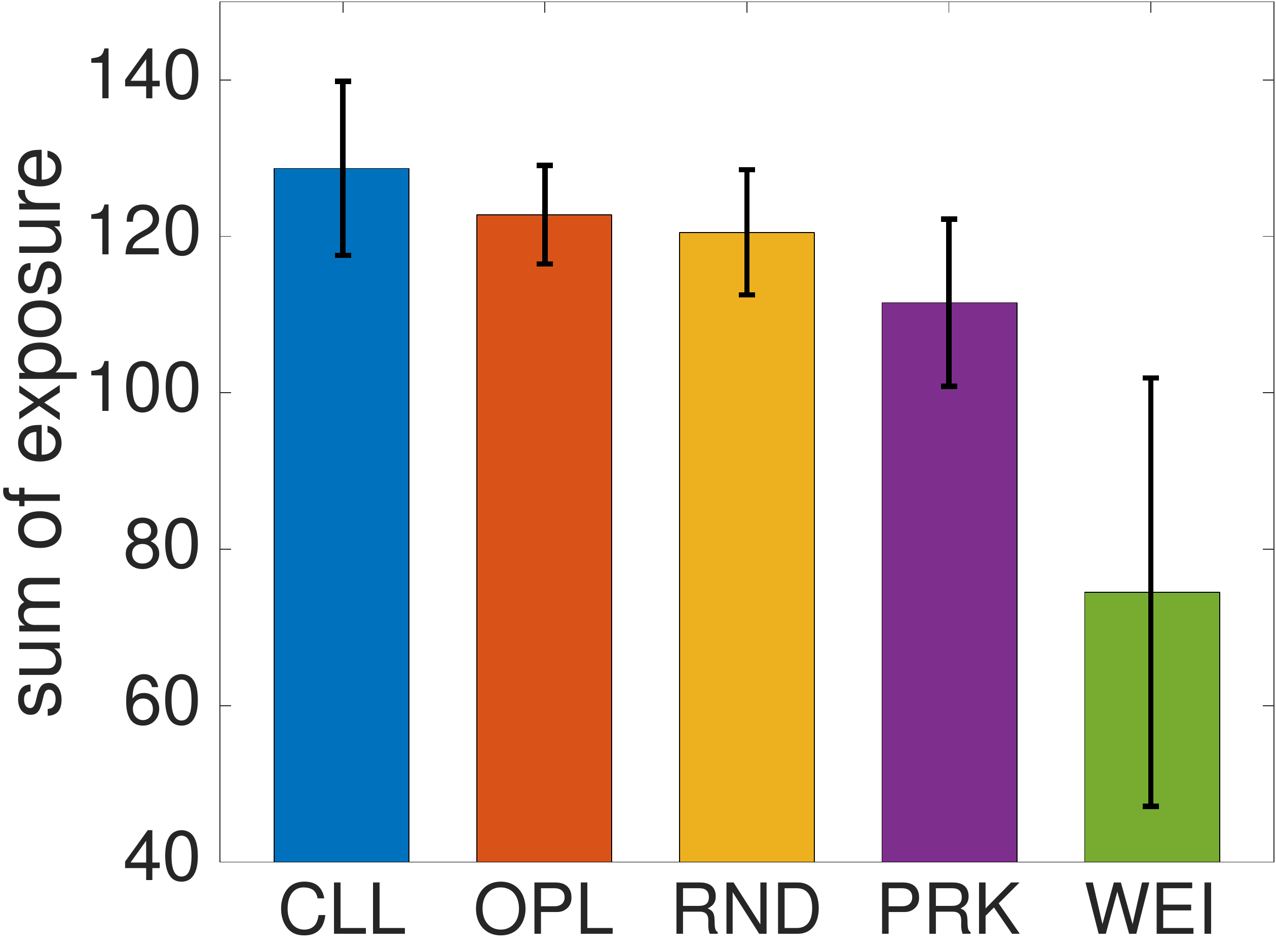} &
        \hspace{-5mm}
          \includegraphics[width=0.25\textwidth]{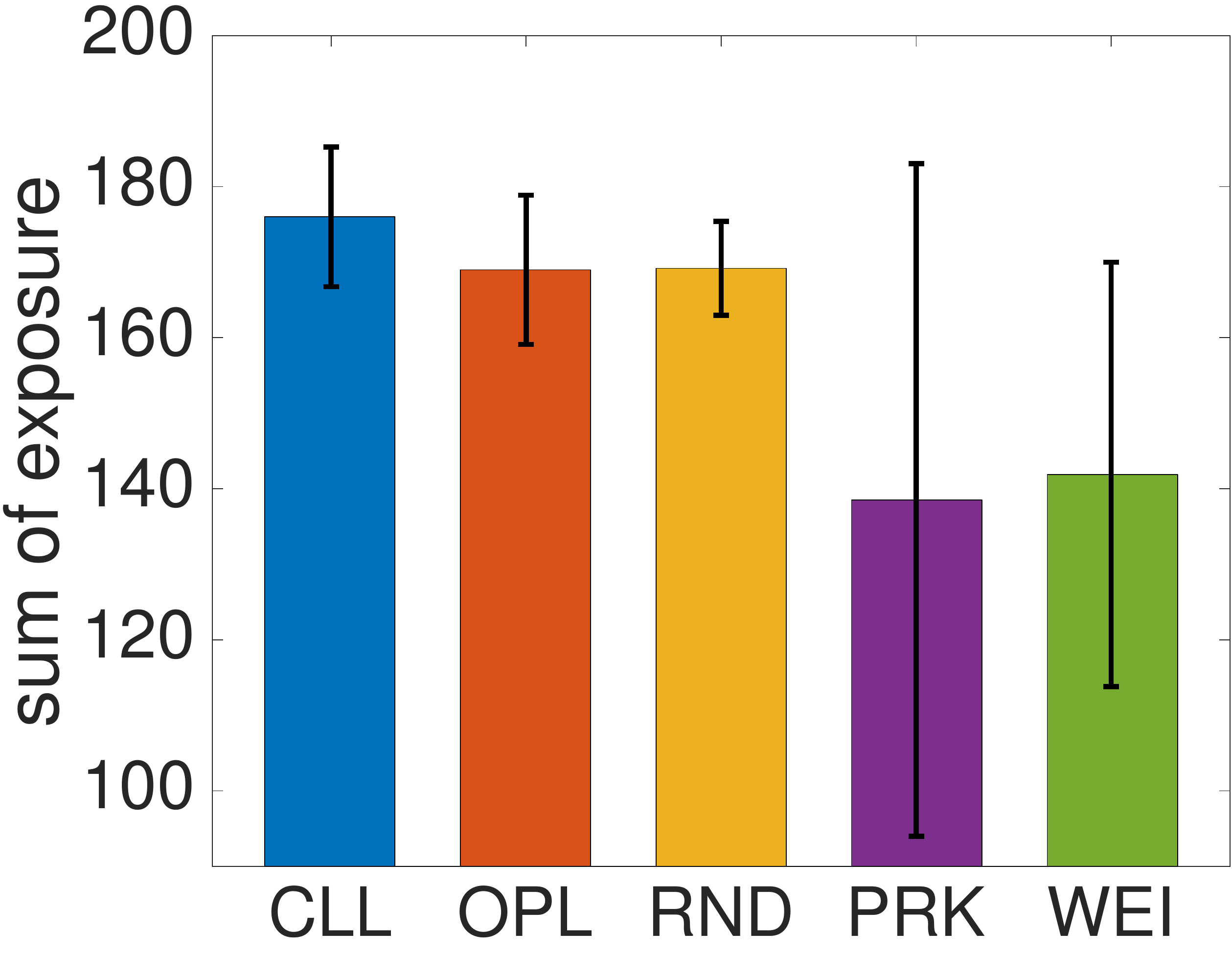} &
          \hspace{-5mm}
          \includegraphics[width=0.25\textwidth]{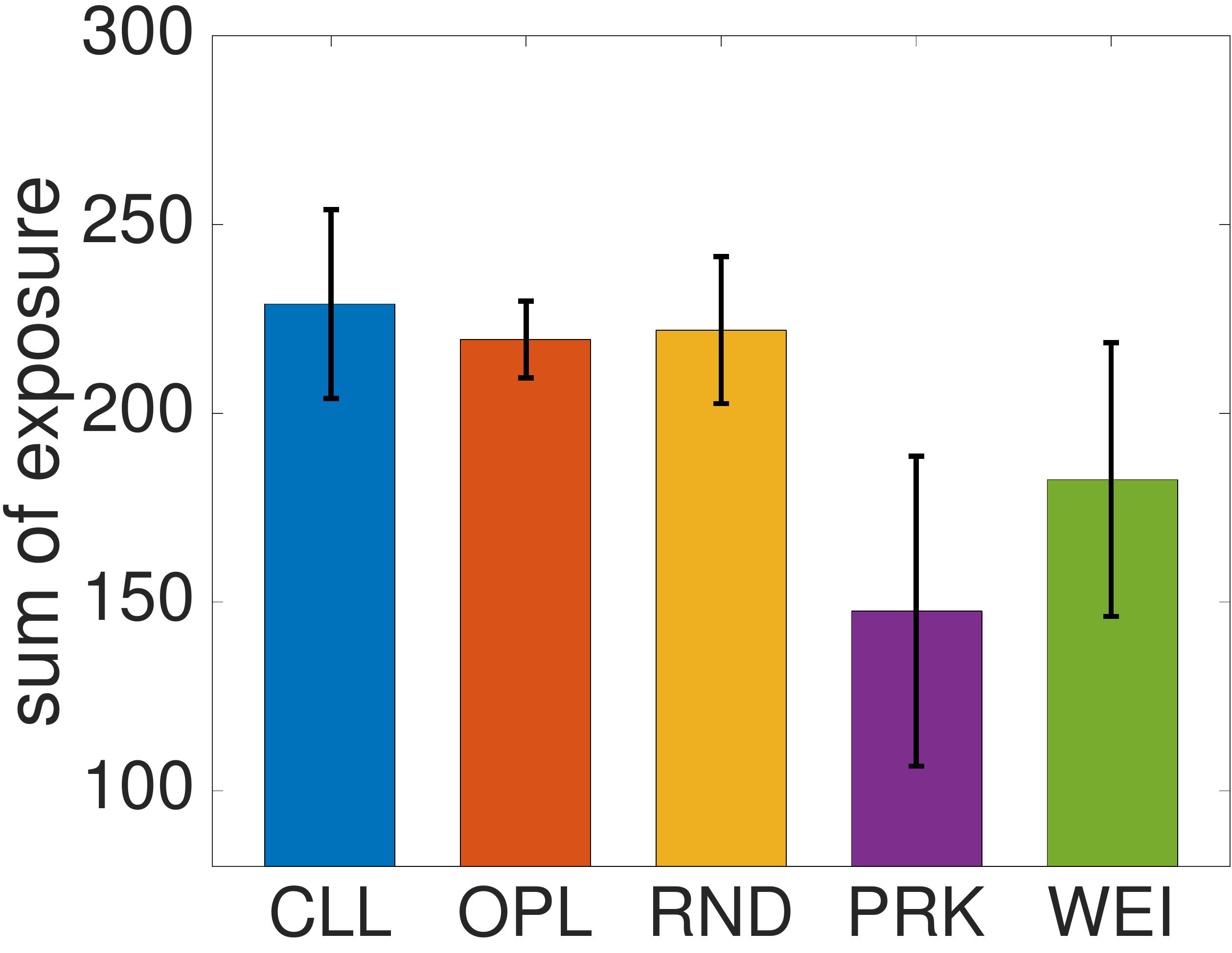} &
          \hspace{-5mm}
          \includegraphics[width=0.25\textwidth]{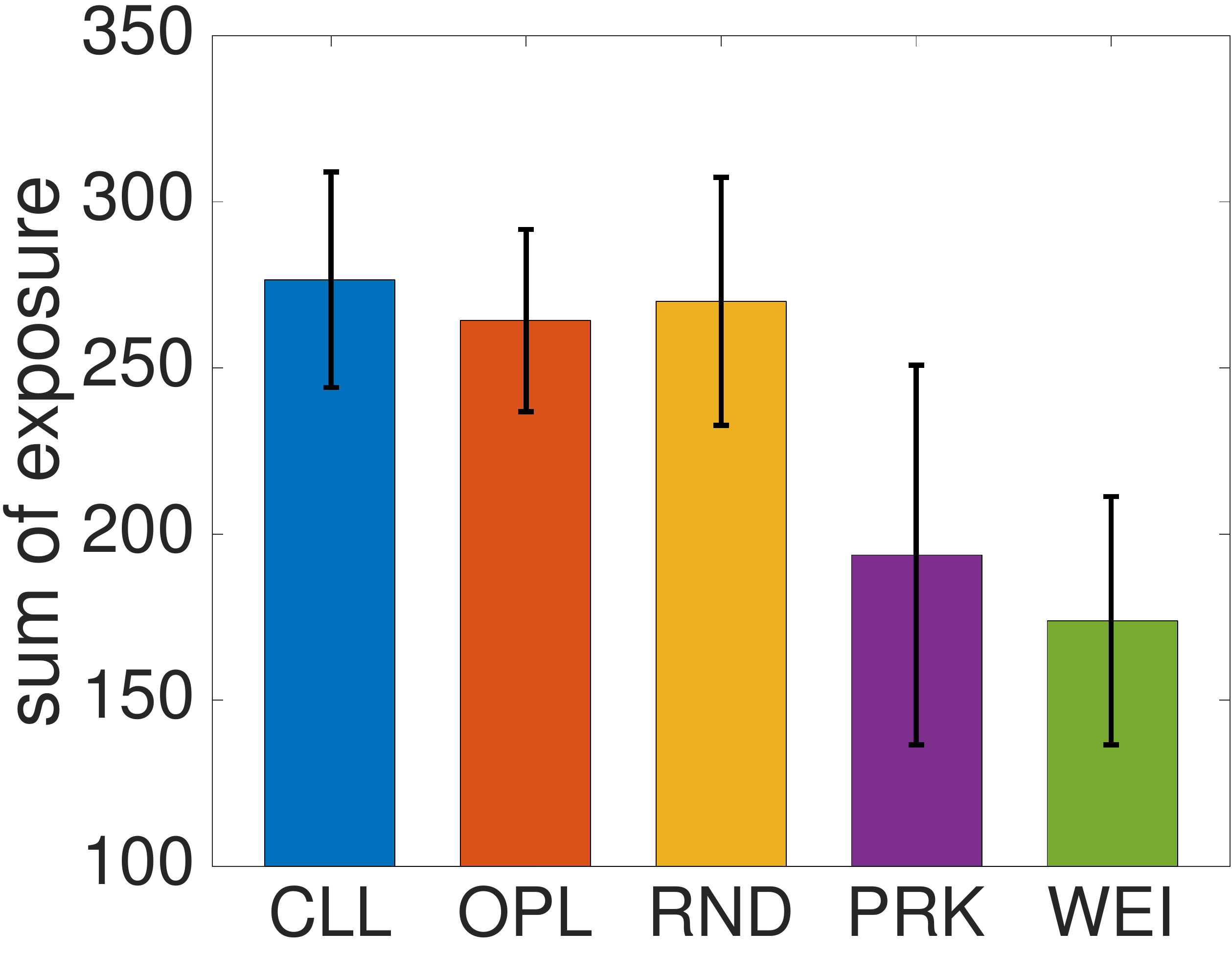}  \\
            i) $\Delta=5$ & j) $\Delta=6.66$ & k) $\Delta=8.33$ & l) $\Delta=10$\\
  \end{tabular}
  \caption{Caped exposure maximization results on synthetic data; top row: $n$ varies, $M=6$, $T=40$; 
  middle row: $M$ varies, $T=40$, $n=200$; bottom row: $T$ varies, $n=200$, $M=6$}
  \label{fig:synth-cem-results}
\end{figure*}

\begin{figure*}[!t]
  \centering
  \setlength{\tabcolsep}{6pt}
  \begin{tabular}{cccc}
          \hspace{-5mm}
          \includegraphics[width=0.25\textwidth]{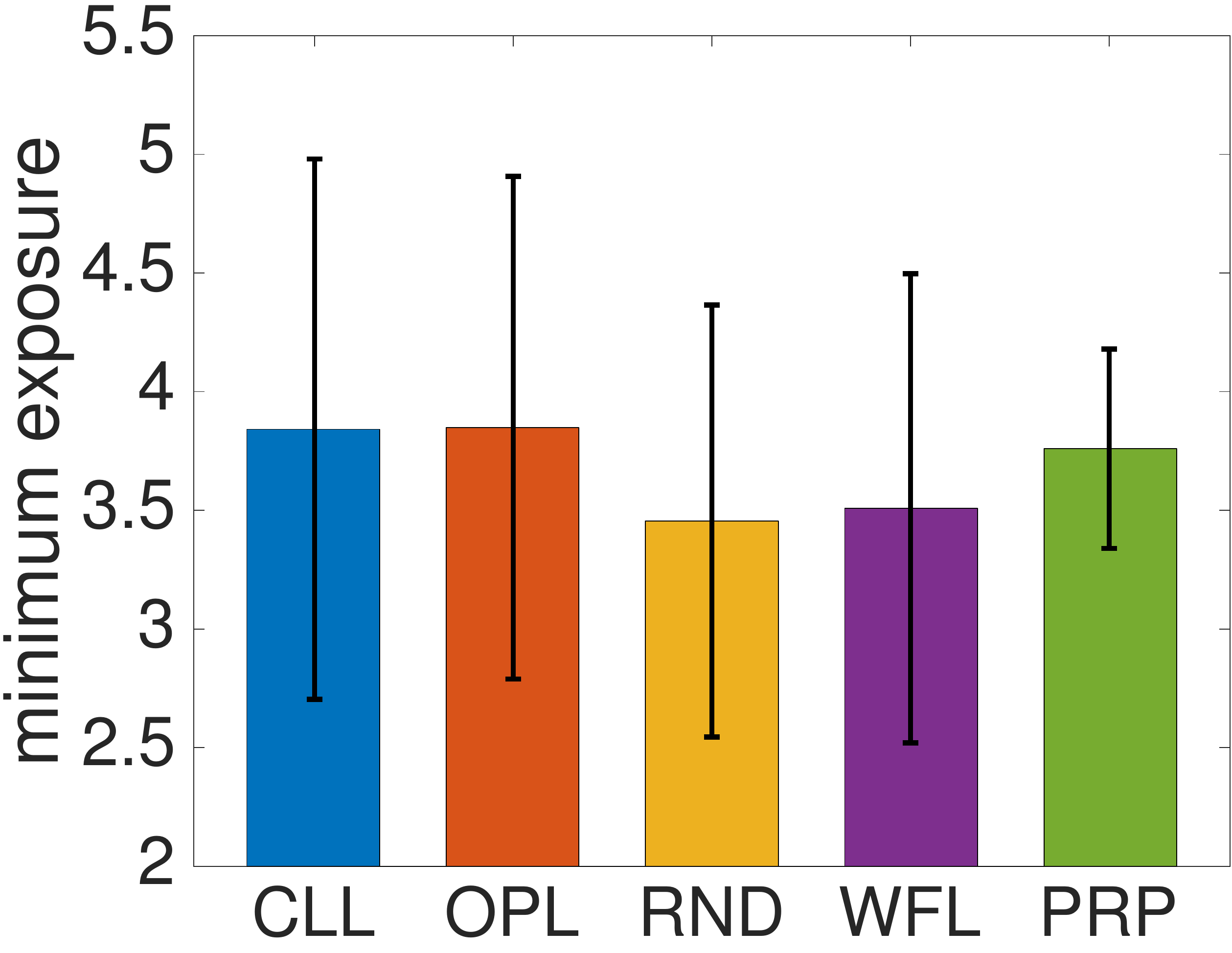} &
          \hspace{-5mm}
          \includegraphics[width=0.25\textwidth]{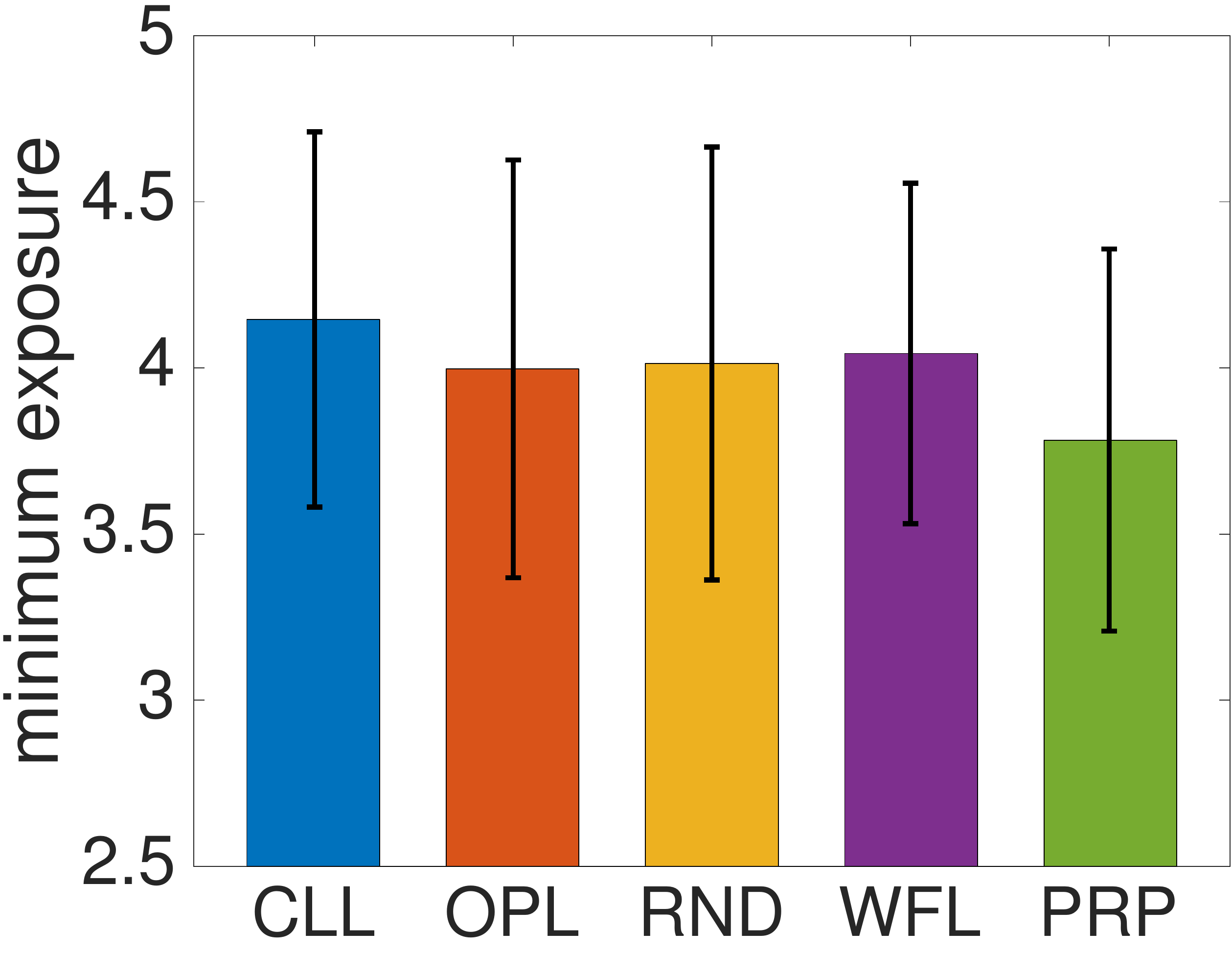} &
          \hspace{-5mm}
          \includegraphics[width=0.25\textwidth]{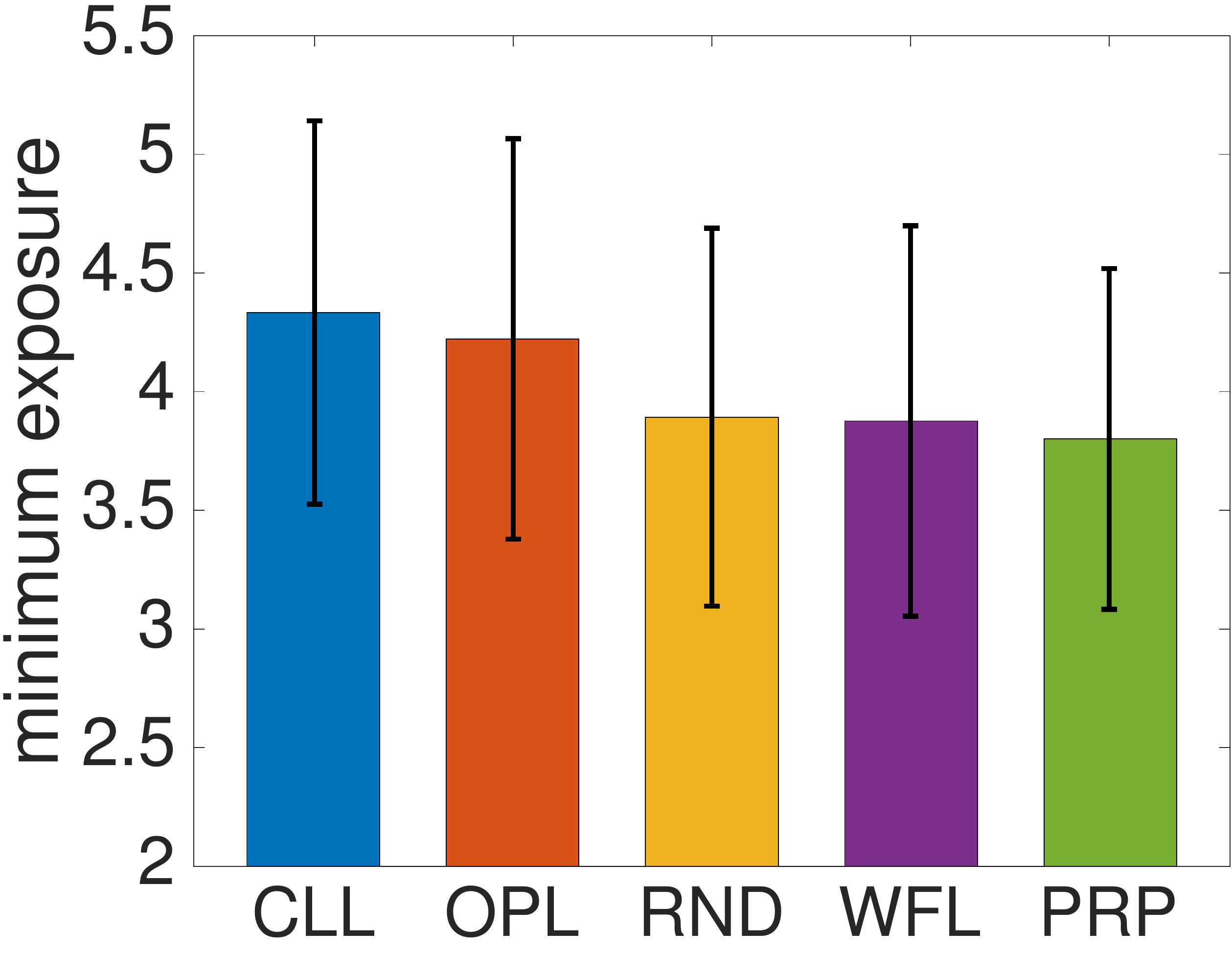} &         
          \hspace{-5mm}
          \includegraphics[width=0.25\textwidth]{fig-synth-MMESH-size-6} \\
          a) $n=150$ & b) $n=200$ & c) $n=250 $ & d) $n=300$ \\
          \hspace{-5mm}
          \includegraphics[width=0.25\textwidth]{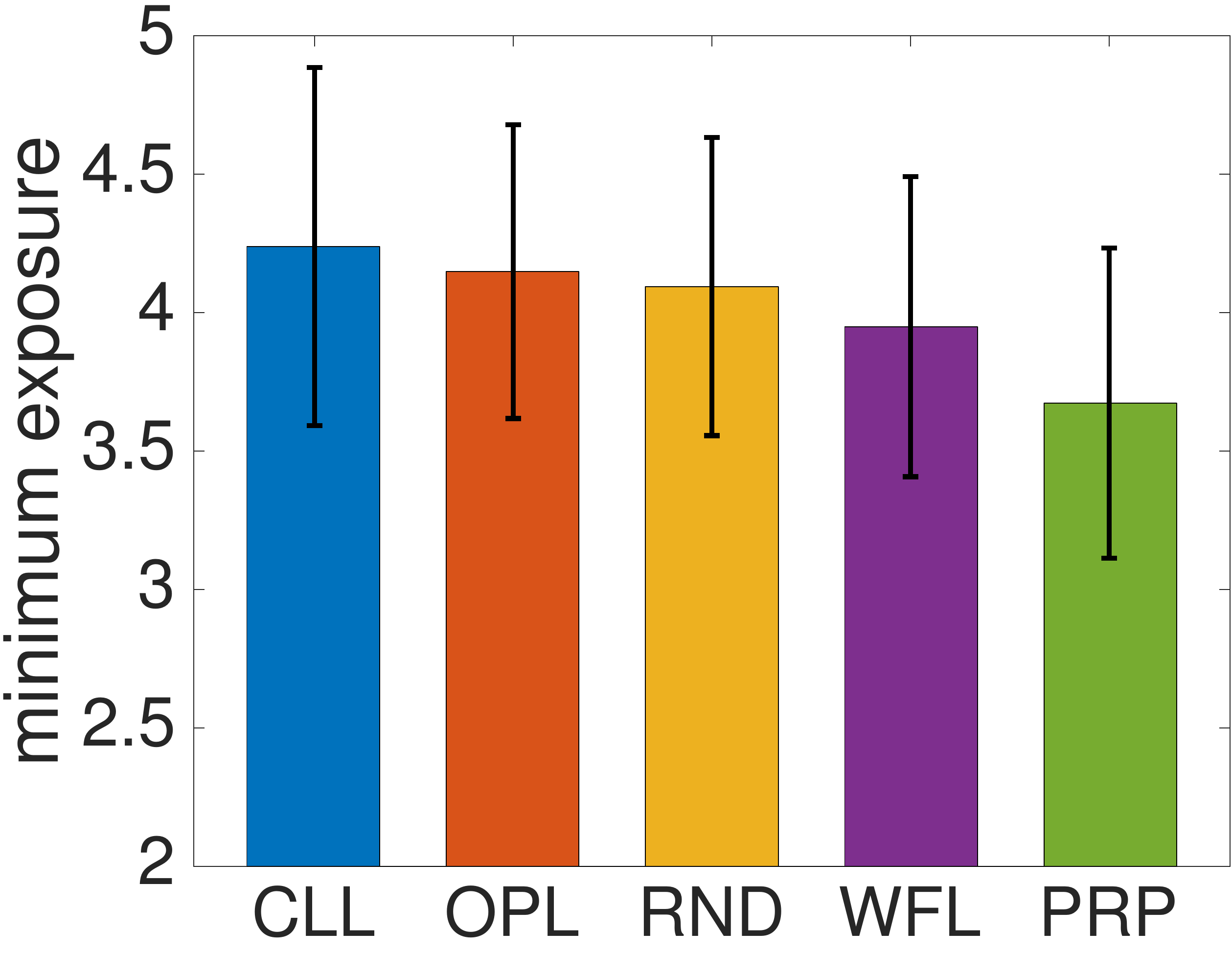} &
        \hspace{-5mm}
          \includegraphics[width=0.25\textwidth]{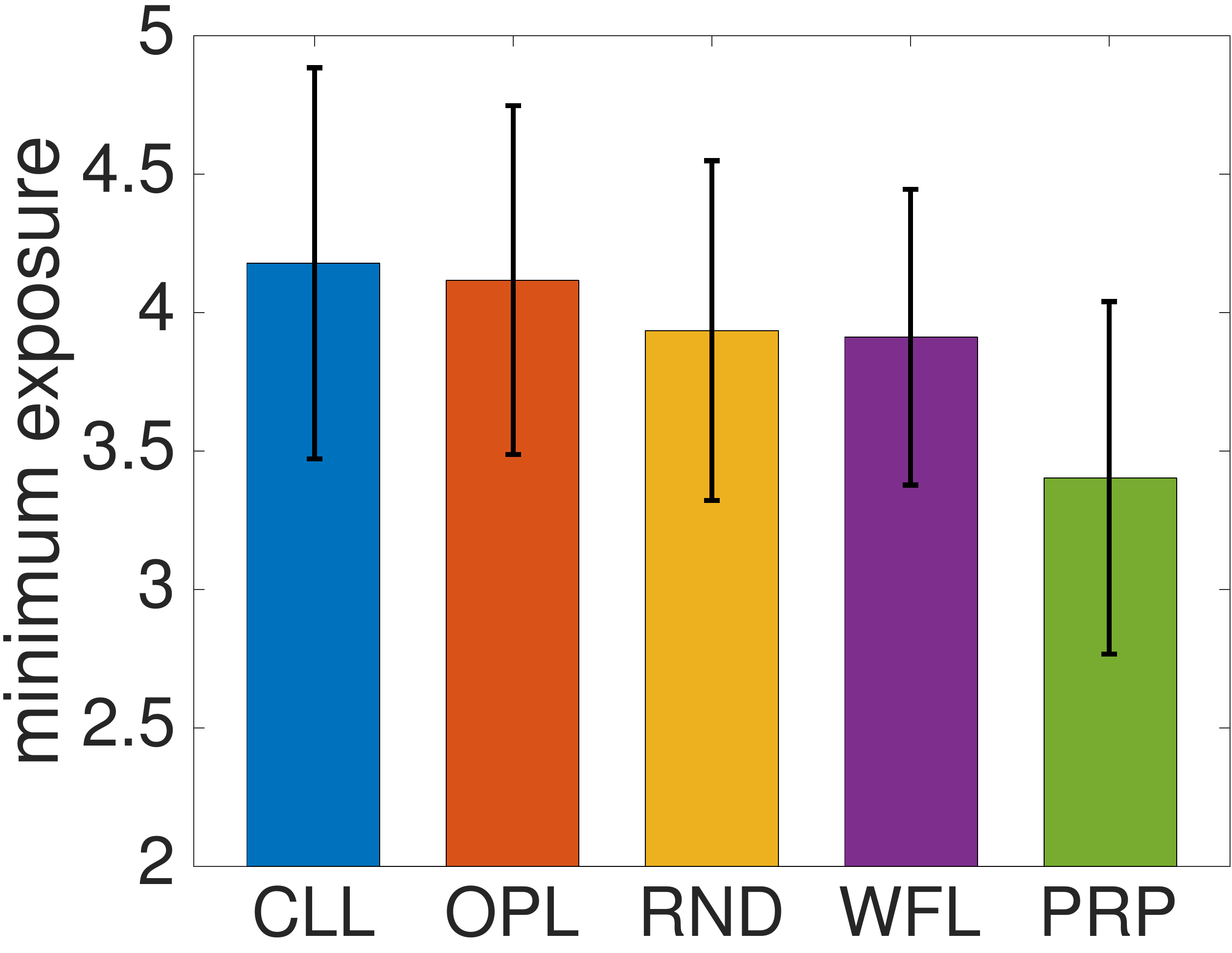} &
          \hspace{-5mm}
          \includegraphics[width=0.25\textwidth]{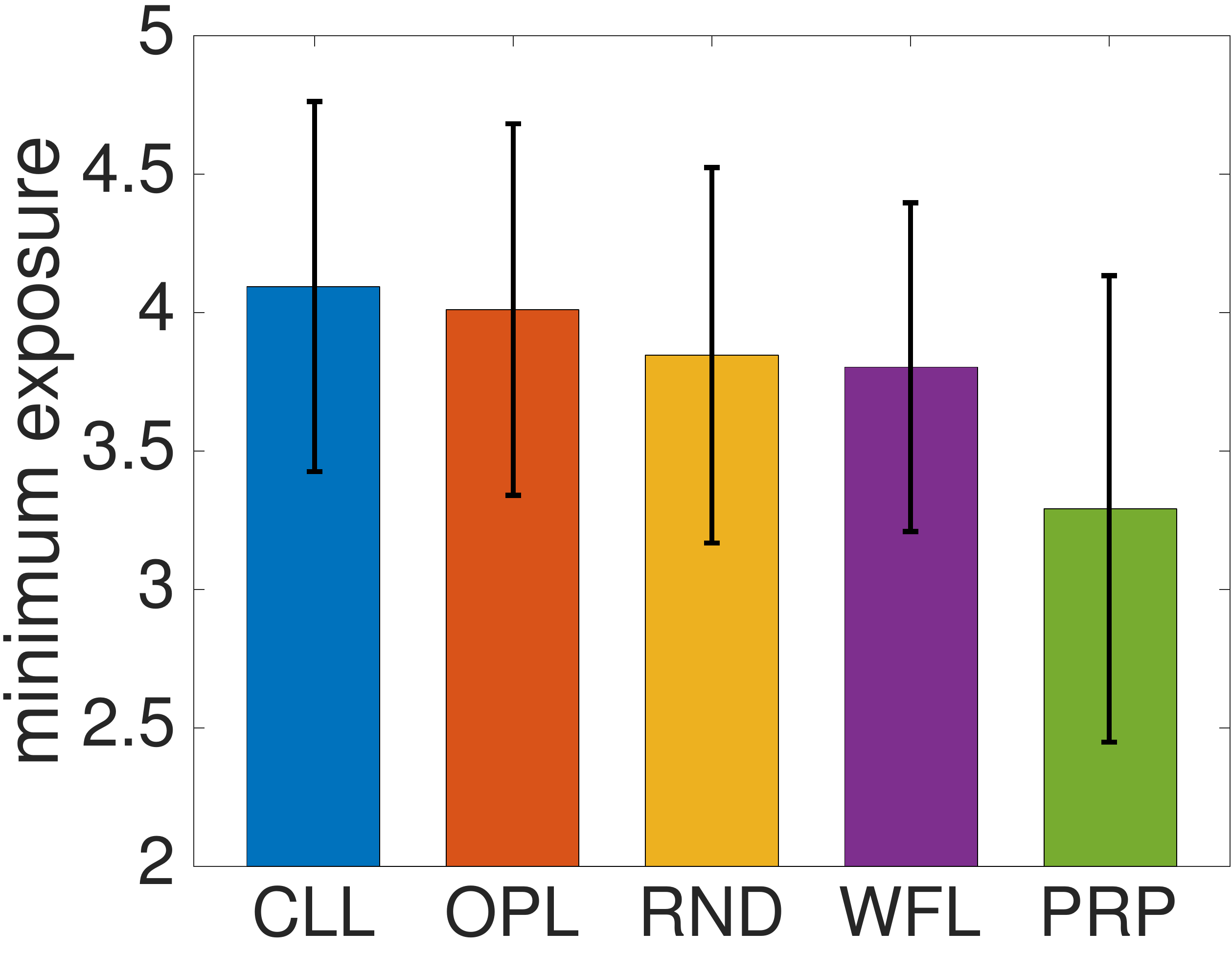} &
          \hspace{-5mm}
          \includegraphics[width=0.25\textwidth]{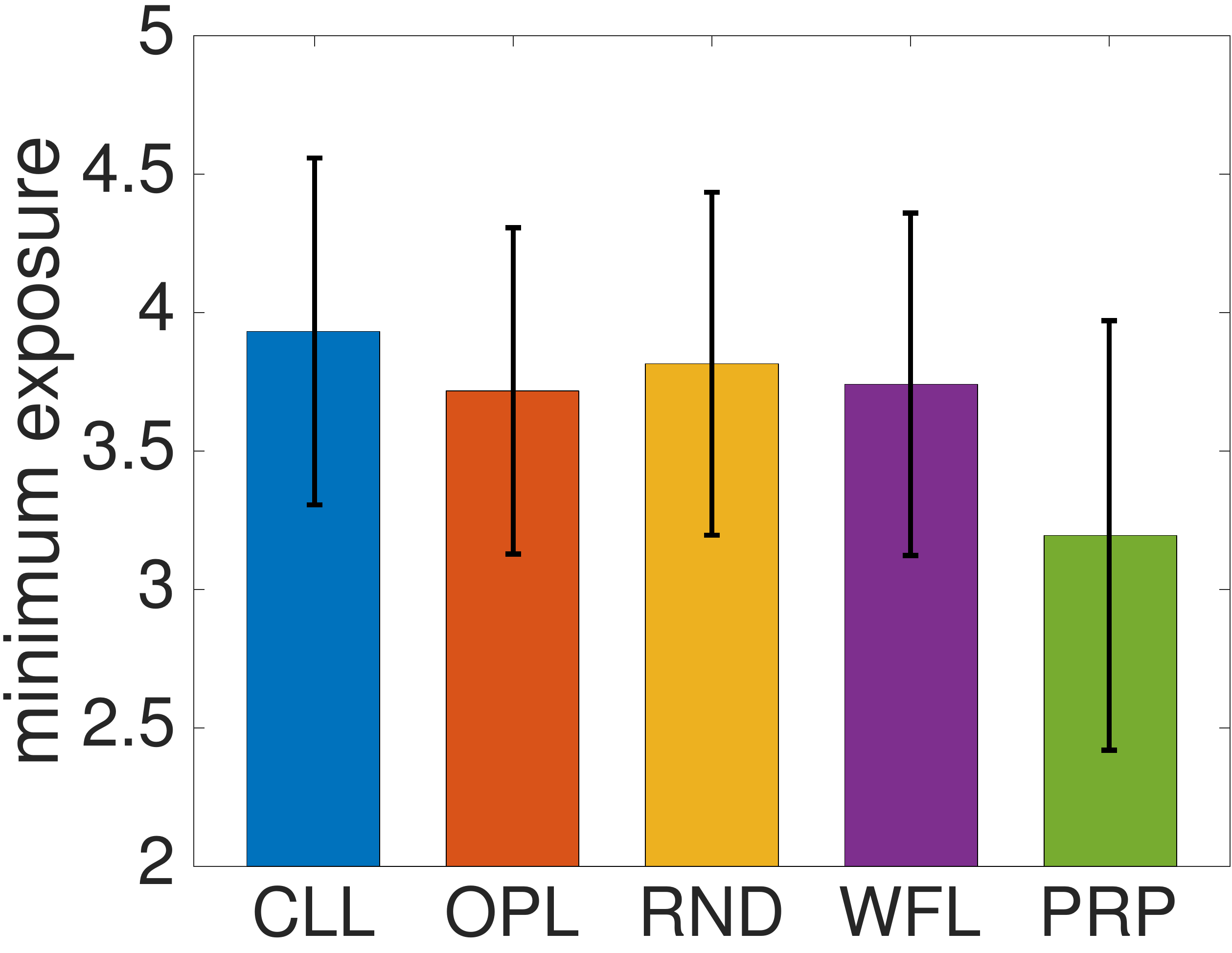}  \\
            e) $M=4$ & f) $M=6$ & g) $M=8$ & h) $M=10$ \\
          \hspace{-5mm}
          \includegraphics[width=0.25\textwidth]{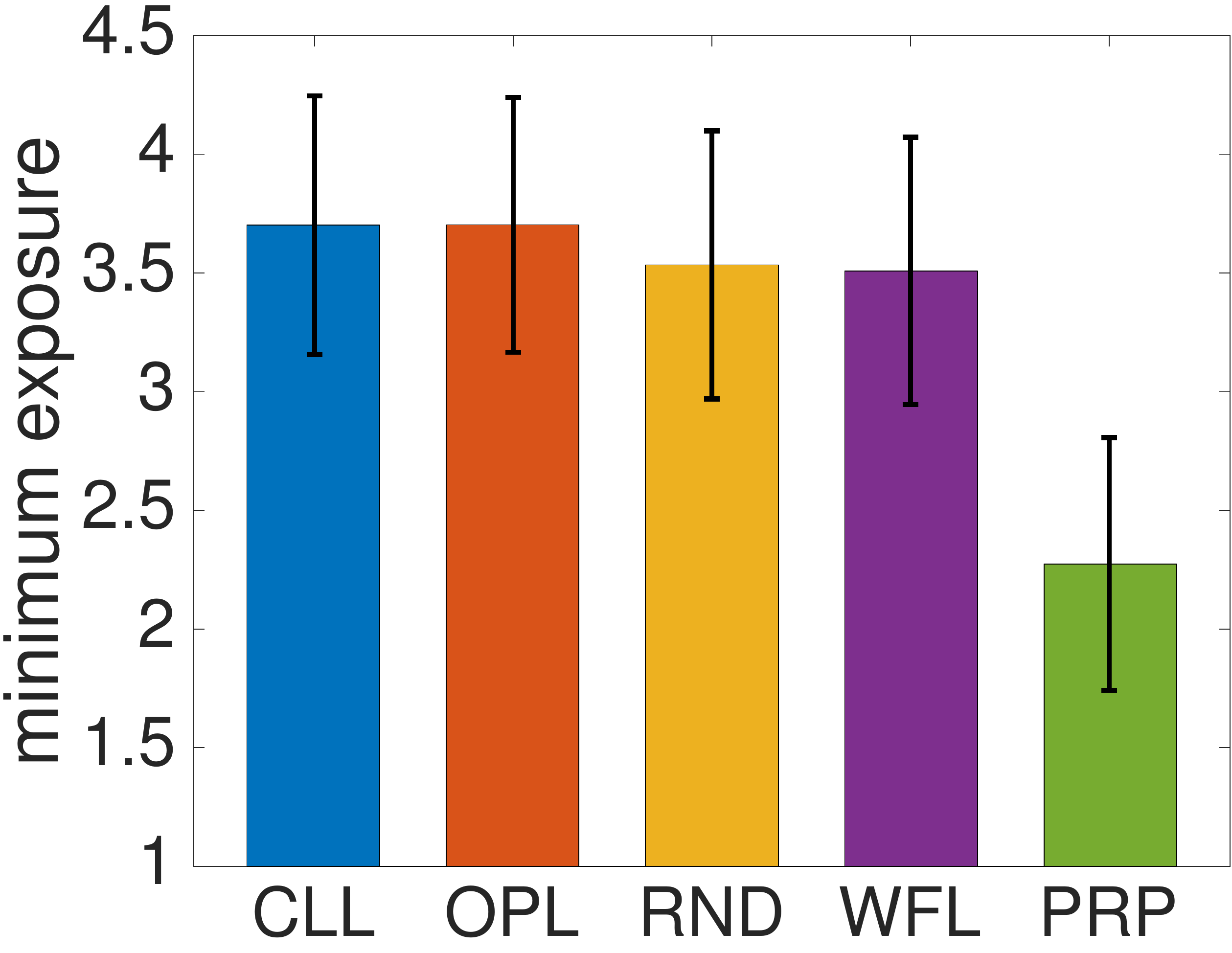} &
        \hspace{-5mm}
          \includegraphics[width=0.25\textwidth]{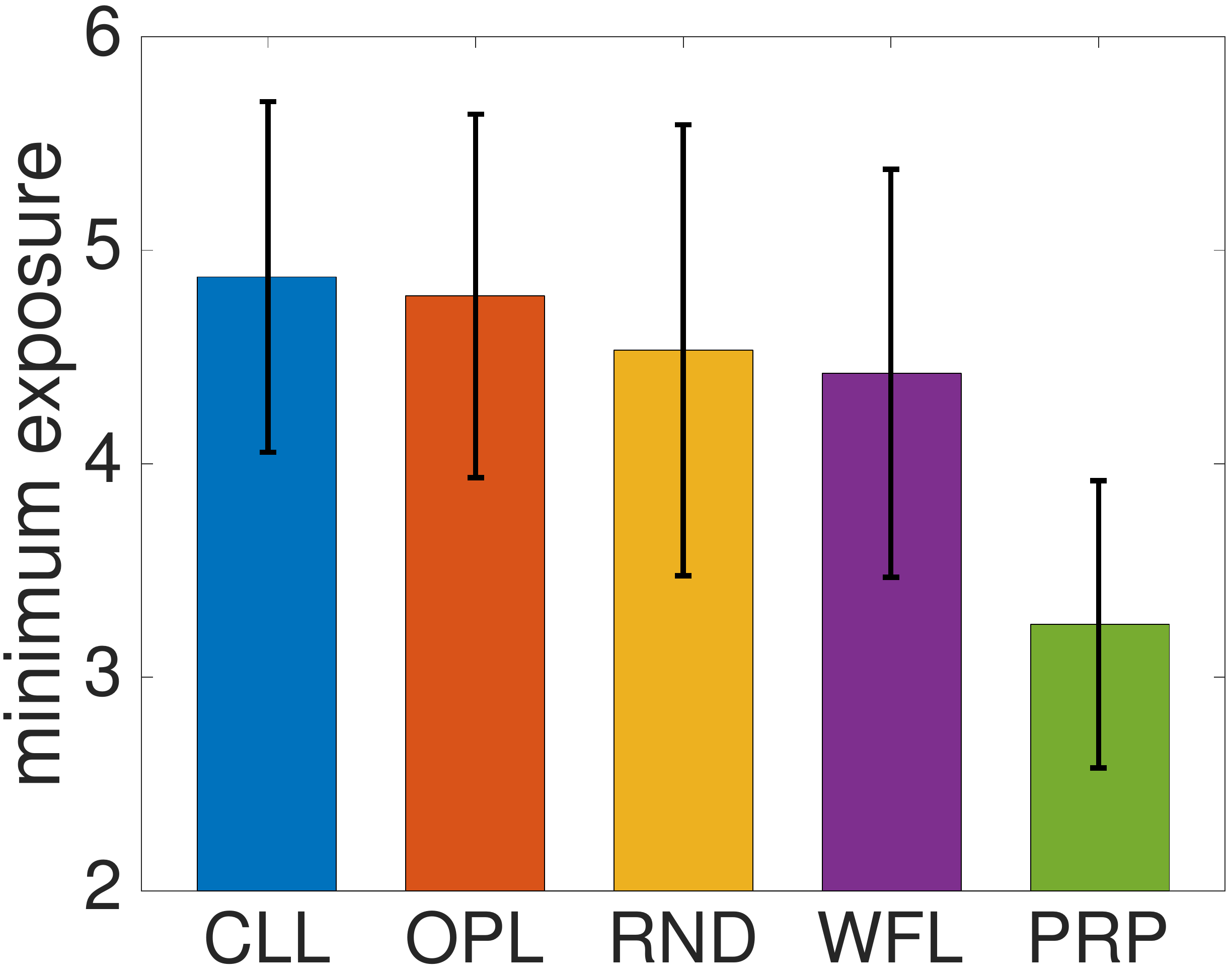} &
          \hspace{-5mm}
          \includegraphics[width=0.25\textwidth]{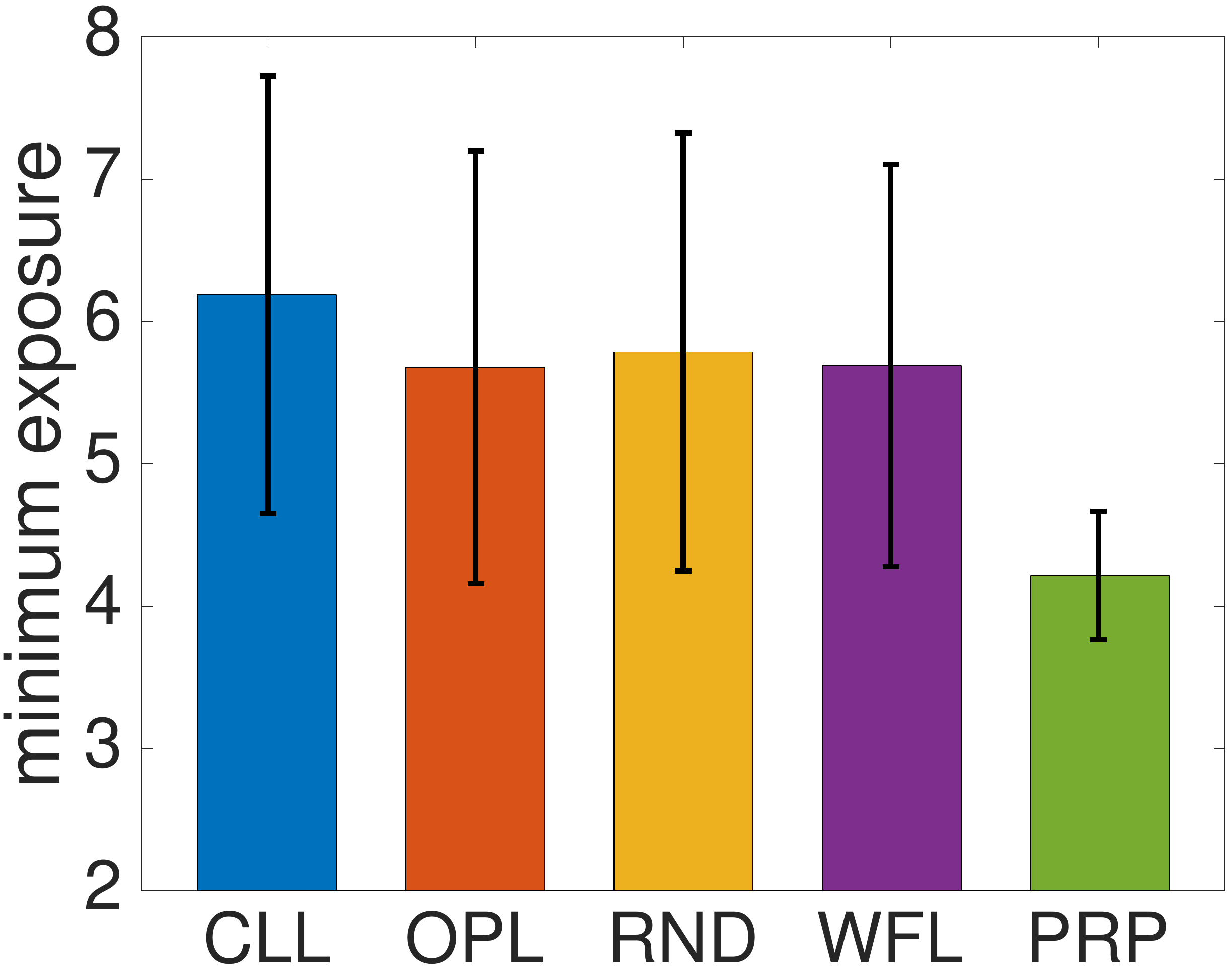} &
          \hspace{-5mm}
          \includegraphics[width=0.25\textwidth]{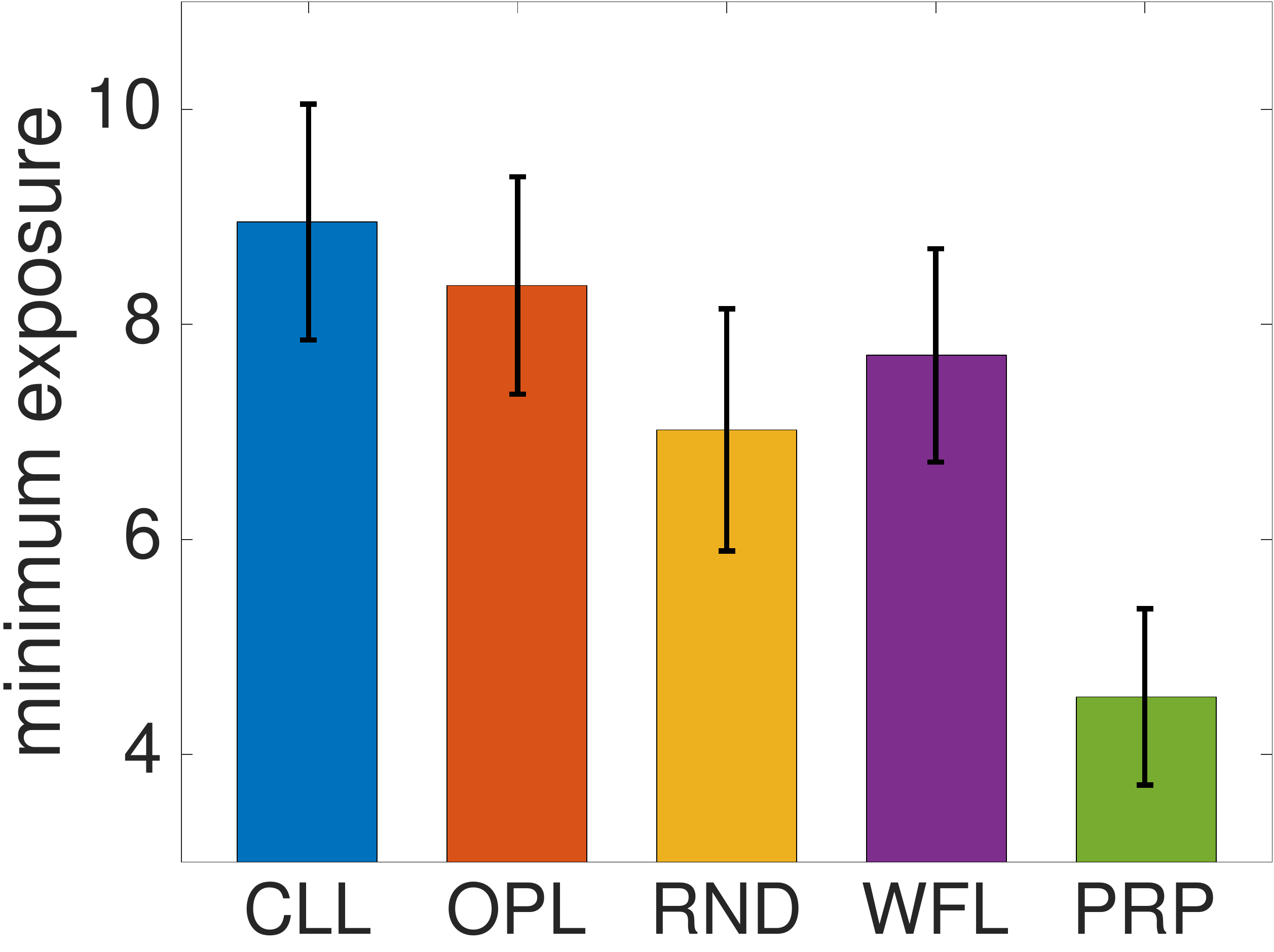}  \\
            i) $\Delta=5$ & j) $\Delta=6.66$ & k) $\Delta=8.33$ & l) $\Delta=10$\\
  \end{tabular}
  \caption{Minimum exposure maximization results on synthetic data; top row: $n$ varies, $M=6$, $T=40$; 
  middle row: $M$ varies, $T=40$, $n=200$; bottom row: $T$ varies, $n=200$, $M=6$}
  \label{fig:synth-mmesh-results}
\end{figure*}

\begin{figure*}[!t]
  \centering
  \setlength{\tabcolsep}{6pt}
  \begin{tabular}{cccc}
          \hspace{-5mm}
          \includegraphics[width=0.25\textwidth]{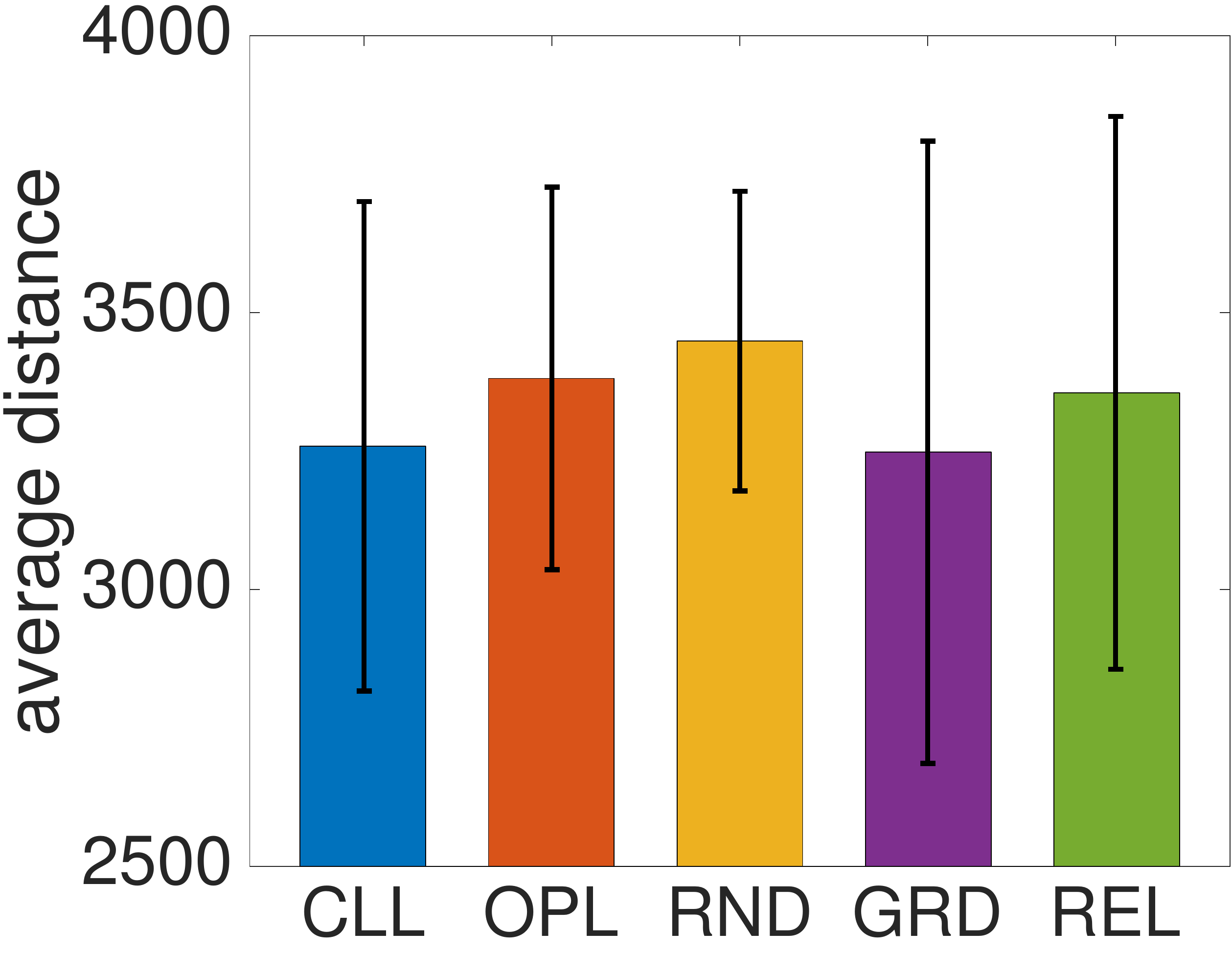} &
          \hspace{-5mm}
          \includegraphics[width=0.25\textwidth]{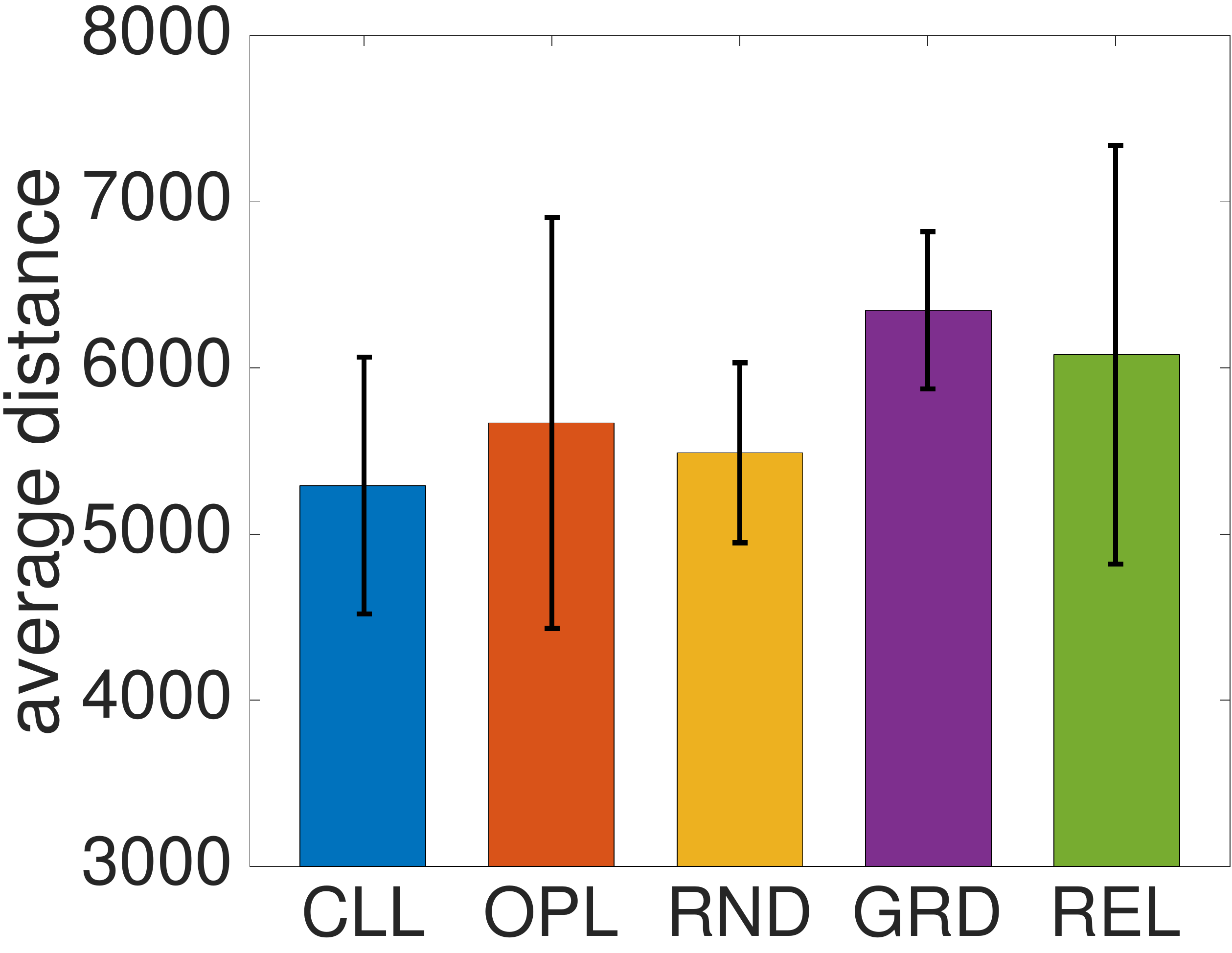} &
          \hspace{-5mm}
          \includegraphics[width=0.25\textwidth]{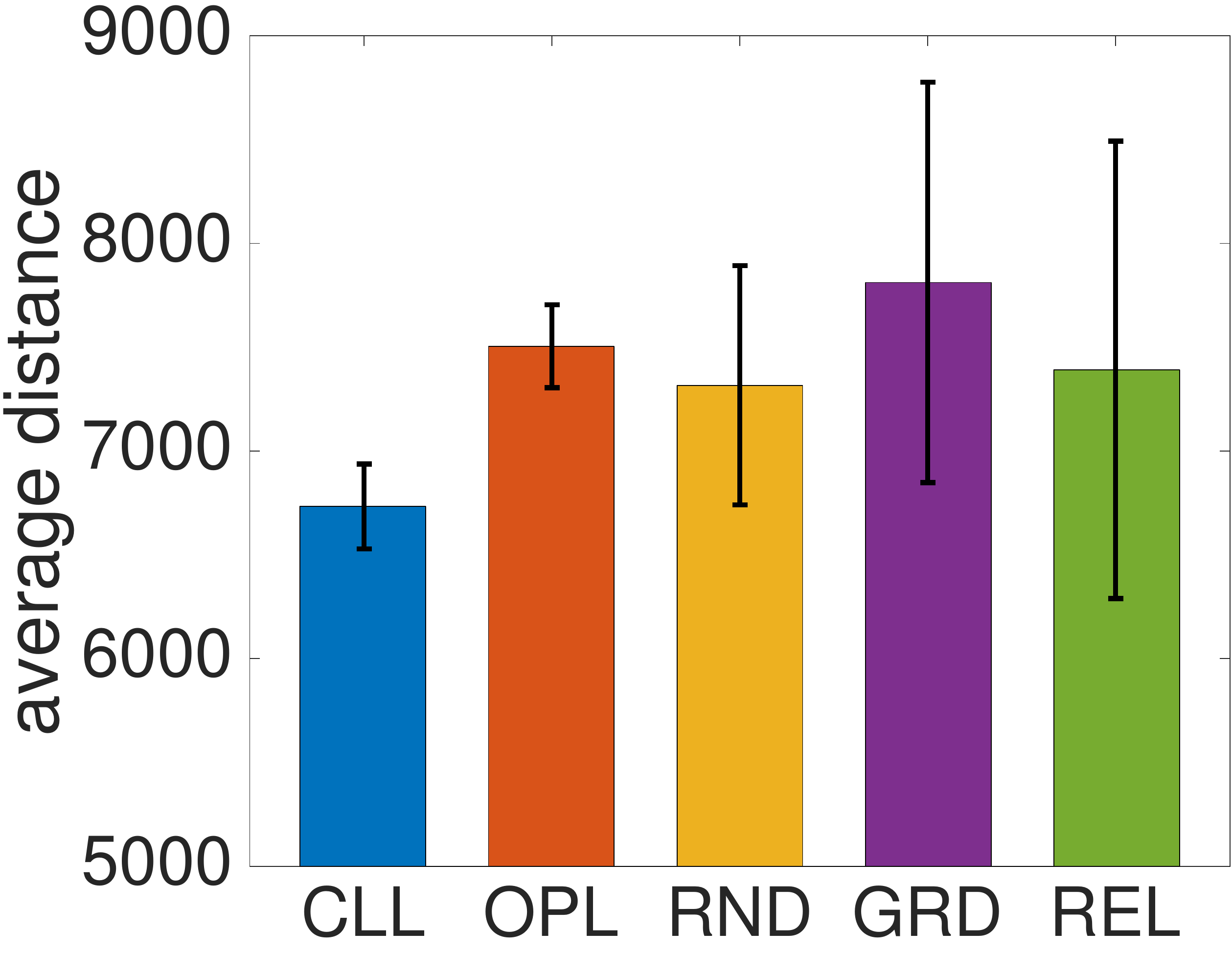} &         
          \hspace{-5mm}
          \includegraphics[width=0.25\textwidth]{fig-synth-LSESH-size-6} \\
          a) $n=150$ & b) $n=200$ & c) $n=250 $ & d) $n=300$ \\
          \hspace{-5mm}
          \includegraphics[width=0.25\textwidth]{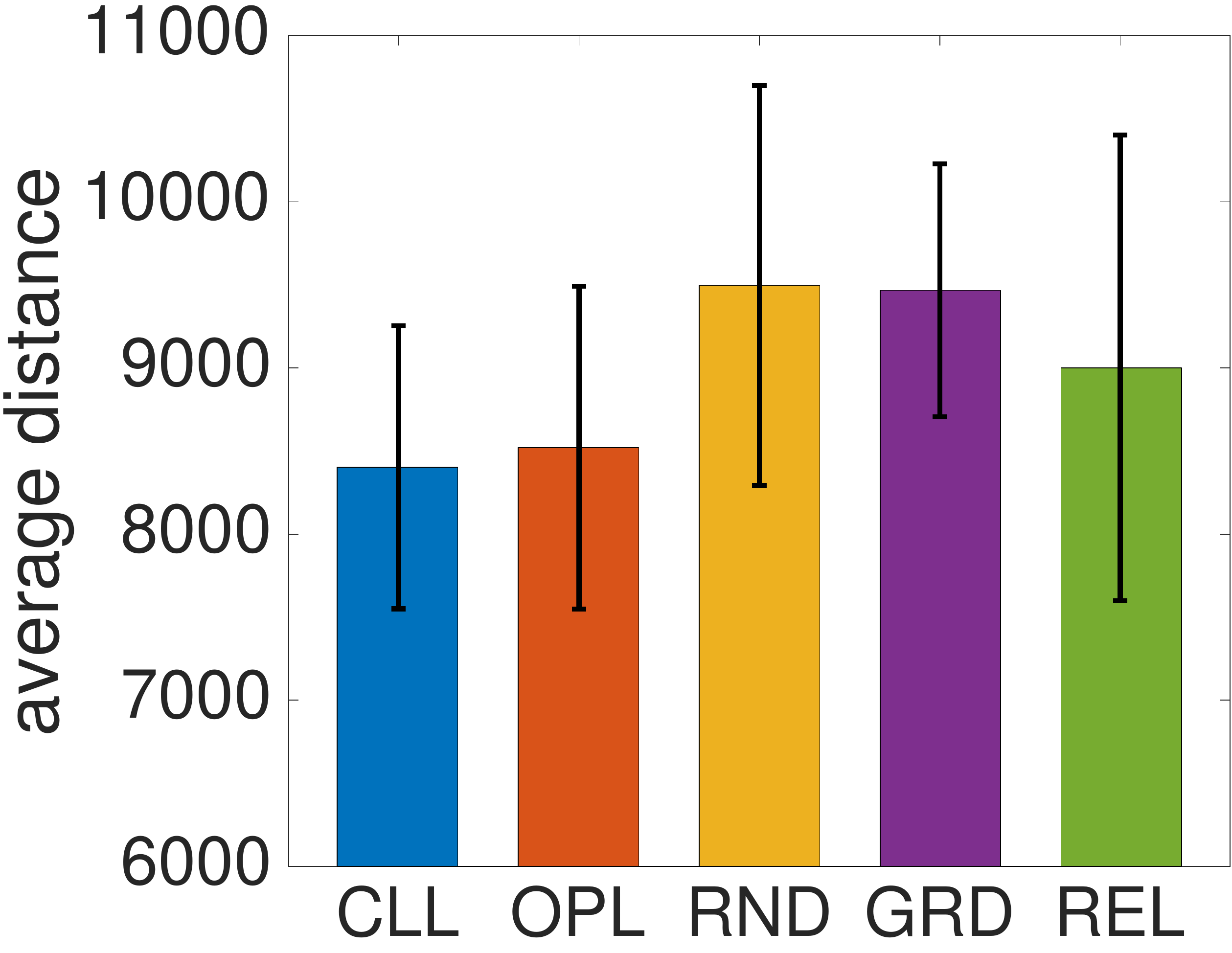} &
        \hspace{-5mm}
          \includegraphics[width=0.25\textwidth]{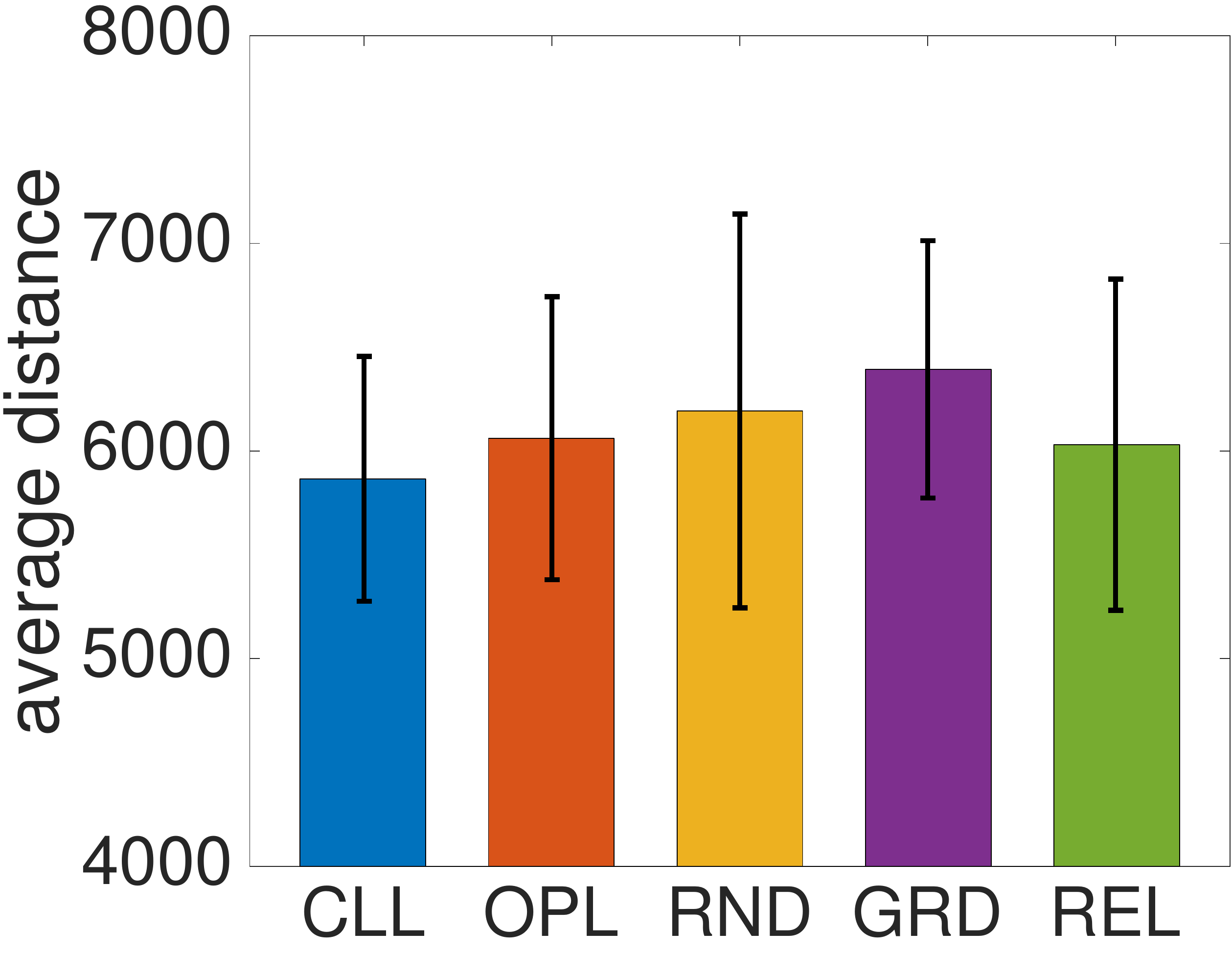} &
          \hspace{-5mm}
          \includegraphics[width=0.25\textwidth]{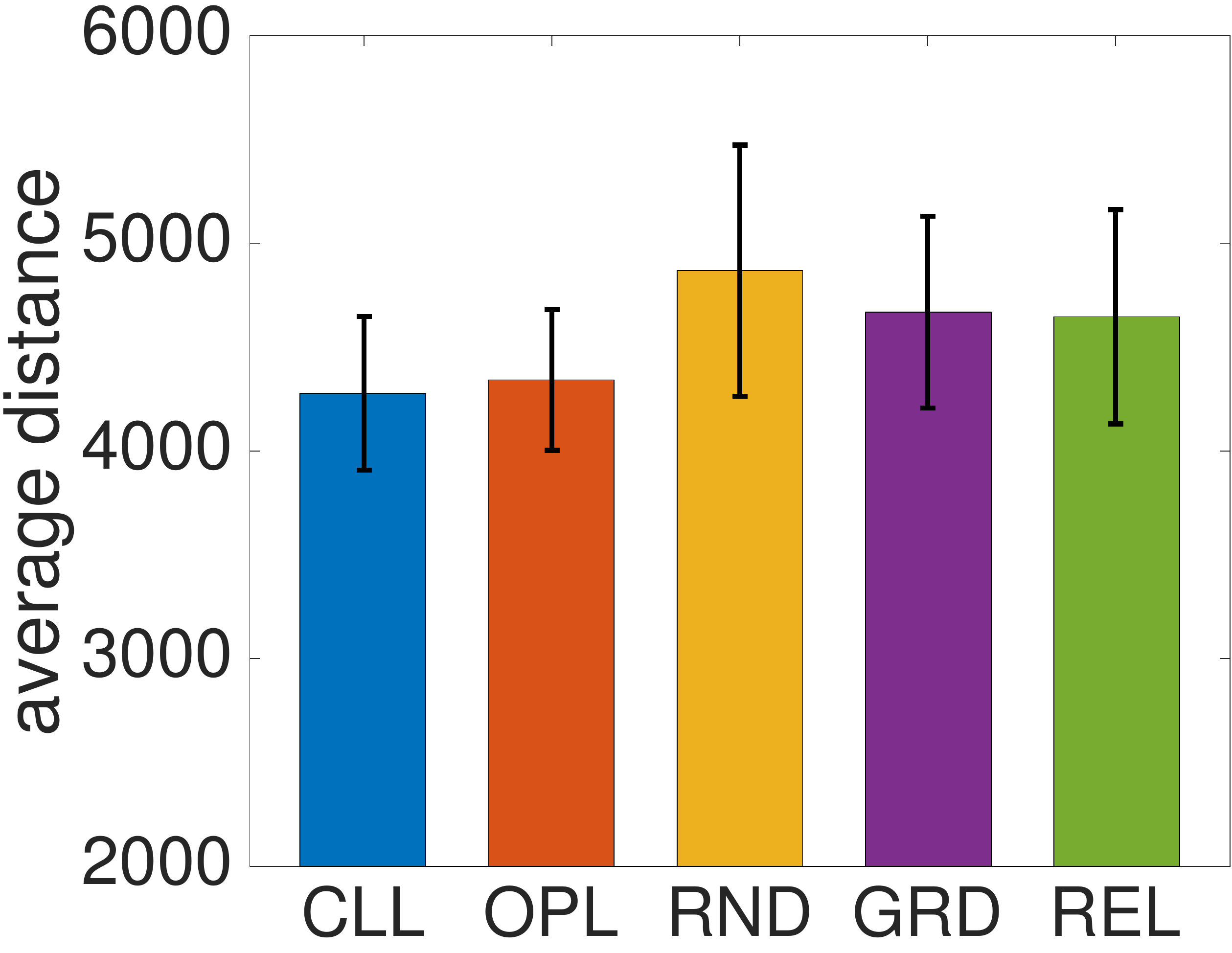} &
          \hspace{-5mm}
          \includegraphics[width=0.25\textwidth]{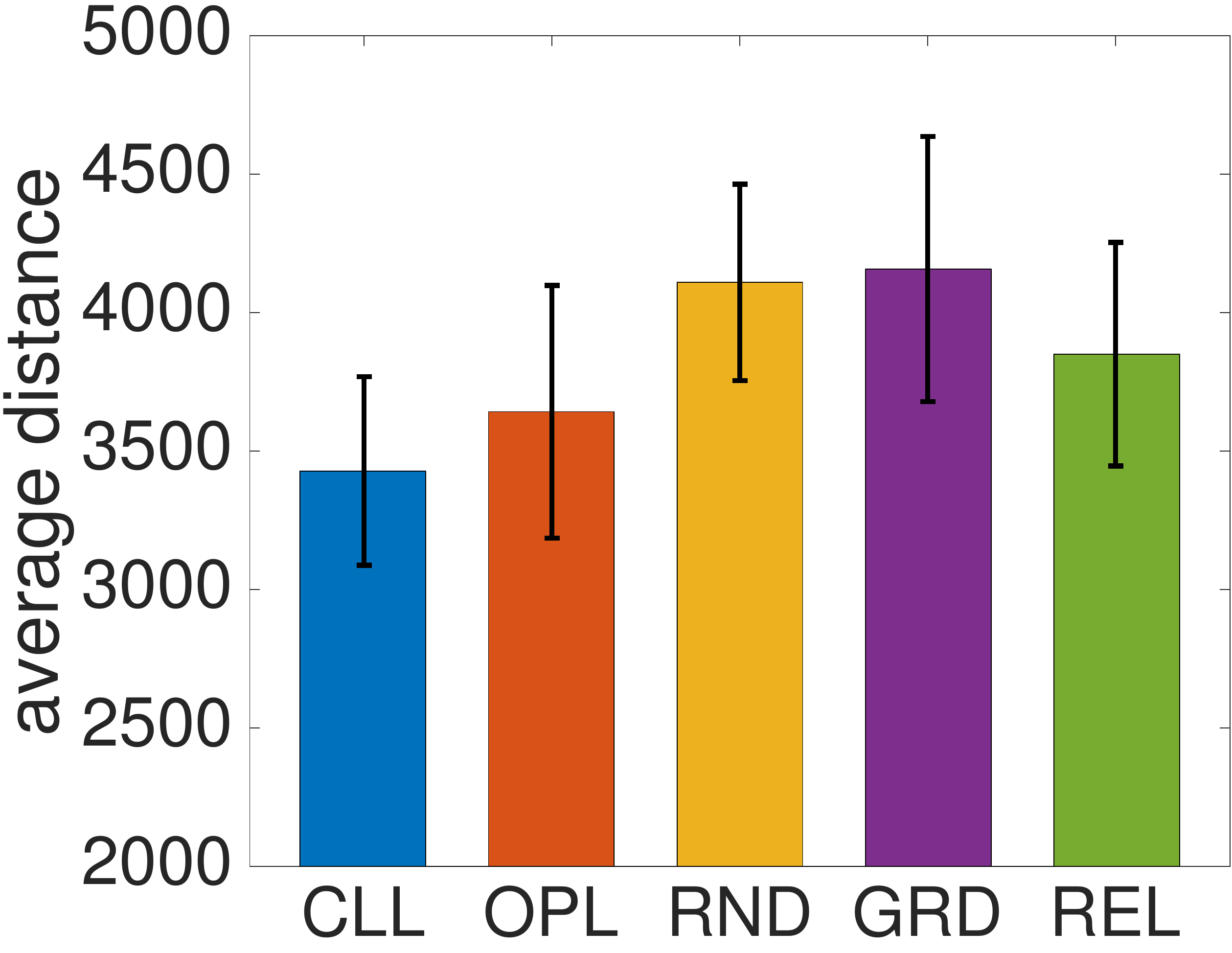}  \\
            e) $M=4$ & f) $M=6$ & g) $M=8$ & h) $M=10$ \\
          \hspace{-5mm}
          \includegraphics[width=0.25\textwidth]{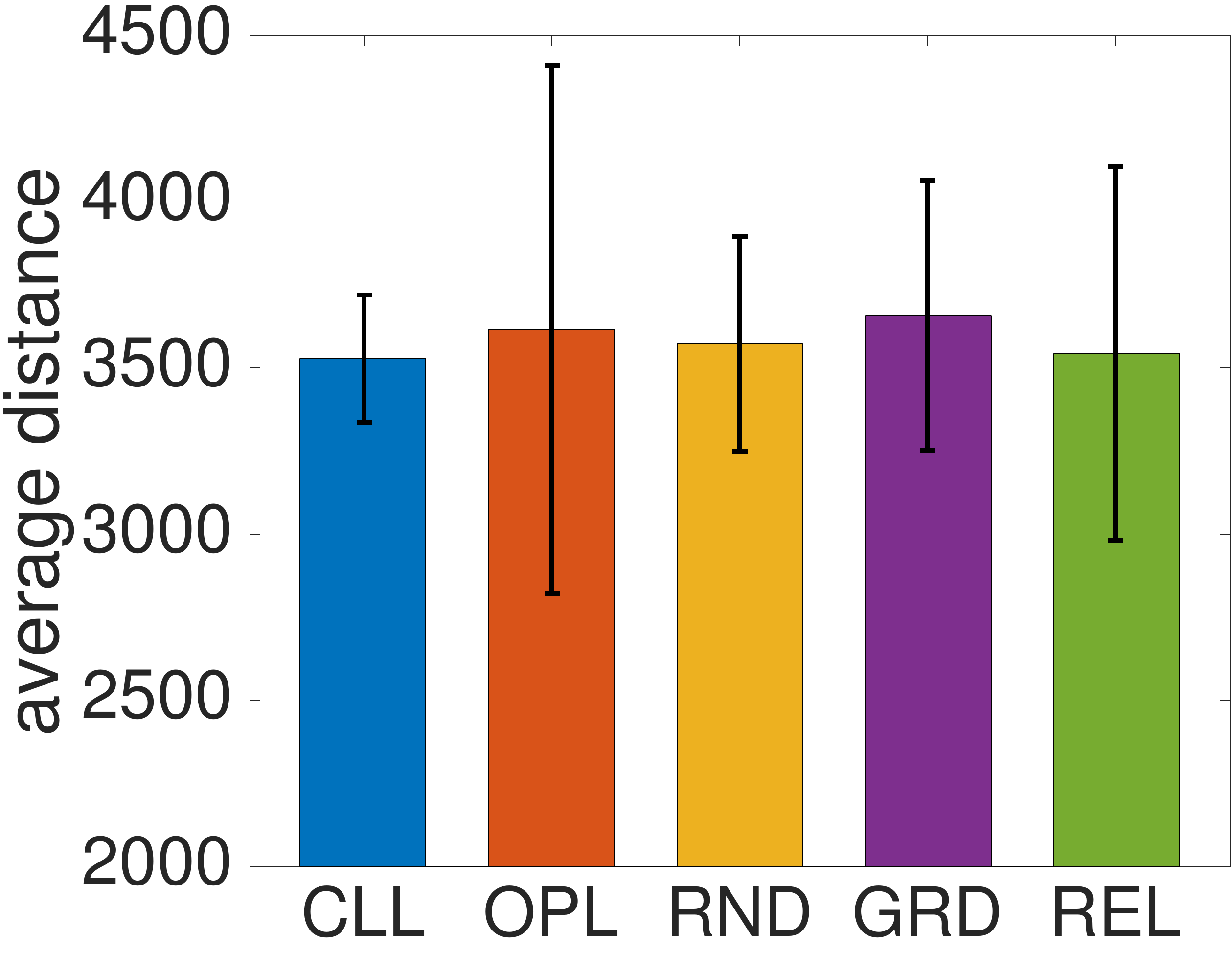} &
        \hspace{-5mm}
          \includegraphics[width=0.25\textwidth]{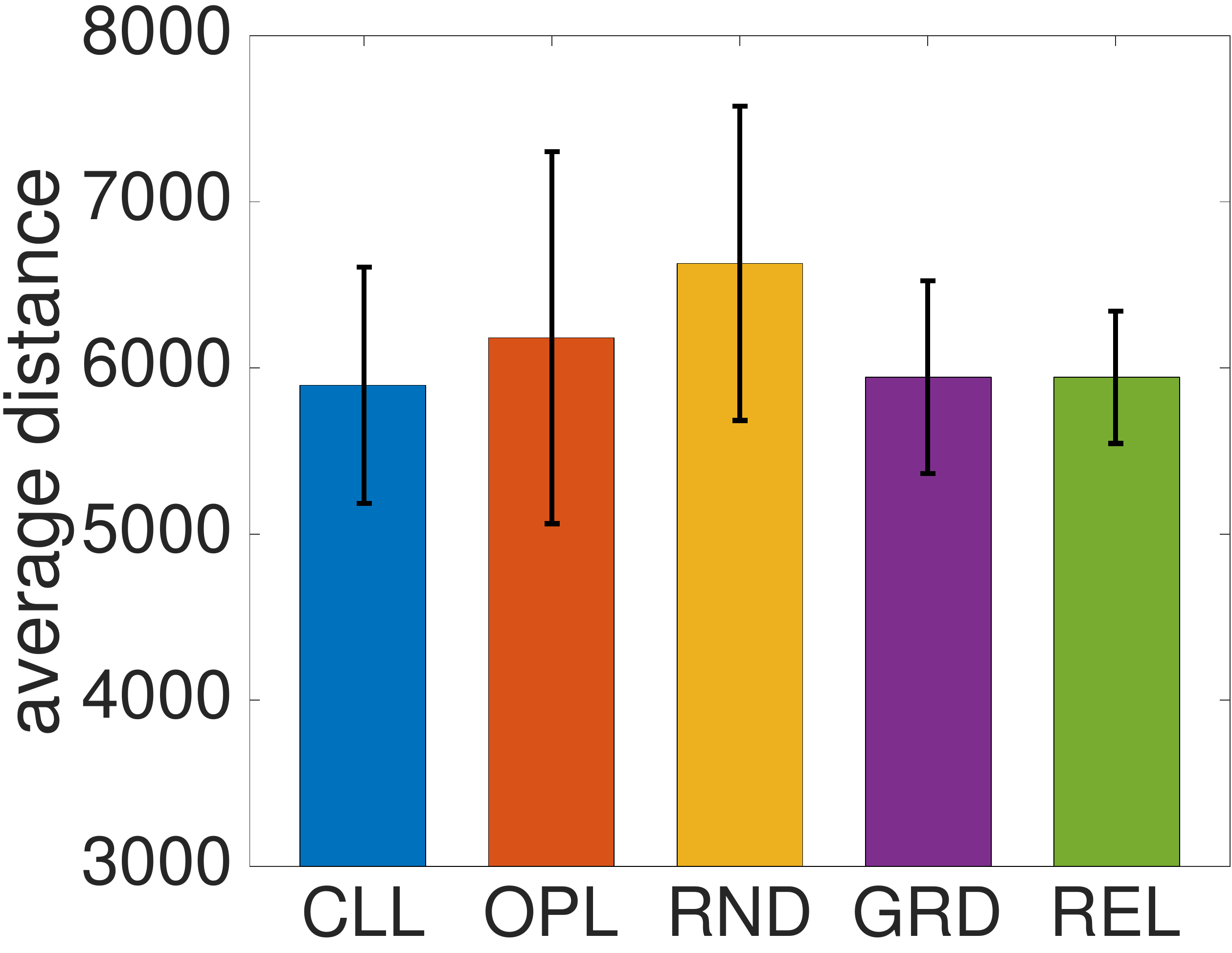} &
          \hspace{-5mm}
          \includegraphics[width=0.25\textwidth]{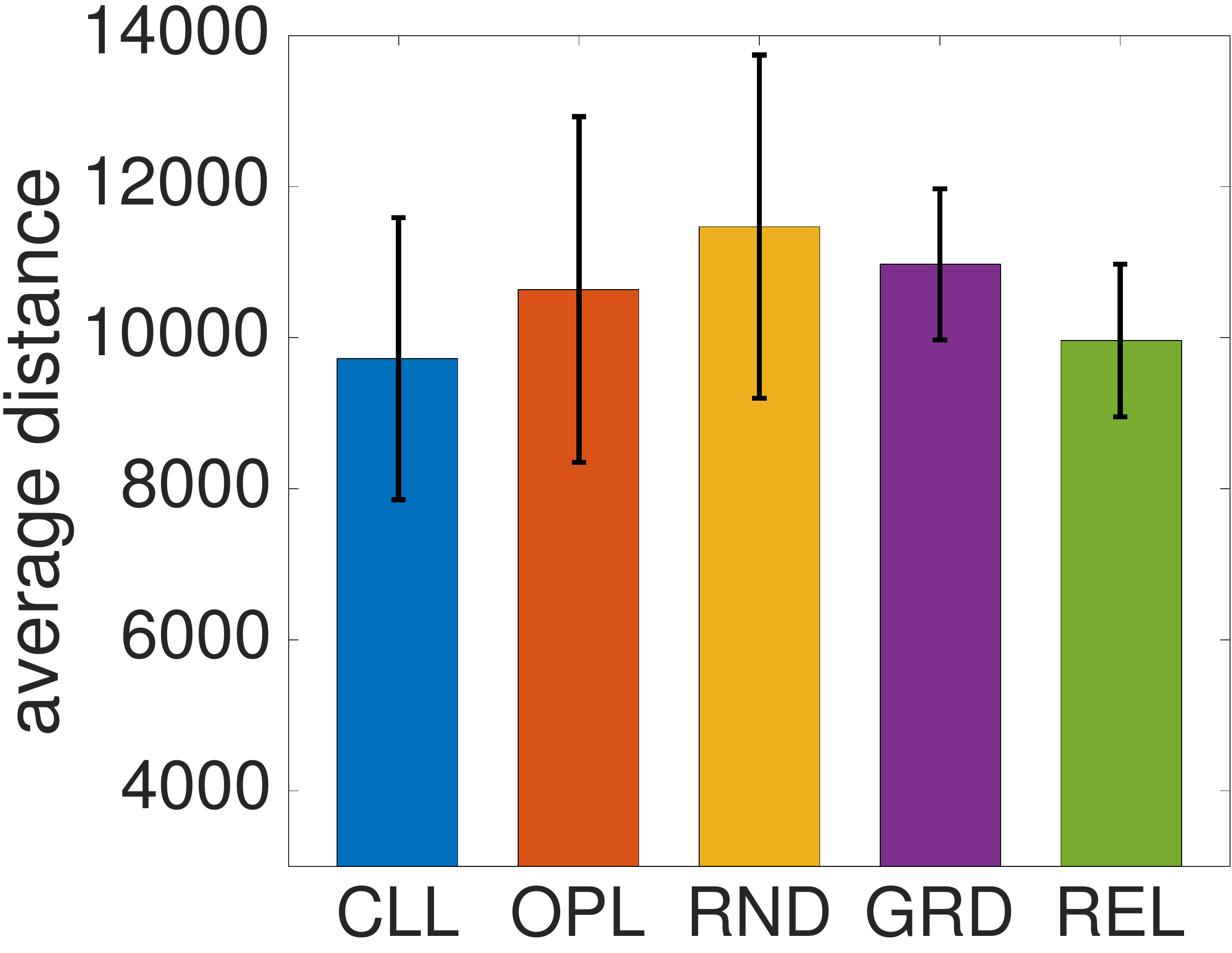} &
          \hspace{-5mm}
          \includegraphics[width=0.25\textwidth]{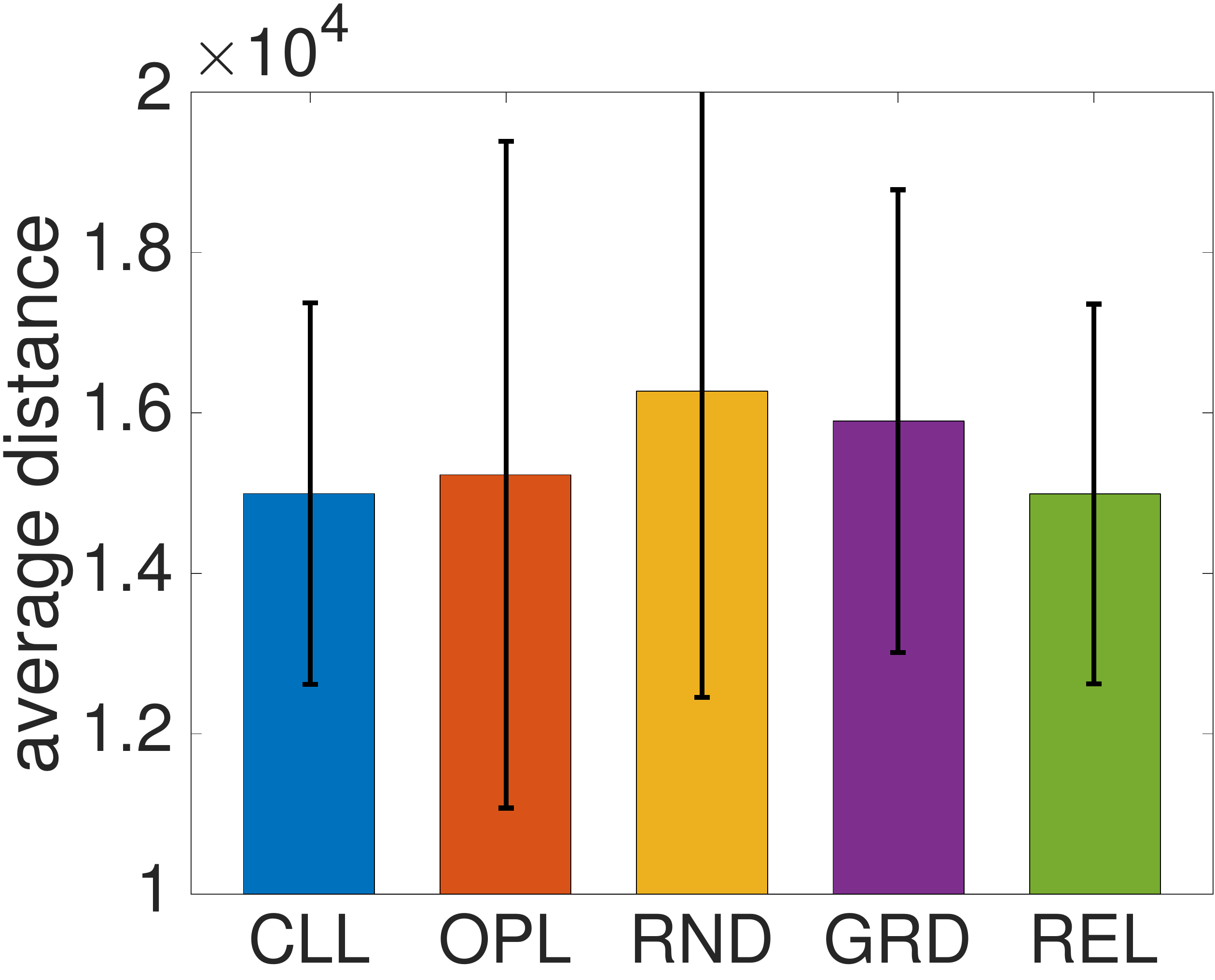}  \\
            i) $\Delta=5$ & j) $\Delta=6.66$ & k) $\Delta=8.33$ & l) $\Delta=10$\\
  \end{tabular}
  \caption{Least-squares exposure shaping results on synthetic data; top row: $n$ varies, $M=6$, $T=40$; 
  middle row: $M$ varies, $T=40$, $n=200$; bottom row: $T$ varies, $n=200$, $M=6$}
  \label{fig:synth-lsesh-results}
\end{figure*}

\section{Extended synthetic results}
For the synthetic case we can freely evaluate the properties of the proposed algorithm under several conditions. We assess the performance of the algorithm and compare to the baselines in three settings:  i) increasing size of the network; ii) increasing number of intervention points; iii) increasing the time window (or equivalently the stage duration). The results are reported while keeping other parameters fixed. To compare it to the others we simulate the network with the prescribed intervention intensity and compute the objective function. The mean and standard deviation of the objective function out of 10 runs are reported.  

Figures \ref{fig:synth-cem-results}, \ref{fig:synth-mmesh-results}, and \ref{fig:synth-lsesh-results} shows the results for CEM, MEM, and LES respectively. In each figure, the first row is for varying number of nodes, the second row is for varying number of intervention points, and the third row is for varying duration of stages.
The proposed method is consistently better than the baselines. The trends and facts reported in the main paper are observed in this extended experiment. Additionally, we want to refer the high variance of baseline methods especially RND and OPL which is what we expect.

\end{document}